
\documentclass[10pt,journal,compsoc]{IEEEtran}
%


%

\newcommand{\subparagraph}{}
\usepackage{titlesec}
\setcounter{secnumdepth}{4}
\titleformat{\paragraph}
{\normalfont\normalsize\bfseries\itshape}{\theparagraph}{1em}{}
\titlespacing*{\paragraph}
{0pt}{3.25ex plus 1ex minus .2ex}{1.5ex plus .2ex}

\usepackage{empheq}
\usepackage{arydshln}

\usepackage{enumitem}

\usepackage{blindtext}


\usepackage{comment}

%
\ifCLASSOPTIONcompsoc
  \usepackage[nocompress]{cite}
\else
  \usepackage{cite}
\fi
%

%
   \usepackage[pdftex]{graphicx}

\usepackage{tikz}
\newcommand\tikzmark[1]{\tikz[remember picture,overlay]\coordinate (#1);}

%
\usepackage{amsmath}
%

\usepackage{amssymb}
\usepackage{dsfont}
\newcommand{\mathbbm}{\mathds}

\usepackage{amsthm}

\ifCLASSOPTIONcompsoc
  \usepackage[caption=false,font=footnotesize,labelfont=sf,textfont=sf]{subfig}
\else
  \usepackage[caption=false,font=footnotesize]{subfig}
\fi

\numberwithin{equation}{section}

\begin{document}

\newtheorem{theorem}{Theorem}[section]
\newtheorem{lemma}[theorem]{Lemma}
\newtheorem{proposition}[theorem]{Proposition}
\newtheorem{corollary}[theorem]{Corollary}

\renewenvironment{proof}{\noindent{\bfseries Proof.}}{\qed}
\renewcommand{\qedsymbol}{$\blacksquare$}

%
\title{Analysis of Exact and Approximated Epidemic Models over Complex Networks}
%
%
%
%

\author{Navid~Azizan~Ruhi,~\IEEEmembership{Student~Member,~IEEE,}
        Hyoung~Jun~Ahn,~\IEEEmembership{Student~Member,~IEEE,}
        and~Babak~Hassibi,~\IEEEmembership{Member,~IEEE}
\IEEEcompsocitemizethanks{\IEEEcompsocthanksitem N. Azizan Ruhi and H. J. Ahn are with the Department
of Computing and Mathematical Sciences, California Institute of Technology, Pasadena,
CA, 91125.\protect\\
E-mail: \{azizan, ctznahj\}@caltech.edu
\IEEEcompsocthanksitem B. Hassibi is with the Department
of Electrical Engineering, California Institute of Technology, Pasadena,
CA, 91125.\protect\\
E-mail: hassibi@systems.caltech.edu}
}

%
%

\markboth{}%
{Azizan Ruhi \MakeLowercase{\textit{et al.}}: Analysis of Exact and Approximated Epidemic Models over Complex Networks}
%



\IEEEtitleabstractindextext{%
\begin{abstract}
We study the spread of discrete-time epidemics over arbitrary networks for well-known propagation models, namely SIS (susceptible-infected-susceptible), SIR (susceptible-infected-recovered), SIRS (susceptible-infected-recovered-susceptible) and SIV (susceptible-infected-vaccinated). Such epidemics are described by $2^n$- or $3^n$-state Markov chains.
Ostensibly, because analyzing such Markov chains is too complicated, their $O(n)$-dimensional nonlinear ``mean-field'' approximation, and its linearization, are often studied instead.
We provide a complete global analysis of the epidemic dynamics of the nonlinear mean-field approximation. In particular, we show that depending on the largest eigenvalue of the underlying graph adjacency matrix and the rates of infection, recovery, and vaccination, the global dynamics takes on one of two forms: either the epidemic dies out, or it converges to another unique fixed point (the so-called endemic state where a constant fraction of the nodes remain infected).
A similar result has also been shown in the continuous-time case.
We tie in these results with the ``true'' underlying Markov chain model by showing that the linear model is the tightest upper-bound on the true probabilities of infection that involves only marginals, and that, even though the nonlinear model is not an upper-bound on the true probabilities in general, it does provide an upper-bound on the probability of the chain not being absorbed.
As a consequence, we also show that when the disease-free fixed point is globally stable for the mean-field model, the Markov chain has an $O(\log n)$ mixing time, which means the epidemic dies out quickly. We compare and summarize the results on different propagation models.
\end{abstract}

\begin{IEEEkeywords}
Complex networks, spreading processes, epidemics, network-based model, exact Markov chain, mean-field approximation.
\end{IEEEkeywords}}

\maketitle

\IEEEdisplaynontitleabstractindextext

%
\IEEEpeerreviewmaketitle

\IEEEraisesectionheading{\section{Introduction}\label{sec:introduction}}
\IEEEPARstart{E}{pidemic} models have been extensively studied since a first mathematical formulation was introduced in 1927 by Kermack and McKendrick \cite{kermack1927contribution}. Though initially proposed to understand the spread of contagious diseases \cite{bailey1975mathematical}, the study of epidemics applies to many other areas, such as network security \cite{alpcan2010network,acemoglu2013network}, viral advertising \cite{phelps2004viral,richardson2002mining}, and information propagation \cite{jacquet2010information,cha2009measurement}. Questions of interest include the existence of fixed-points, stability (does the epidemic die out), transient behavior, the cost of an epidemic \cite{LizMinimizing,bose2013cost}, how best to control an epidemic \cite{drakopoulos2014efficient,nowzari2015analysis}, etc.

We analyze the spread of epidemics over arbitrary networks for most well-known propagation models in the literature, including SIS (susceptible-infected-susceptible), SIR (susceptible-infected-recovered), SIRS (susceptible-infected-recovered-susceptible), and SIV (susceptible-infected-vaccinated). In the basic SIS model, each node in the network is in one of two different states: susceptible (healthy) or infected. A healthy node has a chance of getting infected if it has infected neighbors in the network. The probability of getting infected increases as the number of infected neighbors increases. An infected node also has a chance of recovering after which it still has a chance of getting infected by its neighbors. Flu is an example of this model. SIR and SIRS models have an extra recovered state, which corresponds to the nodes that have recovered from the disease and are not susceptible to it. Mumps and Pertussis respectively are examples of SIR and SIRS epidemics. Additionally, in SIV models, there is a random vaccination (either permanent or temporary) which permits direct transition from the susceptible state to the recovered (vaccinated) one.

Considering even the SIS case in its entirety, for a network with $n$ nodes, this yields a Markov chain with $2^n$ states, sometimes called the exact or ``stochastic'' model. This is a discrete-space model, as there are two possible states of ``0'' and ``1'' for healthy and infected.
Ostensibly, because analyzing this Markov chain is too complicated, various $n$-dimensional linear and non-linear approximations have been proposed. The most common of these is the $n$-dimensional non-linear mean-field approximation, and its corresponding linearization about the disease-free fixed point, which are often referred to as ``deterministic'' models. Indeed these are  continuous-space models, that take real numbers between 0 and 1, which can be understood as the marginal probability for being infected (or the infected fraction of the $i$-th subpopulation).

We provide a complete global analysis of the dynamics of the nonlinear model. In particular, we show that depending on the largest eigenvalue of the underlying graph adjacency matrix and the rates of infection, recovery, and vaccination, the global dynamics takes on one of two forms: either the epidemic dies out (disease-free fixed point), or it converges to another unique fixed point where a constant fraction of the nodes remain infected (endemic state).
Furthermore, we tie in the approximated models with the \emph{true} underlying Markov chain model. We prove that the linear model provides an upper bound on the marginal probabilities of infection, and this is the \emph{tightest upper-bound using the marginals only}. We show that, even though the nonlinear model is not an upper-bound on the true probabilities in general, it does provide an upper-bound on the probability of the chain not being absorbed (some nodes being infected). As a consequence of these results, we show that when the $O(n)$-dimensional approximate models are stable to the disease-free fixed point, the Markov chain has a mixing time of $O(\log n)$, which means the epidemic dies out fast in the true model as well.

The study of continuous-time and discrete-time epidemic models are two parallel bodies of work, and interesting results have been shown in both cases by different groups of researchers, e.g. \cite{Ganeshnetworktopology,van2014upper,fall2007epidemiological,shuai2013global,li2014analysis,nowzari2014stability,draief2006thresholds} in the continuous-time and \cite{arenas2010discrete,WangEpidemic,chakrabarti2008epidemic,ahn2013global,prakash2012threshold,ahn2014mixing} in the discrete-time case. Depending on the application in hand it may make more sense to use one class or the other. This paper focuses on discrete-time models, and we provide a unified analysis of exact and approximated models and the connections between them. We spell out our contributions with respect to what is known in both the discrete-time and the continuous-time literature, below.\\
The following results were not known in either of the discrete- or continuous-time literature:
\begin{enumerate}
\item We show that the linear model is the tightest upper-bound with the marginals only on the exact probabilities of infection.
\item We show that, even though the nonlinear model is not an upper-bound on the exact probabilities in general, it does provide an upper-bound on the probability of the chain not being absorbed.
\item Although the logarithmic time-to-extinction of the epidemic under the threshold was known for the SIS model in the continuous-time case (Ganesh et al. \cite{Ganeshnetworktopology}), this result had not been shown for other well-known propagation models (e.g. SIRS, SIV, etc.) in either discrete-time or continuous-time.
\end{enumerate}
In addition to the above, we complement the discrete-time literature by showing the following results that were recently shown in the continuous-time case \cite{shuai2013global,khanafer2014stability,fall2007epidemiological} but not in the discrete-time one. 
\begin{enumerate}
\item In discrete-time mean-field approximated models, the stability of the disease-free fixed point under the threshold had been shown for SIS and many more complicated propagation models. However, the existence and stability of a unique endemic equilibrium above the threshold had not been shown for any discrete-time model, before this work.
\item Contrary to the continuous-time literature, the stability results shown for discrete-time approximated models are typically “local.” But we show “global stability” results, which are counterparts of the continuous-time case.
\end{enumerate}

Sections 2, 3, and 4, are devoted to SIS, SIRS, and SIV epidemic models, respectively. Starting from SIS epidemics, we describe the exact Markov chain model, the nonlinear epidemic map, and the linear model. In the analysis of the nonlinear model, we first describe the case where the epidemic dies out. Then we analyze the second case where the all-healthy fixed point is not stable, and show the existence and uniqueness of a second fixed point, and its global stability. Returning back to the exact Markov chain model, we establish the connection between that and the approximated models. We define a partial order which makes the transition matrix of the MC an order-preserving map, and helps us to establish the relation. We further generalize the model by allowing each node to have its own recovery and infection rates. We discuss variations of the models depending on the effect of simultaneous recovery and infection, as well as the efficacy of the vaccination.
Simulation results for all the models are provided in Section 5, which support the results proved throughout the paper.
We finally summarize the results, compare them, and conclude in Section 6.
To avoid confusion and facilitate reading, we use boxes for the main equations describing the models in each section. The proofs are postponed to the appendix. The current paper combines and expands the results that first appeared in \cite{ahn2013global},\cite{ahn2014mixing},\cite{ruhi2015sirs}.
\section{SIS Epidemics}
\begin{figure}[hb]
  \centering
    \includegraphics[width=0.4\columnwidth]{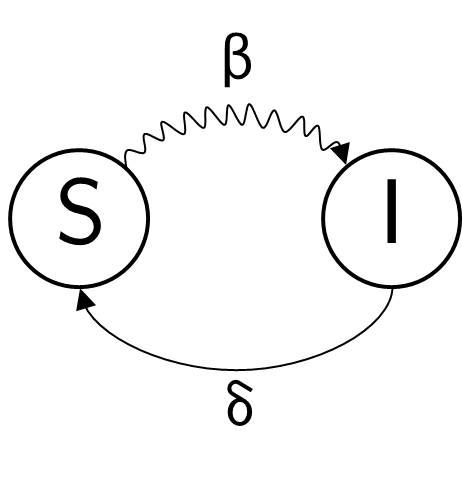}
      \caption{State diagram of a single node in the SIS model, and the transition rates. Wavy arrow represents exogenous (network-based) transition. $S$ is healthy or susceptible, $I$ is infected.}
      \label{fig:SIS_model}
\end{figure}
\subsection{Model Description}
\subsubsection{Exact Markov Chain Model}
For a given connected network $G$ with $n$ nodes, let $N_i$ be the neighborhood of node $i$. Let $A$ be the adjacency matrix of $G$. Each node can be in a state of health, represented by ``0'', or a state of infection, represented by ``1''. Consequently, $\xi(t)=(\xi_1(t), \cdots, \xi_n(t)) \in \{0,1 \}^n$ is a binary n-tuple where each of its entries represents the state of each node at time $t$. i.e. $i$ is infected if $\xi_i(t) =1$ and it is healthy if $\xi_i(t)=0$.

We assume that probability of infection of each node given the current state $\xi(t)$ is independent. In other words, for any two state vectors $X,Y \in \{0,1\}^n$,
\begin{equation}
\mathbb{P}(\xi(t+1)=Y|\xi(t)=X) = \prod_{i=1}^n \mathbb{P}(\xi_i(t+1)=Y_i|\xi(t)=X) \label{MC0}
\end{equation}

A healthy node remains healthy if all its neighbors are healthy. A healthy node can receive infection from any of its infected neighbors independently with probability $\beta$. An infected node becomes healthy if it is recovered from the disease with probability $\delta$ while not getting infected by any of its neighbors. To summarize this,
\begin{empheq}[box=\mbox]{align}
\tikzmark{A}
&\mathbb{P}(\xi_i(t+1)=Y_i|\xi(t)=X) \nonumber \\
&= \left\{
\begin{array}{rl}
(1-\beta)^{m_i} & \text{if } (X_i,Y_i)=(0,0),  |N_i \cap \mathbb{S}(X)| = m_i,\\
1- (1-\beta)^{m_i} & \text{if } (X_i,Y_i)=(0,1),  |N_i \cap \mathbb{S}(X)| = m_i,\\
\delta(1-\beta)^{m_i} & \text{if } (X_i,Y_i)=(1,0),  |N_i \cap \mathbb{S}(X)| = m_i,\\
1-\delta(1-\beta)^{m_i} & \text{if } (X_i,Y_i)=(1,1),  |N_i \cap \mathbb{S}(X)| = m_i.
\end{array} \right.\tikzmark{B} \label{Eq:noimmune}
\end{empheq}
\begin{tikzpicture}[remember picture,overlay]
\draw ([shift={(-.2em,2.5ex)}]A) rectangle ([shift={(-0.5em,-6.0ex)}]B);
\end{tikzpicture}
where $\mathbb{S}(X)$ is the support of $X \in \{0,1\}^n$, i.e. $\mathbb{S}(X) = \{ i : X_i = 1 \}$.

Let $S$ be the transition matrix of this Markov Chain, $S_{X,Y}=\mathbb{P}(\xi(t+1)=Y|\xi(t)=X)$. We assume that the Markov chain is time-homogeneous and write $S_{X,Y}=\mathbb{P}(Y|X)$ for simplicity.

The Markov chain has a unique stationary distribution, which is the state where all the nodes in the
network are healthy with probability $1$. If all the nodes are healthy, no node will be exposed to disease, and therefore they will always stay healthy. Therefore the probability distribution on the states $\{0,1\}^n$, goes to the all-healthy state as time progresses. In other words, the disease will die out if we wait long enough. However, this result is not practical since it may take a very long time especially if the mixing time of the Markov chain is exponentially large. It is difficult to analyze the dynamics of the Markov chain as the number of nodes increases.

Comparing the discrete-time Markov chain model to the continuous-time Markov chain model described in \cite{Ganeshnetworktopology}, continuous-time Markov chain model allows only one flip of each node's epidemic state at each moment. However, the discrete-time model allows change of epidemic states for more than one node at the same time. The reason being that change of epidemic state for two or more nodes can occur at same time interval, even though they do not happen at the same moment. The transition matrix of the embedded Markov chain of continuous-time model has nonzero entries only where the Hamming distance of row coordinate and column coordinate is 1. In other words, the number of different digits for $X,Y \in \{0,1\}^n$ should be 1 in order for the entry of the $X$-th row and the $Y$-th column to be nonzero. However, the transition matrix of the discrete-time Markov chain model can have nonzero entries everywhere (except the row of the absorbing state). 

Denote $I(t)$ as the set of infected nodes at time $t$. Define $p_i(t)$ as the marginal probability that node $i$ is infected at time $t$, i.e. $p_i(t)=\mathbb{P}(\xi_i(t)=1)$.
\begin{align}
p_i(&t+1)=\mathbb{P}(\xi_i(t+1)=1 | \xi_i(t)=1) \mathbb{P}(\xi_i(t)=1)\notag\\ 
&\hspace{35pt}+\mathbb{P}(\xi_i(t+1)=1 | \xi_i(t)=0) \mathbb{P}(\xi_i(t)=0)
\end{align}
By marginalizing out the state of the other nodes, we can write this as
\begin{align}
p_i(&t+1)=\mathbb{E}_{\xi_{-i}(t)|\xi_i(t)=1}\bigg[1-\delta \prod_{j\in N_i}(1-\beta\mathbbm{1}_{\xi_j(t)=1})\bigg]p_i(t)\notag\\
&+\mathbb{E}_{\xi_{-i}(t)|\xi_i(t)=0}\bigg[1-\prod_{j\in N_i}(1-\beta\mathbbm{1}_{\xi_j(t)=1})\bigg](1-p_i(t)) ,
\end{align}
where the conditional expectations are on the joint probability of all nodes other than $i$ (denoted by $\xi_{-i}$).
\subsubsection{Approximated Nonlinear Model}
One may approximate $\prod_{j \in N_i} (1-\beta \mathbbm{1}_{\xi_j(t)=1})$ by averaging it as $\mathbb{E}[ 1-\beta \mathbbm{1}_{\xi_j(t)=1}] = 1 - \beta p_j(t)$ and using the assumption that the events are independent.
\begin{align}
P_i(t+1) & = \left(1-\delta \prod_{j \in N_i} \left( 1-\beta P_j(t) \right) \right)P_i(t) \nonumber \\
& \quad + \left(1- \prod_{j \in N_i} \left( 1-\beta P_j(t) \right) \right) (1-P_i(t)) \label{Eq:approximatedP}
\end{align} 
In fact this is the so-called mean-field approximation. We use capital $P$ for the approximated probabilities, to distinguish them from the exact probabilities of the Markov chain, $p$.

The approximated model is studied on $[0,1]^n$, the $n$-dimensional probability space which is computationally less demanding than the $2^n$-dimensional discrete space. One such model was studied by Chakrabarti and Wang \cite{chakrabarti2008epidemic}, \cite{WangEpidemic}. Ahn \cite{ahn2013global} viewed the $n$-dimensional probability distribution at time $t+1$ as the image of the probability distribution at time $t$ mapped by $\Phi:[0,1]^n \to [0,1]^n$. The $i$-th component of the epidemic map $\Phi$ is defined as follows:
\begin{empheq}[box=\fbox]{equation}
\Phi_i(x) = (1-\delta)x_i + ( 1- (1-\delta)x_i)\left( 1 - \prod_{j \in N_i} (1-\beta x_j) \right) \label{Eq:wtphieq}
\end{empheq}
It is trivial to check that $P_i(t+1) = \Phi_i((P_1(t), \cdots, P_n(t))^T)$ from \eqref{Eq:approximatedP}.

\subsubsection{Linear Model}
The linearization of the above nonlinear mapping around the origin is what is referred to as the linear model:
\begin{equation}
\tilde{P}_i(t+1) = (1-\delta)\tilde{P}_i(t) + \beta \left( \sum_{j \in N_i} \tilde{P}_j(t) \right) \label{Eq:linear}
\end{equation}
Putting together equations of this form for all $i$, one can see this as
\begin{empheq}[box=\fbox]{equation}
\tilde{P}(t+1) = ((1-\delta)I_n + \beta A) \tilde{P}(t)
\end{empheq}

Note that $(1-\delta)I_n + \beta A$ is in fact the Jacobian of $\Phi$ at the origin.

\subsection{Analysis of the Nonlinear Model}

\subsubsection{Epidemic Extinction: $\frac{\beta\lambda_{\max}(A)}{\delta}<1$}
We study the epidemic map $P_i(t+1) = \Phi_i((P_1(t), \cdots, P_n(t))^T)$ where $\Phi : [0,1]^n \to [0,1]^n$ is defined as \eqref{Eq:wtphieq} on $n$-dimensional probability space. To understand the behavior of this model, we can upper bound it as the following.
\begin{align}
\Phi_i(x) &= (1-\delta)x_i + ( 1- (1-\delta)x_i)\left( 1 - \prod_{j \in N_i} (1-\beta x_j) \right) \\
&\leq (1-\delta)x_i + \left( 1 - \prod_{j \in N_i} (1-\beta x_j) \right) \\
&\leq (1-\delta)x_i + \beta \left( \sum_{j \in N_i} x_j \right)
\end{align}
The latter equation is the linear map \eqref{Eq:linear}. In fact the linearization gives an upper bound on the nonlinear model.

For two real-valued column vectors $u,v\in \mathbb{R}^n$, we say $u \preceq v$, if $u_i \leq v_i$ for all $i \in \{1,\cdots,n\}$, and $u \prec v$, if $u_i < v_i$ for all $i \in \{1,\cdots,n\}$.
For $P(t)=(P_1(t), \cdots, P_n(t))^\mathrm{T}$
\begin{equation}
P(t+1) = \Phi(P(t)) \preceq ((1-\delta)I_n + \beta A ) P(t)
\end{equation}

Clearly $P(t)$ converges to the origin for both \eqref{Eq:wtphieq} and \eqref{Eq:linear} if $\lambda_{max}((1-\delta)I_n + \beta A ) <1 $. In other words, when $\frac{\beta\lambda_{\max}(A)}{\delta}$ is less than $1$, the origin is a unique fixed point of \eqref{Eq:wtphieq} which is globally stable. The reason is that this happens for the linear upper bound which is the Jacobian matrix of \eqref{Eq:wtphieq} at the origin. We will therefore focus on the dynamics of the system when  
$\lambda_{max}((1-\delta)I_n + \beta A ) > 1 $.
\subsubsection{Epidemic Spread: $\frac{\beta\lambda_{\max}(A)}{\delta}>1$}

\paragraph{Existence and Uniqueness of Nontrivial Fixed Point}
The origin, the trivial fixed point of the system equation is unstable when $\lambda_{max}((1-\delta)I_n + \beta A ) > 1 $.
Moreover, it is not clear in general whether there exists any other fixed point, or how many fixed points exist if so.
In this section, we prove that there actually exists a nontrivial fixed point of \eqref{Eq:seq3}.
We also prove that the nontrivial fixed point is unique.

Wang et al. \cite{WangEpidemic} and Chakrabarti et al. \cite{chakrabarti2008epidemic} focus on staying healthy by defining the probability that a node receives no infection from its neighborhood. We focus on {\em infection} rather than staying healthy.

Let $\Xi : [0,1]^n \to [0,1]^n$ with $\Xi=(\Xi_1 , \cdots, \Xi_n )^\mathrm{T}$ be a map associated with network $G$ satisfying the three properties below.

(a) $\Xi_i(x)=0$ and $\displaystyle \frac{ \partial \Xi_i }{ \partial x_j} = \beta A_{i,j}$ at the origin. 

(b) $\displaystyle \frac{ \partial \Xi_i }{ \partial x_j} > 0$ if $i \in N_j$ in $G$, and $\displaystyle \frac{ \partial \Xi_i}{ \partial x_j} = 0$ if $i \notin N_j$ in $G$. 

(c) For any $i,j,k \in \{ 1, \cdots, n \}$,  $\displaystyle \frac{ \partial^2 \Xi_i }{ \partial x_j \partial x_k} \leq 0$.

Obviously $\Xi_i(x)=\left( 1 - \prod_{j \in N_i} (1-\beta x_j) \right)$ satisfies all the conditions above. We define another map here. Let $\omega : [0,1] \to \mathbb{R}_+$ be a function which also satisfies three properties below.

(d) $\omega(0)=0$, $\omega(1) \geq 1$ 

(e) $\omega'(0)=\delta$, $\omega'(s) > 0$ for all $s \in (0,1)$ 

(f) $\displaystyle \frac{\omega(s_1)}{s_1} < \frac{\omega(s_2)}{s_2}$ if $s_1 < s_2$ 

It is also clear that $\displaystyle \omega(s) = \frac{\delta s}{1-(1-\delta)s}$ satisfies all three conditions above. By defining $\Xi(\cdot)$ and $\omega(\cdot)$ here, the analysis can also be applied directly to the immune-admitting model which will be described later.

We can view \eqref{Eq:wtphieq} as
\begin{equation}
P_i(t+1) = P_i(t) + (1-(1-\delta)P_i(t)) ( \Xi_i(P(t)) - \omega(P_i(t))) \label{Eq:seq3}
\end{equation}

\begin{lemma}\label{Lem:ccv}
Let $h_{i,u,v} : s \to \Xi_i(u+sv)$ be a function defined on subset of nonnegative real numbers 
for given $i \in \{1, \cdots, n\}$, $u,v \in [0,1]^n$. 
Then $\displaystyle \frac{h_{i,u,v}(s) - h_{i,u,v}(0)}{s}$ is a decreasing function of $s$.
\end{lemma}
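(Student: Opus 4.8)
The plan is to show that, for fixed $i,u,v$, the single-variable function $g(s) := h_{i,u,v}(s) = \Xi_i(u+sv)$ is \emph{concave} on its domain, and then to invoke the elementary fact that for a concave function the slope of the chord anchored at a fixed endpoint is monotonically decreasing. The whole lemma is really just a one-dimensional restatement of the second-derivative condition (c), read along the ray $s \mapsto u + sv$; the quotient $\frac{h_{i,u,v}(s)-h_{i,u,v}(0)}{s}$ is precisely the slope of the chord of $g$ from $0$ to $s$, so decreasing monotonicity of this quotient is equivalent to concavity of $g$.

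First I would pin down the domain. Since $v \in [0,1]^n$ has $v \succeq 0$ and $s \geq 0$, the point $u+sv$ moves monotonically in each coordinate as $s$ grows, so the set of admissible $s$ (those with $u+sv \in [0,1]^n$) is an interval $[0,s_{\max}]$ with $s_{\max} = \min_{j:\,v_j>0}\frac{1-u_j}{v_j}$. This matters because concavity is only meaningful on an interval. Next I would establish concavity by differentiating twice via the chain rule:
\begin{equation}
g''(s) = \sum_{j,k} \frac{\partial^2 \Xi_i}{\partial x_j \partial x_k}(u+sv)\, v_j v_k .
\end{equation}
By property (c) each mixed second partial is $\leq 0$, and because $v \succeq 0$ every product $v_j v_k \geq 0$; hence each summand is nonpositive and $g''(s) \leq 0$, so $g$ is concave on $[0,s_{\max}]$.

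Finally I would run the standard chord argument. Take any $0 < s_1 < s_2$ in the domain and set $\theta = s_1/s_2 \in (0,1)$, so that $s_1 = \theta s_2 + (1-\theta)\cdot 0$. Concavity gives
\begin{equation}
g(s_1) \geq \theta\, g(s_2) + (1-\theta)\, g(0),
\end{equation}
which rearranges to $g(s_1) - g(0) \geq \theta\,(g(s_2) - g(0))$. Dividing by $s_1 = \theta s_2 > 0$ yields $\frac{g(s_1)-g(0)}{s_1} \geq \frac{g(s_2)-g(0)}{s_2}$, i.e. exactly that $\frac{h_{i,u,v}(s)-h_{i,u,v}(0)}{s}$ is decreasing in $s$.

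There is no serious obstacle here; the only point that requires care, and the step I would flag, is that the Hessian argument for concavity depends crucially on $v \succeq 0$ (so that the cross terms $v_j v_k$ do not flip the sign guaranteed by (c))---which is exactly why the hypothesis $v \in [0,1]^n$ is used---together with the observation that the admissible $s$ form an interval so that the three-chord inequality applies. Once concavity is in hand, the monotonicity of the anchored chord slope is textbook.
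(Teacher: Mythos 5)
Your proof is correct and follows essentially the same route as the paper's: both reduce the claim to concavity of $h_{i,u,v}$ along the ray (via property (c) and $v \succeq 0$) and then invoke the monotonicity of the anchored chord slope. The only cosmetic difference is that you derive that last monotonicity from the three-point concavity inequality, whereas the paper differentiates the quotient and applies the mean value theorem; you also spell out the Hessian computation and the domain interval, which the paper leaves implicit.
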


\begin{lemma}\label{lm:v}
$\lambda_{max}((1-\delta)I_n + \beta A ) > 1 $ if and only if there exists $v \succ (0,\cdots, 0)^\mathrm{T} = 0_n$ such that $(\beta A - \delta I_n )v \succ 0_n$
\end{lemma}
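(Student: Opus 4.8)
The plan is to reduce the eigenvalue inequality to a scalar statement about $\lambda_{\max}(A)$ and then invoke the Perron--Frobenius theorem, exploiting the fact that $G$ is connected so that $A$ is an irreducible nonnegative matrix. First I would observe that since $(1-\delta)I_n + \beta A$ is an affine function of $A$ with $\beta > 0$, its eigenvalues are exactly $(1-\delta) + \beta \mu$ as $\mu$ ranges over the eigenvalues of $A$; hence $\lambda_{\max}((1-\delta)I_n + \beta A) = (1-\delta) + \beta \lambda_{\max}(A)$, and the hypothesis $\lambda_{\max}((1-\delta)I_n + \beta A) > 1$ is equivalent to the scalar inequality $\beta \lambda_{\max}(A) - \delta > 0$. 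Note that $\beta A - \delta I_n$ has largest eigenvalue $\beta \lambda_{\max}(A) - \delta$, so it suffices to connect this gap to the existence of the witness vector $v$.

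For the forward direction, I would let $w \succ 0_n$ be the Perron eigenvector of $A$ guaranteed by irreducibility, so that $A w = \lambda_{\max}(A)\, w$. Then $(\beta A - \delta I_n) w = (\beta \lambda_{\max}(A) - \delta)\, w$, and since $\beta \lambda_{\max}(A) - \delta > 0$ and $w \succ 0_n$, the right-hand side is $\succ 0_n$. Taking $v = w$ completes this direction immediately.

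For the converse, suppose $v \succ 0_n$ satisfies $(\beta A - \delta I_n) v \succ 0_n$. The key step is to pair this vector against the Perron \emph{left} eigenvector $w \succ 0_n$ with $w^\mathrm{T} A = \lambda_{\max}(A)\, w^\mathrm{T}$ (which coincides with the right Perron eigenvector when $A$ is symmetric). Then
\begin{equation}
w^\mathrm{T}(\beta A - \delta I_n) v = (\beta \lambda_{\max}(A) - \delta)\, w^\mathrm{T} v .
\end{equation}
Because $w \succ 0_n$ and $(\beta A - \delta I_n) v \succ 0_n$, the left side is strictly positive; because $w \succ 0_n$ and $v \succ 0_n$, the factor $w^\mathrm{T} v$ is strictly positive. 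Dividing, I conclude $\beta \lambda_{\max}(A) - \delta > 0$, which is the desired inequality.

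The main obstacle is the converse direction: from a single witness vector $v$, which need not be an eigenvector, one must extract a statement about the spectral radius. The pairing-with-the-Perron-eigenvector trick is exactly what turns the coordinatewise positivity $(\beta A - \delta I_n) v \succ 0_n$ into a strict scalar inequality, and the crux is recognizing that the positivity of both $w^\mathrm{T} v$ and $w^\mathrm{T}(\beta A - \delta I_n) v$ forces the eigenvalue gap to be positive. Everything else is bookkeeping on how affine transformations act on the spectrum of a symmetric nonnegative matrix.
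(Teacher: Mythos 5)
Your proof is correct. The forward direction is essentially the paper's: both take the Perron eigenvector guaranteed by irreducibility of the connected graph's adjacency matrix and observe that it witnesses the claim (the paper takes the Perron eigenvector of $(1-\delta)I_n+\beta A$ rather than of $A$, but these coincide since one matrix is an increasing affine function of the other). The converse is where you genuinely diverge. The paper argues directly on the matrix $(1-\delta)I_n+\beta A$: from $(\beta A-\delta I_n)v\succ 0_n$ it gets $((1-\delta)I_n+\beta A)v\succ v$, hence $\|((1-\delta)I_n+\beta A)v\|_2>\|v\|_2$, and bounds $\lambda_{\max}$ below by this Rayleigh-type quotient. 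That route implicitly uses that $\sup_u\|Mu\|_2/\|u\|_2$ equals the \emph{largest eigenvalue} rather than merely the largest singular value or largest eigenvalue in modulus --- which holds here because $M$ is symmetric, nonnegative, and irreducible, so Perron--Frobenius identifies the spectral radius with $\lambda_{\max}$, but the paper leaves this unstated. Your pairing with the left Perron eigenvector, $w^{\mathrm{T}}(\beta A-\delta I_n)v=(\beta\lambda_{\max}(A)-\delta)\,w^{\mathrm{T}}v$, sidesteps that subtlety entirely: strict positivity of both sides' factors forces $\beta\lambda_{\max}(A)>\delta$ with no appeal to norms. It is a slightly cleaner argument and also generalizes more readily to nonsymmetric irreducible nonnegative matrices (relevant to the paper's later generalized contact model), at the cost of invoking the left eigenvector, which for symmetric $A$ is no extra cost at all.
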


The main theorem of this section which guarantees the existence and uniqueness of nontrivial fixed point of \eqref{Eq:seq3} follows.

\begin{theorem}\label{Thm:existence}
Define a map $\Psi : [0,1]^n \to \mathbb{R}^n$ with $\Xi$ and $\omega$ satisfying the conditions (a)-(f) above, as
\begin{equation}
\Psi_i(x) = \Xi_i(x) - \omega(x_i)~. \label{Eq:psieq}
\end{equation}
Then $\Psi=(\Psi_1, \cdots, \Psi_n)$ has a unique nontrivial (other than the origin) zero if $\frac{\beta\lambda_{\max}(A)}{\delta} > 1 $.
\end{theorem}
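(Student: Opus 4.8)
The plan is to recast the problem as finding strictly positive fixed points of a monotone self-map and to treat existence and uniqueness separately. Since $\omega$ is continuous and strictly increasing on $[0,1]$ with $\omega(0)=0$ and $\omega(1)\ge 1$ (by (d)--(e)), it is invertible, and $x$ is a zero of $\Psi$ exactly when $\Xi_i(x)=\omega(x_i)$ for all $i$, i.e. when $x$ is a fixed point of $T$ defined by $T_i(x)=\omega^{-1}(\Xi_i(x))$. By (b), $\Xi_i$ is nondecreasing in each argument, so $T$ is order-preserving for $\preceq$; it is continuous, and using $\Xi_i\le 1\le\omega(1)$ it maps $[0,1]^n$ into itself. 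I would first record that any nontrivial zero is strictly positive: if $x\succeq 0_n$, $x\neq 0_n$ and $\Xi_i(x)=\omega(x_i)$, then $x_i=0$ forces $\Xi_i(x)=0$, which by (b) forces every neighbor of $i$ to vanish; connectedness of $G$ propagates this to all nodes, contradicting $x\neq 0_n$, so $x\succ 0_n$.

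For existence I would build a strictly positive subsolution from the eigenvector of Lemma~\ref{lm:v}. Since $\frac{\beta\lambda_{\max}(A)}{\delta}>1$ is equivalent to $\lambda_{max}((1-\delta)I_n+\beta A)>1$, Lemma~\ref{lm:v} yields $v\succ 0_n$ with $(\beta A-\delta I_n)v\succ 0_n$, and after scaling I may assume $v\in[0,1]^n$. Along the ray $s\mapsto sv$, conditions (a) and (e) give $\frac{\Xi_i(sv)}{s}\to\beta(Av)_i$ and $\frac{\omega(sv_i)}{s}\to\delta v_i$ as $s\to 0^+$; because $(\beta A-\delta I_n)v\succ 0_n$ means $\beta(Av)_i>\delta v_i$ for every $i$, there is $s_0>0$ with $\Xi_i(s_0 v)>\omega(s_0 v_i)$ for all $i$, i.e. $T(x_0)\succ x_0$ for $x_0:=s_0 v$. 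Monotonicity of $T$ then makes $x_0\preceq T(x_0)\preceq T^2(x_0)\preceq\cdots$ an increasing sequence in the compact box $[0,1]^n$, hence convergent to some $x^\ast\succeq x_0\succ 0_n$, and continuity gives $T(x^\ast)=x^\ast$. Thus $x^\ast$ is a nontrivial zero of $\Psi$.

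For uniqueness I would compare two strictly positive zeros $x^\ast,y^\ast$ through their largest coordinatewise ratio. Concavity of $\Xi_i$ together with $\Xi_i(0_n)=0$ (Lemma~\ref{Lem:ccv} applied along the segment from $0_n$ to $y^\ast$) shows $\Xi_i(\theta y^\ast)\le\theta\,\Xi_i(y^\ast)$ for $\theta\ge 1$, while (f) makes $\omega$ strictly superhomogeneous, $\omega(\theta s)>\theta\,\omega(s)$ for $\theta>1$ and $s>0$. Setting $\theta=\max_i x_i^\ast/y_i^\ast$ and supposing $\theta>1$ with the maximum attained at index $k$, I have $x^\ast\preceq\theta y^\ast$ and $x_k^\ast=\theta y_k^\ast$, so using that both are zeros, monotonicity, and subhomogeneity, $\omega(\theta y_k^\ast)=\omega(x_k^\ast)=\Xi_k(x^\ast)\le\Xi_k(\theta y^\ast)\le\theta\,\Xi_k(y^\ast)=\theta\,\omega(y_k^\ast)$, contradicting $\omega(\theta y_k^\ast)>\theta\,\omega(y_k^\ast)$. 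Hence $\theta\le 1$, i.e. $x^\ast\preceq y^\ast$; exchanging roles gives $y^\ast\preceq x^\ast$, so $x^\ast=y^\ast$.

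The delicate point is the source of the strict inequality in the uniqueness step. Condition (c) gives only weak concavity of $\Xi$ (at a degree-one node $\Xi_i$ is linear), so the subhomogeneity bound for $\Xi$ may hold with equality and cannot by itself close the argument; the proof is therefore arranged so that the strict inequality comes entirely from condition (f) on $\omega$, which is precisely why that ratio-monotonicity hypothesis is imposed. The remaining items to verify are that $T$ genuinely maps $[0,1]^n$ into itself, so the iteration stays in the domain, and the strict positivity used to invoke (f) with $s=y_k^\ast>0$; both follow from the bookkeeping above.
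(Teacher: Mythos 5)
Your proposal is correct in substance, and its existence argument is genuinely different from the paper's. The paper works with the superlevel set $U=\{x:\Psi(x)\succeq 0\}$, shows $U$ is closed under coordinatewise maxima, takes the unique maximal element $x^*$ of $U$, and argues that maximality forces $\Psi(x^*)=0$; you instead invert $\omega$ to obtain a continuous, order-preserving self-map $T=\omega^{-1}\circ\Xi$ of $[0,1]^n$, produce a strictly positive subsolution $s_0v$ from the vector of Lemma~\ref{lm:v} (the same seed the paper uses to show $\epsilon v\in U$), and run the increasing iteration $T^k(s_0v)$ up to a fixed point. Your route buys a constructive, Knaster--Tarski-style existence proof and a uniqueness argument that is symmetric in the two zeros (you conclude $x^*\preceq y^*$ and $y^*\preceq x^*$ separately), whereas the paper's uniqueness step quietly relies on $y^*\preceq x^*$ coming from the maximality of $x^*$ in $U$. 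The uniqueness mechanism itself --- concavity of $\Xi$ via Lemma~\ref{Lem:ccv} supplying subhomogeneity, and property (f) supplying the strict superhomogeneity of $\omega$ that closes the contradiction --- is the same in both proofs, and you correctly identify (f) as the sole source of strictness.

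One step needs repair. With $\theta=\max_i x_i^*/y_i^*>1$ you write $\Xi_k(x^*)\le\Xi_k(\theta y^*)\le\theta\,\Xi_k(y^*)$, but $\theta y^*$ can leave the domain $[0,1]^n$ on which $\Xi$ and Lemma~\ref{Lem:ccv} are defined (e.g.\ if some $y^*_j$ is close to $1$), so the middle quantity is not defined for an abstract $\Xi$ satisfying only (a)--(c). The fix is to scale down rather than up, which is exactly the paper's parametrization: from $\tfrac{1}{\theta}x^*\preceq y^*$ and $\tfrac{1}{\theta}x^*\in[0,1]^n$ you get
\begin{equation*}
\Xi_k(y^*)\;\ge\;\Xi_k\bigl(\tfrac{1}{\theta}x^*\bigr)\;\ge\;\tfrac{1}{\theta}\,\Xi_k(x^*)
\end{equation*}
by monotonicity and Lemma~\ref{Lem:ccv}, which yields $\Xi_k(x^*)\le\theta\,\Xi_k(y^*)$ directly; the rest of your chain, including $\omega(\theta y_k^*)=\omega(x_k^*)$ and the strict inequality from (f) (both legitimate since $\theta y_k^*=x_k^*\le 1>0$), is unaffected. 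With that one-line change the proof is complete.
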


We should emphasize that this unique nontrivial zero (denoted by $x^*$ in the proof) is also the unique nontrivial fixed point of \eqref{Eq:seq3} as desired.

As a further remark, consider a network whose edge $\{i,j\}$ has weight $w_{ij}=w_{ji} \in [0,1]$. 
The weight of each edge could represent the degree of intimacy. 
The weight matrix can replace the adjacency matrix to define $\Xi_i(x)=\left( 1 - \prod_{j \in N_i} (1-\beta w_{ij} x_j) \right)$.
Then $\Xi$ defined by the weight matrix rather than the adjacency matrix also satisfies all three conditions (a)-(c) if $A_{ij}$ is replaced by $w_{ij}$ from (a). The system of equations will still have the same properties even if we admit different weights.

\paragraph{Global Stability of Nontrivial Fixed Point}
The origin, the trivial fixed point of the system is globally stable if $\lambda_{max}((1-\delta)I_n + \beta A ) < 1 $. The next issue is whether the nontrivial fixed point is also stable if $\lambda_{max}((1-\delta)I_n + \beta A ) > 1 $. It turns out that this is true, if we are not initially at the origin.

\begin{theorem}\label{Thm:GSofNFP}
Suppose $\lambda_{max}((1-\delta)I_n + \beta A ) > 1 $. 
As $t$ increases $P(t+1)=\Phi(P(t))$ defined by \eqref{Eq:wtphieq} converges to the unique nontrivial fixed point $x^*$, if $P(0)$ is not the origin.
\end{theorem}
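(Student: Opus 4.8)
The plan is to treat $\Phi$ as a monotone, concave discrete-time dynamical system and to trap an arbitrary trajectory between two reference trajectories that both converge to $x^*$. Two structural facts drive everything. First, \emph{monotonicity}: writing $\Phi_i(x) = (1-\delta)x_i + (1-(1-\delta)x_i)\Xi_i(x)$, one checks $\partial\Phi_i/\partial x_j = (1-(1-\delta)x_i)\,\partial\Xi_i/\partial x_j \ge 0$ for $j\in N_i$ by property~(b), while $\partial\Phi_i/\partial x_i = (1-\delta)\prod_{j\in N_i}(1-\beta x_j)\ge 0$; hence $\Phi$ is order-preserving on $[0,1]^n$, i.e. $x\preceq y \Rightarrow \Phi(x)\preceq\Phi(y)$. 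Second, by Theorem~\ref{Thm:existence} the \emph{only} fixed points of $\Phi$ in $[0,1]^n$ are the origin and $x^*$. I would then use the fact that any bounded \emph{monotone} orbit converges coordinatewise, with limit a fixed point by continuity of $\Phi$, and identify that limit by deciding which of the two equilibria it can possibly be.

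Next I would build an increasing orbit that rises to $x^*$. Since $\lambda_{\max}((1-\delta)I_n+\beta A)>1$, Lemma~\ref{lm:v} supplies $v\succ 0_n$ with $(\beta A-\delta I_n)v\succ 0_n$. Expanding \eqref{Eq:seq3} along the ray $x=\varepsilon v$ gives $\Xi_i(\varepsilon v)-\omega(\varepsilon v_i) = \varepsilon\big((\beta A-\delta I_n)v\big)_i - O(\varepsilon^2)$, so for all sufficiently small $\varepsilon>0$ the strictly positive first-order term dominates and $\Phi(\varepsilon v)\succ \varepsilon v$. Monotonicity then forces the orbit $x_0=\varepsilon v$, $x_{k+1}=\Phi(x_k)$ to be nondecreasing; being bounded above by the all-ones vector it converges, and since its limit is $\succeq\varepsilon v\succ 0_n$ it cannot be the origin, hence equals $x^*$.

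Symmetrically, starting from the all-ones vector $\mathbf{1}=(1,\dots,1)^{\mathrm{T}}$ we have $\Phi(\mathbf{1})\preceq\mathbf{1}$ because $\Phi$ maps into $[0,1]^n$, so the orbit $y_0=\mathbf{1}$, $y_{k+1}=\Phi(y_k)$ is nonincreasing; as $\mathbf{1}\succeq x^*$, monotonicity gives $y_k\succeq x^*$ for all $k$, so its limit is a fixed point $\succeq x^*\succ 0_n$, i.e. again $x^*$. Finally, for an arbitrary $P(0)\neq 0_n$ I would first run the dynamics for $k_0$ steps to reach a strictly positive state: since $\Phi_i(x)>0$ whenever $x_i>0$ or a neighbour of $i$ has positive infection probability, connectivity of $G$ implies $P(k_0)\succ 0_n$ once $k_0\ge\operatorname{diam}(G)$. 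Choosing $\varepsilon>0$ small enough that simultaneously $\Phi(\varepsilon v)\succ\varepsilon v$ and $\varepsilon v\preceq P(k_0)\preceq\mathbf{1}$, monotonicity yields $\Phi^{t}(\varepsilon v)\preceq P(k_0+t)\preceq\Phi^{t}(\mathbf{1})$ for every $t$; letting $t\to\infty$ and squeezing between the two common limits $x^*$ gives $P(t)\to x^*$.

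The main obstacle is the subsolution estimate $\Phi(\varepsilon v)\succ\varepsilon v$, which is delicate precisely because the linear map $(1-\delta)I_n+\beta A$ is only an \emph{upper} bound on $\Phi$; the expansion at the origin must be controlled so that the positive first-order term dominates the (possibly unfavourable) second-order term uniformly over the finitely many coordinates, and this is exactly where Lemma~\ref{lm:v} is indispensable, as an arbitrary positive vector need not be a subsolution. A secondary point demanding care is that both bounding orbits converge to the \emph{same} equilibrium, which rests entirely on the uniqueness in Theorem~\ref{Thm:existence} (itself a consequence of the concavity in Lemma~\ref{Lem:ccv}); without uniqueness the squeeze would only confine $P(t)$ to an interval of fixed points rather than pin it to $x^*$.
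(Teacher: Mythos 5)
Your proof is correct and follows the same overall architecture as the paper's: establish that $\Phi$ is order-preserving, squeeze an arbitrary positive trajectory between a nonincreasing orbit started at $1_n$ and a nondecreasing orbit started at a sub-solution, and use the uniqueness from Theorem~\ref{Thm:existence} to force both bounding orbits onto $x^*$. The one genuine difference is the choice of sub-solution. The paper starts the lower orbit at $\alpha x^*$ for $\alpha\in(0,1)$ small enough that $\alpha x^*\preceq P(k_0)$, and verifies $\Phi(\alpha x^*)\succ\alpha x^*$ by reusing the concavity inequalities \eqref{Eq:ieq1}--\eqref{Eq:ieq3} (Lemma~\ref{Lem:ccv} gives $\Xi_i(\alpha x^*)\geq\alpha\Xi_i(x^*)$ and property~(f) gives $\omega(\alpha x^*_i)<\alpha\omega(x^*_i)$), so no expansion at the origin is needed. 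You instead start at $\varepsilon v$ with $v$ from Lemma~\ref{lm:v} and control the second-order remainder; this is exactly the device the paper uses inside the proof of Theorem~\ref{Thm:existence} to show $\epsilon v\in U$, so it is equally legitimate, and it has the minor advantage of not requiring the inequalities from the uniqueness argument to be re-invoked --- at the cost of a quantitative ``first-order term dominates'' estimate that the scaled-fixed-point route avoids entirely. One shared caveat: your claim that $P(k_0)\succ 0_n$ after $\operatorname{diam}(G)$ steps uses $\Phi_i(x)\geq(1-\delta)x_i$ to keep already-positive coordinates positive, which requires $\delta<1$ (for $\delta=1$ on a bipartite graph the support can oscillate); the paper's proof makes the identical unstated assumption, so this is not a gap relative to the paper.
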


\subsection{Analysis of the Exact Markov Chain}\label{sec:SIS_MC}
Returning back to the Markov chain model, we study the mixing time of the Markov chain and how it relates to the nonlinear and linear models. The mixing time of a Markov chain is defined as follows (\cite[Def.~4.5]{levin2009markov}):
\begin{equation}
t_{mix}(\epsilon)=\min \{t: \sup_\mu \| \mu S^t - \pi \|_{TV} \leq \epsilon \} ,  \label{Eq:mixingdefn}
\end{equation}
where $\mu$ is any initial probability distribution defined on the state space and $\pi$ is the stationary distribution. $\| \cdot \|_{TV}$ is total variation distance which measures distance of two probability distributions. Total variation distance of two probability measures $\mu$ and $\mu'$ is defined by
\begin{equation}
\| \mu -\mu' \|_{TV} = \frac{1}{2} \sum_x | \mu(x) - \mu'(x) |
\end{equation}
where $x$ is any possible state in the probability space. In fact $t_{mix}(\epsilon)$ is the smallest time where distance between the stationary distribution and probability distribution at time $t$ from any initial distribution is smaller than or equal to $\epsilon$. Roughly speaking, the mixing time measures how fast initial distribution converges to the limit distribution.

\subsubsection{A Linear Programming Approach}

Let $\mu(t) \in \mathbb{R}^{2^n}$ be a probability row vector of $\{0,1\}^n$ at time $t$. The probability that node $i$ is infected at time $t$, which is denoted by $p_i(t)$ as before, is simply the marginal probability of $\mu(t)$. That is $p_i(t)=\sum_{X_i=1} \mu_X(t)$. By defining $p_0(t)=1$ (for $\sum \mu_X(t)=1$) and sticking it to the rest of marginal probabilities, we get the column vector $p(t)=(p_0(t), p_1(t), \cdots, p_n(t) )^T$. One can interpret $p(t)$ as \emph{observable data} and $\mu(t)$ as \emph{hidden complete data} at time $t$. We give an upper bound for $p(t+1)$, observable data at the next time step, using only current observable information.

Let $f_i \in \mathbb{R}^{n+1}$ be the $i$-th unit column vector. $S$ is the transition matrix of the Markov chain, as defined before. $B \in \mathbb{R}^{2^n \times (n+1)}$ is a matrix that relates the observable data, $p(t)$, to the hidden complete data, $\mu(t)$. It can be formally expressed as:
\begin{equation}
B_{X,k} = \left\{
\begin{array}{rl}
1 & \text{if } k=0,\\
X_k & \text{if } k \in \{1,2, \cdots, n\}.
\end{array} \right.
\end{equation}

We would like to maximize $p_i(t+1)$ for a node $i$, given $p_1(t), \cdots, p_n(t)$. This leads to the following result.

\begin{proposition}\label{Lem:LP}
$\displaystyle p_i(t+1) \leq (1-\delta)p_i(t) + \beta \sum_{j \in N_i} p_j(t)$.
This is the tightest upper-bound that involves only the marginal probabilities at time $t$.
\end{proposition}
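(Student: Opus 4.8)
The plan is to recognize the maximization of $p_i(t+1)$ subject to fixed marginals as a linear program over the joint distribution $\mu(t)$, and to certify the bound by exhibiting an explicit feasible point of its dual.

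First I would set up the primal. Writing $\mu=\mu(t)^{\mathrm T}\in\mathbb{R}^{2^n}$, the quantity $p_i(t+1)$ is linear in $\mu$: indeed $p_i(t+1)=q^{\mathrm T}\mu$ with $q=SBf_i$, so that $q_X=\sum_Y S_{X,Y}Y_i=\mathbb{P}(\xi_i(t+1)=1\mid\xi(t)=X)$, which by the model \eqref{Eq:noimmune} equals $1-(1-\beta)^{m_i}$ when $X_i=0$ and $1-\delta(1-\beta)^{m_i}$ when $X_i=1$, where $m_i=|N_i\cap\mathbb{S}(X)|$. The constraints are $B^{\mathrm T}\mu=p(t)$ (encoding $\sum_X\mu_X=1$ and $\sum_{X_i=1}\mu_X=p_i(t)$) together with $\mu\succeq 0$. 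Maximizing $q^{\mathrm T}\mu$ over this polytope is by definition the tightest upper bound expressible through the marginals alone, so the proposition reduces to computing this LP value.

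Second, I would pass to the dual, $\min_{y\in\mathbb{R}^{n+1}}p(t)^{\mathrm T}y$ subject to $By\succeq q$, i.e. $y_0+\sum_{k=1}^n X_k y_k\geq \mathbb{P}(\xi_i(t+1)=1\mid X)$ for every $X\in\{0,1\}^n$. I would propose the certificate $y_0=0$, $y_i=1-\delta$, $y_j=\beta$ for $j\in N_i$, and $y_k=0$ otherwise; its objective is exactly $(1-\delta)p_i(t)+\beta\sum_{j\in N_i}p_j(t)$. Dual feasibility then collapses to two scalar inequalities in $m:=m_i$, namely $\beta m\geq 1-(1-\beta)^m$ (case $X_i=0$) and $(1-\delta)+\beta m\geq 1-\delta(1-\beta)^m$ (case $X_i=1$). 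Both are elementary: the first is the union bound / Bernoulli inequality $(1-\beta)^m\geq 1-\beta m$, and the second follows from it since $\delta\bigl(1-(1-\beta)^m\bigr)\leq \delta\beta m\leq \beta m$. Weak duality now delivers the inequality of the proposition.

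For tightness I would invoke strong duality and exhibit a primal optimizer, reading its support off complementary slackness: the dual constraints bind exactly at configurations with $X_i=0,\ m_i\leq 1$ or $X_i=1,\ m_i=0$, i.e. those in which the events $\{\xi_i=1\}$ and $\{\xi_j=1\}_{j\in N_i}$ are mutually exclusive. I would therefore construct $\mu$ by making these $|N_i|+1$ infection events disjoint (a single multinomial with masses $p_i(t),\{p_j(t)\}_{j\in N_i}$ and leftover $1-p_i(t)-\sum_{j\in N_i}p_j(t)$) and letting every remaining coordinate be an independent Bernoulli matching its marginal. On this support $\mathbb{P}(\xi_i(t+1)=1\mid X)=X_i(1-\delta)+\beta m_i$, so by linearity of expectation $p_i(t+1)=(1-\delta)p_i(t)+\beta\sum_{j\in N_i}p_j(t)$ exactly, matching the dual value. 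The main obstacle is precisely this attainment step: the disjointness construction requires $p_i(t)+\sum_{j\in N_i}p_j(t)\leq 1$, which is the relevant small-probability regime near the disease-free state; I would note that when it fails the bound still holds by weak duality but need not be attained, so the certificate's optimality (and hence the word "tightest") is asserted in that regime.
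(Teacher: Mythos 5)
Your proof of the inequality is essentially identical to the paper's: the same linear program over the hidden joint distribution $\mu(t)$ subject to $\mu B = p(t)^{\mathrm T}$, the same passage to the dual, the same certificate $\lambda^*_0=0$, $\lambda^*_i=1-\delta$, $\lambda^*_j=\beta$ for $j\in N_i$ and $0$ otherwise, and the same two Bernoulli-type scalar inequalities $1-(1-\beta)^m\leq \beta m$ and $1-\delta(1-\beta)^m\leq 1-\delta+\beta m$ to verify dual feasibility. The one substantive difference is that you go further on the word ``tightest'': the paper stops at weak duality and does not exhibit a primal optimizer, whereas you identify the binding constraints via complementary slackness and construct an attaining joint distribution by making the infection events of node $i$ and its neighbors disjoint. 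That construction is correct and does achieve $p_i(t+1)=(1-\delta)p_i(t)+\beta\sum_{j\in N_i}p_j(t)$ exactly, and you are right to flag that it requires $p_i(t)+\sum_{j\in N_i}p_j(t)\leq 1$; outside that regime the bound is still valid but your argument no longer certifies attainment, which is an honest and useful caveat that the paper itself does not address.
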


Notice that this is interestingly the linear model that we have been considering. In fact, by applying Proposition~\ref{Lem:LP} to each node, we can express it as
\begin{equation}
p(t+1) \preceq ((1-\delta)I_n + \beta A) p(t),
\end{equation}
and $(1-\delta)I_n + \beta A$ is the system matrix of the linear model.
For obtaining tighter bounds, one should use higher order terms than just marginals (e.g. pairwise probabilities, triples, etc.) \cite{ruhi2016improved}.

Now we prove the practical result of logarithmic mixing time for $\lambda_{max}((1-\delta)I_n + \beta A ) < 1 $. Let $e_X \in \mathbb{R}^{2^n}$ denote the $X$-th unit vector, i.e. the probability vector all of whose components are zero, except the $X$-th component. Also define $\bar{0}, \bar{1} \in \{0,1\}^n$ as the state where everyone is healthy and infected, respectively.
\begin{theorem}\label{Thm:upperboundmt}
If $\frac{\beta\lambda_{\max}(A)}{\delta}<1$, the mixing time of the Markov chain whose transition matrix $S$ is described by Eqs. \eqref{MC0} and \eqref{Eq:noimmune} is $O(\log n)$.
\end{theorem}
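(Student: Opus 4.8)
The plan is to exploit the fact, already noted in the excerpt, that the all-healthy state $\bar 0$ is the unique absorbing state, so the stationary distribution is the point mass $\pi = e_{\bar 0}$. With $\pi$ concentrated on a single state the total variation distance collapses to a scalar quantity. Concretely, for an arbitrary initial distribution $\mu$, writing $q_t = (\mu S^t)_{\bar 0}$ for the probability of being at $\bar 0$ at time $t$, I would first show
\begin{equation}
\| \mu S^t - \pi \|_{TV} = \tfrac12\Big[(1-q_t) + \sum_{X \neq \bar 0}(\mu S^t)_X\Big] = 1 - q_t = \mathbb{P}_\mu(\xi(t) \neq \bar 0),
\end{equation}
so that the mixing error is exactly the probability that the epidemic has not yet died out. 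This reduces the theorem to bounding the survival probability uniformly over $\mu$.

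Next I would pass from the $2^n$-state chain down to the $n$ marginals by a union bound,
\begin{equation}
\mathbb{P}_\mu(\xi(t) \neq \bar 0) = \mathbb{P}_\mu\big(\exists\, i:\ \xi_i(t)=1\big) \leq \sum_{i=1}^n p_i(t),
\end{equation}
and then control $\sum_i p_i(t)$ through the linear upper bound of Proposition~\ref{Lem:LP}. Setting $M = (1-\delta)I_n + \beta A$, that proposition gives $p(t+1) \preceq M p(t)$ componentwise. Since $M$ has nonnegative entries it is monotone with respect to $\preceq$, so iterating yields $p(t) \preceq M^t p(0) \preceq M^t \mathbf{1}$, where the last inequality uses $p(0) \preceq \mathbf{1}$, valid for \emph{every} initial distribution, which is precisely what disposes of the supremum over $\mu$. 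Hence $\sum_i p_i(t) = \mathbf{1}^{\mathrm T} p(t) \leq \mathbf{1}^{\mathrm T} M^t \mathbf{1}$.

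The final step is a spectral estimate of $\mathbf{1}^{\mathrm T} M^t \mathbf{1}$. The threshold hypothesis $\frac{\beta \lambda_{\max}(A)}{\delta} < 1$ gives $\rho := \lambda_{\max}(M) = (1-\delta) + \beta\lambda_{\max}(A) < 1$, and since $M$ is symmetric and nonnegative, Perron--Frobenius makes $\rho$ its spectral radius, so every eigenvalue obeys $|\lambda_k| \leq \rho$. Diagonalizing $M = \sum_k \lambda_k u_k u_k^{\mathrm T}$ in an orthonormal eigenbasis, I would bound
\begin{equation}
\mathbf{1}^{\mathrm T} M^t \mathbf{1} = \sum_k \lambda_k^t (\mathbf{1}^{\mathrm T} u_k)^2 \leq \rho^t \sum_k (\mathbf{1}^{\mathrm T} u_k)^2 = n\,\rho^t .
\end{equation}
Chaining the three steps gives $\| \mu S^t - \pi \|_{TV} \leq n\rho^t$ for all $\mu$, so solving $n\rho^t \leq \epsilon$ yields $t_{mix}(\epsilon) \leq (\log n + \log(1/\epsilon))/\log(1/\rho) = O(\log n)$ because $\rho<1$ is a fixed constant. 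The one place that demands care is this spectral step: a negative eigenvalue of $M$ can make $\lambda_k^t$ negative, so the estimate $\lambda_k^t \leq \rho^t$ is not automatic and genuinely relies on the nonnegative-matrix fact $|\lambda_k| \leq \rho$ together with $(\mathbf{1}^{\mathrm T} u_k)^2 \geq 0$; the union bound and the monotone iteration are routine by comparison.
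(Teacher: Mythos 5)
Your proposal is correct, and the backbone is the same as the paper's: collapse the total variation distance to the survival probability $1-(\mu S^t)_{\bar 0}$, union-bound over nodes, iterate the linear upper bound of Proposition~\ref{Lem:LP}, and finish with a spectral estimate giving $n\rho^t$ and hence $t_{mix}(\epsilon)\leq \log(n/\epsilon)/\log(1/\rho)$. The one place you genuinely diverge is in disposing of the supremum over initial distributions. The paper first reduces to the worst-case initial state $e_{\bar 1}$ via the inequality $1-(\mu S^t)_{\bar 0}\leq 1-(e_{\bar 1}S^t)_{\bar 0}$, which it justifies by the stochastic-dominance machinery of Section~\ref{sec:partialordering} (Lemmas~\ref{Lem:RinverseSR} and \ref{Lem:order}), and only then applies the union bound with $p(0)=1_n$. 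You instead keep $\mu$ arbitrary and observe that $p(0)\preceq 1_n$ trivially, so entrywise nonnegativity of $M^t$ gives $M^tp(0)\preceq M^t 1_n$ directly. This is more elementary: for this particular theorem the partial-ordering apparatus is not needed (the paper develops it anyway because it is essential for Lemma~\ref{Lem:uofr} and the nonlinear bound on the non-absorption probability). Your care with the spectral step is also warranted and correctly resolved: $\lambda_k^t\leq|\lambda_k|^t\leq\rho^t$ combined with $(\mathbf{1}^{\mathrm T}u_k)^2\geq 0$ handles possible negative eigenvalues, which is exactly what the paper's bound $1_n^{\mathrm T}M^t1_n\leq\|1_n\|^2\|M\|^t$ encodes via the operator norm of the symmetric matrix $M$.
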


\subsubsection{Partial Ordering}\label{sec:partialordering}
In this section, we define a partial order on the set of probability vectors of $\{0,1\}^n$, and establish the connection between the nonlinear model and the Markov chain. The nonlinear model does not generally provide an upper bound on the true probabilities $p_i(t)$. However, it gives an upper bound on the probability that the system is not in the all-healthy state.

We define $\leq_{st}$ on the set of probability vectors of $\{0,1\}^n$ as follows. 
\begin{equation}
\mu \leq_{st} \mu' \quad \text{iff} \quad \sum_{X \preceq Z} \mu_X \geq \sum_{X \preceq Z} \mu'_X \quad \forall Z \in \{0,1\}^n
\end{equation}
where $X \preceq Z$ means $X_i \leq Z_i$ for all $i$. Note that $\sum_{X \preceq Z} \mu_X$ represents the probability that each node of $\mathbb{S}(Z)^c$ is healthy under probability distribution $\mu$. $\mu \leq_{st} \mu'$ means that the probability of some nodes being healthy is higher under $\mu$ than under $\mu'$, for any set of nodes. Roughly speaking, infection probability under $\mu'$ stochastically dominates one under $\mu$. It is trivial to check that $\leq_{st}$ is a well-defined partial order.
It is clear that $e_{\bar{1}}$ is the greatest element and $e_{\bar{0}}$ is the smallest element under $\leq_{st}$. As mentioned before, since the underlying graph is connected, and there we have an absorbing state, it is not hard to see that the stationary distribution is $e_{\bar{0}}$, which corresponds to all nodes being healthy with probability $1$. If all the nodes in the network are healthy, there is no infection and they always stay healthy.

The following two lemmas reveal why $\leq_{st}$ is nice; it makes $S$ an order-preserving map, i.e. $\mu \leq_{st} \mu'$ implies $\mu S \leq_{st} \mu' S$.

\begin{lemma}\label{Lem:RinverseSR}
$R^{-1}SR$ is a $2^n$ by $2^n$ matrix all of whose entries are non-negative where $R \in \mathbb{R}^{\{0,1\}^n \times \{0,1\}^n }$ is defined as
\begin{equation}
R_{X,Y}= \left\{
\begin{array}{rl}
1 & \text{if } X \preceq Y,\\
0 & \text{otherwise } 
\end{array} \right.
\end{equation}
\end{lemma}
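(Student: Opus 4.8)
The plan is to exploit the fact that $R$ is the \emph{zeta matrix} of the Boolean lattice $(\{0,1\}^n,\preceq)$, so its inverse is the corresponding Möbius matrix. First I would record that $R^{-1}_{X,U} = (-1)^{|\mathbb{S}(U)|-|\mathbb{S}(X)|}$ when $X\preceq U$ and $0$ otherwise; this is the Möbius function of a product of two-element chains, and can be checked directly from the identity $\sum_{X\preceq U\preceq Y}(-1)^{|\mathbb{S}(U)|-|\mathbb{S}(X)|}=\delta_{X,Y}$. Writing out the $(X,Y)$ entry of the triple product then gives
\[
(R^{-1}SR)_{X,Y}=\sum_{U\succeq X}(-1)^{|\mathbb{S}(U)|-|\mathbb{S}(X)|}\,G(U,Y),\qquad G(U,Y):=\sum_{V\preceq Y}S_{U,V},
\]
so the whole problem reduces to understanding $G(U,Y)=\mathbb{P}(\xi(t+1)\preceq Y\mid\xi(t)=U)$, the probability that, starting from $U$, every node outside $\mathbb{S}(Y)$ is healthy at the next step.

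The crucial step is to show that $G(U,Y)$, viewed as a function of $U$, has a \emph{product} (log-linear) form. Using the conditional independence \eqref{MC0} together with the per-node probabilities in \eqref{Eq:noimmune}, one has $G(U,Y)=\prod_{i\notin\mathbb{S}(Y)}q_i(U)$, where $q_i(U)=\mathbb{P}(\xi_i(t+1)=0\mid U)=\delta^{U_i}(1-\beta)^{m_i(U)}$ and $m_i(U)=\sum_j A_{ij}U_j$. I would then collect exponents: $\sum_{i\notin\mathbb{S}(Y)}U_i=\sum_i d_iU_i$ with $d_i=\mathbbm{1}[i\notin\mathbb{S}(Y)]$, and, interchanging the order of summation (here the symmetry of $A$ is convenient), $\sum_{i\notin\mathbb{S}(Y)}m_i(U)=\sum_j c_jU_j$ with $c_j=|N_j\cap\mathbb{S}(Y)^c|$. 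This yields the clean factorization $G(U,Y)=\prod_i a_i^{U_i}$, where $a_i=a_i(Y):=\delta^{d_i}(1-\beta)^{c_i}$ depends only on $Y$, and each $a_i\in[0,1]$ since $\delta,\,1-\beta\in[0,1]$.

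With this factorization the alternating sum collapses. Splitting according to whether $i\in\mathbb{S}(X)$ (where $U_i=1$ is forced) or $i\in\mathbb{S}(X)^c=:Z$ (where $U_i$ ranges freely over the up-set of $X$), and writing $U=X\vee e_T$ for the state that additionally turns on the coordinates of a subset $T\subseteq Z$, I would use the elementary identity $\sum_{T\subseteq Z}(-1)^{|T|}\prod_{i\in T}a_i=\prod_{i\in Z}(1-a_i)$ to obtain the closed form
\[
(R^{-1}SR)_{X,Y}=\Big(\prod_{i\in\mathbb{S}(X)}a_i(Y)\Big)\Big(\prod_{i\notin\mathbb{S}(X)}\big(1-a_i(Y)\big)\Big).
\]
Since $0\le a_i(Y)\le 1$, both products are nonnegative, so every entry is $\ge 0$, as claimed.

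I expect the main obstacle to be recognizing and proving the product form $G(U,Y)=\prod_i a_i^{U_i}$: it is precisely what turns an unwieldy inclusion–exclusion alternating sum into a single factorized product, and it hinges on the log-linearity of the transition probabilities in \eqref{Eq:noimmune} together with the conditional independence \eqref{MC0}. Everything before it (Möbius inversion) and after it (factoring the alternating sum) is routine once this structural observation is in place.
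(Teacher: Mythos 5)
Your proposal is correct and follows essentially the same route as the paper's proof: invert $R$ via the M\"obius function of the Boolean lattice, write $(R^{-1}SR)_{X,Y}$ as an alternating sum of $\mathbb{P}(\xi(t+1)\preceq Y\mid \xi(t)=U)$ over $U\succeq X$, exploit the product (log-linear) form $\prod_i a_i^{U_i}$ coming from \eqref{MC0} and \eqref{Eq:noimmune} (with the same symmetry-of-$A$ exchange of summation), and collapse the sum with the identity $\sum_{T\subseteq Z}(-1)^{|T|}\prod_{i\in T}a_i=\prod_{i\in Z}(1-a_i)$. The only cosmetic difference is that the paper goes one step further and recognizes your closed form $\prod_{i\in\mathbb{S}(X)}a_i(Y)\prod_{i\notin\mathbb{S}(X)}(1-a_i(Y))$ as the transition probability $\mathbb{P}(\xi(t+1)=\neg X\mid\xi(t)=\neg Y)$, which makes nonnegativity immediate but adds nothing you have not already established.
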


\begin{lemma}\label{Lem:order}
If $\mu \leq_{st} \mu'$, then $\mu S \leq_{st} \mu' S$.
\end{lemma}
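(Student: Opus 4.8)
The plan is to recognize that the partial order $\leq_{st}$ is exactly the componentwise order after multiplying by the matrix $R$ of Lemma~\ref{Lem:RinverseSR}, which converts the combinatorial statement into a purely linear-algebraic one. The starting observation is that, reading $\mu$ as a row vector in $\mathbb{R}^{2^n}$, the $Z$-th entry of $\mu R$ is
\begin{equation}
(\mu R)_Z = \sum_{X} \mu_X R_{X,Z} = \sum_{X \preceq Z} \mu_X , \label{Eq:muR}
\end{equation}
i.e. the probability that every node of $\mathbb{S}(Z)^c$ is healthy. Hence the defining inequalities of $\mu \leq_{st} \mu'$ are precisely the statement that the row vector $(\mu - \mu')R$ has all of its $2^n$ entries nonnegative. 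So the whole lemma can be restated as: if $(\mu-\mu')R$ is entrywise nonnegative, then so is $((\mu-\mu')S)R$.

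First I would record that $R$ is invertible (it is the zeta matrix of the Boolean lattice $(\{0,1\}^n,\preceq)$; ordering the states along any linear extension of $\preceq$ makes $R$ upper-triangular with unit diagonal, so $\det R = 1$), which is what gives meaning to $R^{-1}SR$ in Lemma~\ref{Lem:RinverseSR}. Then the key algebraic step is to insert $RR^{-1}=I$ and push the ``hard'' factor onto $R^{-1}SR$:
\begin{equation}
(\mu S)R = \mu\,(SR) = \mu\,R\,(R^{-1}SR) = (\mu R)\,(R^{-1}SR). \label{Eq:factor}
\end{equation}
Applying \eqref{Eq:factor} to $w := \mu - \mu'$ gives $(wS)R = (wR)(R^{-1}SR)$, which expresses the ``next-step'' order data $(wS)R$ as a nonnegative row vector $wR$ multiplied by the matrix $R^{-1}SR$.

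To finish, I would invoke the two ingredients now on the table. By hypothesis and \eqref{Eq:muR}, $wR = (\mu-\mu')R$ is entrywise nonnegative; by Lemma~\ref{Lem:RinverseSR}, every entry of $R^{-1}SR$ is nonnegative. A product of an entrywise-nonnegative row vector with an entrywise-nonnegative matrix is entrywise nonnegative, so $(wS)R = ((\mu - \mu')S)R$ has all entries $\geq 0$, which is exactly $\mu S \leq_{st} \mu' S$. The point worth stressing is that there is no real obstacle left once Lemma~\ref{Lem:RinverseSR} is in hand: all of the genuine content (the combinatorial verification that conjugating $S$ by the incidence matrix of $\preceq$ keeps every entry nonnegative) lives in that lemma, and Lemma~\ref{Lem:order} is a clean corollary obtained by the change of variables $\mu \mapsto \mu R$. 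The only place to be careful is bookkeeping---keeping $\mu$ as a row vector, matching the inequality direction in \eqref{Eq:muR} (larger $\mu R$ means ``more healthy,'' i.e. smaller in $\leq_{st}$), and confirming $R^{-1}$ exists before writing $R^{-1}SR$.
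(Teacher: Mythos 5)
Your proof is correct and follows essentially the same route as the paper's: both reduce $\leq_{st}$ to entrywise nonnegativity of $(\mu-\mu')R$ and then use the factorization $(\mu-\mu')SR = \bigl((\mu-\mu')R\bigr)\bigl(R^{-1}SR\bigr)$ together with Lemma~\ref{Lem:RinverseSR}. The only cosmetic difference is that the paper names the nonnegative vector $\nu = (\mu-\mu')R$ and writes $\mu-\mu' = \nu R^{-1}$ before multiplying, whereas you insert $RR^{-1}$ directly; the substance is identical.
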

Note that Lemma~\ref{Lem:order} directly implies
$$\sum_{X \preceq \bar{0}} (\mu S^t)_X = (\mu S^t)_{\bar{0}} \geq (e_{\bar{1}} S^t)_{\bar{0}} = \sum_{X \preceq \bar{0}} (e_{\bar{1}} S^t)_X$$
for any probability vector $\mu$, since $\mu \leq_{st} e_{\bar{1}}$.

Now we establish a result which enables us to relate the nonlinear map $\Phi$ to the true probabilities of the Markov chain. For any given $n$-dimensional vector $r=(r_1, \cdots, r_n )^T$, define the $2^n$-dimensional column vector $u(r)$  by $u(r)_X = \displaystyle \prod_{i \in \mathbb{S}(X)} (1-r_i)$. Then we have the following lemma.
\begin{lemma}\label{Lem:uofr}
$Su(r) \succeq u(\Phi(r))$ for all $r \in [0,1]^n$.
\end{lemma}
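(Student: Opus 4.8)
The plan is to verify the vector inequality coordinate-by-coordinate, i.e.\ to show $(Su(r))_X \ge u(\Phi(r))_X$ for every state $X \in \{0,1\}^n$. First I would put the left-hand side in closed form. Because the one-step transition factors over nodes by the independence assumption \eqref{MC0}, and because $u(r)_Y = \prod_{i=1}^n (1-r_i)^{Y_i}$ also factors, the sum $(Su(r))_X = \sum_Y \mathbb{P}(Y\mid X)\,u(r)_Y$ separates into a product of one-node sums, giving
\[
(Su(r))_X = \prod_{i=1}^n \bigl(1 - r_i\, q_i(X)\bigr), \qquad q_i(X) := \mathbb{P}(\xi_i(t+1)=1\mid \xi(t)=X).
\]
Reading \eqref{Eq:noimmune} I would record the uniform identity $1 - q_i(X) = \delta^{X_i}(1-\beta)^{m_i}$, with $m_i := |N_i \cap \mathbb{S}(X)|$, so that each factor reads $1 - r_i q_i(X) = (1-r_i) + r_i \delta^{X_i}(1-\beta)^{m_i}$.

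Next I would rewrite the right-hand side. A one-line simplification of \eqref{Eq:wtphieq} gives $1 - \Phi_i(r) = \bigl(1-(1-\delta)r_i\bigr)\prod_{j\in N_i}(1-\beta r_j)$, whence $u(\Phi(r))_X = \prod_{i\in\mathbb{S}(X)}\bigl(1-(1-\delta)r_i\bigr)\prod_{j\in N_i}(1-\beta r_j)$. The crucial manipulation is to regroup the double product $\prod_{i\in\mathbb{S}(X)}\prod_{j\in N_i}(1-\beta r_j)$ by the \emph{inner} index $j$: since $A$ is symmetric, $j\in N_i \iff i\in N_j$, so each factor $(1-\beta r_j)$ occurs exactly $|N_j\cap\mathbb{S}(X)| = m_j$ times. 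Thus
\[
u(\Phi(r))_X = \prod_{k=1}^n \bigl(1-(1-\delta)r_k\bigr)^{X_k}\,(1-\beta r_k)^{m_k}.
\]
Comparing the two product forms, it now suffices to prove the single scalar inequality $(1-r_k) + r_k \delta^{X_k}(1-\beta)^{m_k} \ge \bigl(1-(1-\delta)r_k\bigr)^{X_k}(1-\beta r_k)^{m_k}$ for each $k$, and multiply over $k$.

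Finally I would dispatch this scalar inequality in two cases. For $X_k=0$ it reads $(1-r_k)+r_k(1-\beta)^{m_k}\ge (1-\beta r_k)^{m_k}$; writing $1-\beta r_k=(1-r_k)\cdot 1+r_k(1-\beta)$ as a convex combination and applying Jensen to the convex map $t\mapsto t^{m_k}$ gives it immediately (equality at $m_k=0$). For $X_k=1$ I would feed the $X_k=0$ bound $(1-\beta r_k)^{m_k}\le (1-r_k)+r_k(1-\beta)^{m_k}=:P$ into the nonnegative factor $1-(1-\delta)r_k=(1-r_k)+\delta r_k$, and verify the residual identity
\[
\bigl[(1-r_k)+r_k\delta(1-\beta)^{m_k}\bigr]-\bigl[(1-r_k)+\delta r_k\bigr]P = r_k(1-r_k)\bigl(1-(1-\beta)^{m_k}\bigr)(1-\delta)\ \ge\ 0,
\]
which is manifestly nonnegative as $r_k,\beta,\delta\in[0,1]$. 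The step I expect to be the genuine obstacle — and the reason the lemma is not purely mechanical — is the regrouping in the second paragraph: a naive term-by-term comparison over $i\in\mathbb{S}(X)$ \emph{fails} (for instance when the infected-in-$X$ neighbors of a node carry vanishing marginals $r_j$, so the nonlinear map sees no incoming infection while $m_i>0$), and only after the edge factors $(1-\beta r_j)$ are redistributed across all $n$ nodes via the symmetry of $A$ does the clean convexity argument become available.
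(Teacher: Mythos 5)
Your proof is correct, but after the (shared) opening step it takes a genuinely different route from the paper's. Both arguments begin by exploiting the product structure of the transition kernel to write $(Su(r))_X=\prod_{i=1}^n\bigl[(1-r_i)\,\mathbb{P}(Y_i=1\mid X)+\mathbb{P}(Y_i=0\mid X)\bigr]$. From there the paper does \emph{not} compute $u(\Phi(r))_X$ in closed form; instead its key lemma is a supermultiplicativity property, $(Su(r))_{X+Z}\ \geq\ (Su(r))_X\,(Su(r))_Z$ for states with disjoint supports, proved factorwise via the identity $c(1-ab)+ab-(c(1-a)+a)(c(1-b)+b)=c(1-c)(1-a)(1-b)\geq 0$ together with $\mathbb{P}(Y_k=0\mid X+Z)=\mathbb{P}(Y_k=0\mid X)\,\mathbb{P}(Y_k=0\mid Z)$. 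This reduces everything to the single-infected-node states $\hat{i}$, where one checks the exact identity $(Su(r))_{\hat{i}}=(1-(1-\delta)r_i)\prod_{j\sim i}(1-\beta r_j)=1-\Phi_i(r)$, and the lemma follows by multiplying over $i\in\mathbb{S}(X)$. You instead obtain an exact closed form for the right-hand side, $u(\Phi(r))_X=\prod_k(1-(1-\delta)r_k)^{X_k}(1-\beta r_k)^{m_k}$, by regrouping the edge factors through the symmetry of $A$ (each $(1-\beta r_j)$ appears exactly $m_j=|N_j\cap\mathbb{S}(X)|$ times), and then prove a single per-node scalar inequality via Jensen applied to $t\mapsto t^{m_k}$ at the convex combination $1-\beta r_k=(1-r_k)\cdot 1+r_k(1-\beta)$, plus the residual identity $r_k(1-r_k)(1-(1-\beta)^{m_k})(1-\delta)\geq 0$ for infected nodes; I verified both computations and they are right. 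The trade-off: your argument is more explicit and self-contained (both sides are in closed form and the comparison is node-by-node), whereas the paper's supermultiplicativity trick avoids the regrouping entirely and reduces the whole lemma to the singleton states where the identification with $1-\Phi_i(r)$ is immediate; the paper's route also generalizes more readily to the generalized contact matrix $M$ where the per-node exponent bookkeeping would be messier.
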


It should be clear that $e_{\bar{0}}^T = u((1,1,\cdots,1)^T) = u(1_n)$ (we distinguish $1_n=(1,1,\cdots, 1)^T \in [0,1]^n$ from $\bar{1} \in \{0,1\}^n$ which is a state of infection). Lemma~\ref{Lem:uofr} is particularly useful because $S$ is a matrix all of whose entries are non-negative, and it follows that
\begin{equation}
S^t e_{\bar{0}}^T = S^t u(1_n) \succeq u( \Phi^t (1_n)) . \label{Eq:boundbyphitilde}
\end{equation}

Of note, by some algebra on $e_{\bar{1}} S^t e_{\bar{0}}^T$ using this bound, the same bound as in \eqref{epsilon_last} can be established, which leads to the mixing time result.

Furthermore, the $i$-th component of $\Phi^t(1_n)$ provides an upper bound on the probability that the current state is not the steady state, given that the infection started from node $i$ with probability 1 at time 0. Mathematically, $e_{\hat{i}}S^t e_{\bar{0}}^T \geq e_{\hat{i}} u(\Phi^t (1_n)) = 1- \Phi^t_i (1_n)$ by \eqref{Eq:boundbyphitilde}, and we have
\begin{align}
\mathbb{P}(\xi(t) \neq \bar{0}|\xi(0)=\hat{i}) &= 1- \mathbb{P}(\xi(t)=\bar{0}|\xi(0)=\hat{i})\\
&= 1- e_{\hat{i}}S^t e_{\bar{0}}^T\\
&\leq \Phi^t_i (1_n)
\end{align}

More importantly, the probability that the network is not in the all-healthy state at time $t$ given that the initial epidemic state is $X$ can be bounded above by the entries of $\Phi^t (1_n)$:
\begin{align}
&\mathbb{P}(\xi(t) \neq \bar{0}|\xi(0)=X) \\
&= 1- \mathbb{P}(\xi(t) = \bar{0}|\xi(0)=X) = 1- e_{X}S^t e_{\bar{0}}^T \\
&\leq 1- u( \Phi^t (1_n))_X= 1- \prod_{i \in \mathbb{S}(X)} \left( 1- \Phi^t_i(1_n) \right) \label{not-all-healthy}
\end{align}
\begin{proposition}
The nonlinear model provides an upper bound on the probability of the chain not being in the all-healthy state as
\begin{equation}
\mathbb{P}(\xi(t) \neq \bar{0}|\xi(0)=X)\leq 1- \prod_{i \in \mathbb{S}(X)} \left( 1- \Phi^t_i(1_n) \right)
\end{equation}
for any state $X$.
\end{proposition}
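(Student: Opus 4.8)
The plan is to reduce the claim to the single comparison $S^t u(1_n) \succeq u(\Phi^t(1_n))$ recorded in \eqref{Eq:boundbyphitilde}, and then simply read off its $X$-th coordinate. First I would pass to the complement, $\mathbb{P}(\xi(t) \neq \bar{0} \mid \xi(0)=X) = 1 - \mathbb{P}(\xi(t) = \bar{0} \mid \xi(0)=X)$, so that it suffices to \emph{lower}-bound the probability of absorption into the all-healthy state. Starting deterministically from $X$, the distribution at time $t$ is the row vector $e_X S^t$, so $\mathbb{P}(\xi(t)=\bar{0}\mid\xi(0)=X) = e_X S^t e_{\bar{0}}^T$. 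The point that makes Lemma~\ref{Lem:uofr} applicable is the identification $e_{\bar{0}}^T = u(1_n)$: indeed $u(1_n)_X = \prod_{i \in \mathbb{S}(X)}(1-1)$ is the empty product $1$ when $X=\bar{0}$ and vanishes otherwise, so $u(1_n)$ is exactly the indicator of the absorbing state.

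The engine of the argument is to bootstrap the one-step bound $Su(r) \succeq u(\Phi(r))$ of Lemma~\ref{Lem:uofr} into the $t$-step bound $S^t u(1_n) \succeq u(\Phi^t(1_n))$ by induction on $t$. The base case $t=1$ is Lemma~\ref{Lem:uofr} applied with $r = 1_n$. For the inductive step, assuming $S^t u(1_n) \succeq u(\Phi^t(1_n))$, I would left-multiply by $S$: since every entry of $S$ is non-negative, left multiplication preserves the componentwise order, giving $S^{t+1}u(1_n) \succeq S\,u(\Phi^t(1_n))$, and a second application of Lemma~\ref{Lem:uofr} with $r = \Phi^t(1_n)$ yields $S\,u(\Phi^t(1_n)) \succeq u(\Phi(\Phi^t(1_n))) = u(\Phi^{t+1}(1_n))$, completing the induction.

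To conclude, because $e_X$ also has non-negative entries, the componentwise inequality transfers to the scalar pairing: $e_X S^t e_{\bar{0}}^T = e_X \big(S^t u(1_n)\big) \geq e_X\, u(\Phi^t(1_n)) = u(\Phi^t(1_n))_X = \prod_{i \in \mathbb{S}(X)}(1 - \Phi^t_i(1_n))$. This is precisely the desired lower bound on the all-healthy probability, and complementing it delivers $\mathbb{P}(\xi(t)\neq\bar{0}\mid\xi(0)=X) \leq 1 - \prod_{i \in \mathbb{S}(X)}(1 - \Phi^t_i(1_n))$.

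I expect the only genuinely delicate point to be the monotone iteration in the second paragraph: turning the single-step Lemma~\ref{Lem:uofr} into its $t$-fold version hinges entirely on the non-negativity of $S$ (so that $a \succeq b$ implies $Sa \succeq Sb$), which is what lets the lemma and the order-preservation alternate cleanly. Everything else—the complementation, the identity $e_{\bar{0}}^T = u(1_n)$, and the extraction of the $X$-th coordinate—is routine bookkeeping.
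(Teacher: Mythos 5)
Your proposal is correct and follows the paper's own route exactly: identify $e_{\bar{0}}^T = u(1_n)$, iterate Lemma~\ref{Lem:uofr} into $S^t u(1_n) \succeq u(\Phi^t(1_n))$ using the non-negativity of $S$ (this is \eqref{Eq:boundbyphitilde}), and read off the $X$-th coordinate before complementing. You merely make explicit the induction that the paper leaves as ``it follows that,'' which is a welcome clarification rather than a deviation.
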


We should finally remark that the reason why it is possible for the nonlinear map to converge to a unique non-origin fixed point when $\frac{\beta\lambda_{\max}(A)}{\delta} > 1$, even though the original Markov chain model always converges to the all-healthy state, is that \eqref{not-all-healthy} is only an upper bound on $\mathbb{P}(\xi(t) \neq \bar{0}|\xi(0)=X)$.
In other words, if the origin is globally stable in the epidemic map $\Phi$, we can infer that the Markov chain model mixes fast. However, if the origin in the epidemic map is unstable, we cannot infer anything about mixing time.


\subsection{Generalized Contact Model}
In this section, we generalize the contact model. In the previous model, everyone had the same recovery rate $\delta$ and infection rate $\beta$. One of the main results was that the epidemic dies out fast if the largest eigenvalue of $M=(1-\delta)I_n + \beta A $ is smaller than $1$. $M$ is defined by $\beta$, the infection rate, $\delta$, the recovery rate, and $A$, the adjacency matrix. In other words $M$ is the contact model.

To model an epidemic spread where everyone has its own infection and recovery rate, we can define the generalized infection matrix. Let $M=(m_{i,j})$ be the generalized infection matrix where $m_{i,j} \in [0,1]$ represents the infection probability that $i$ is infected at time $t+1$ when $j$ is the only infected node at time $t$. In this setting, each diagonal entry $m_{i,i}$ represents self-infection rate. In other words, $1-m_{i,i}$ is recovery rate of node $i$ and $m_{i,i}$ is the probability that $i$ stays infected when there is no other infected nodes in the network. We also assume that probability of infection of each node given the current state $\xi(t)$ is independent. More precisely, for any two state vectors $X,Y \in \{0,1\}^n$,
\begin{equation}
\mathbb{P}(\xi(t+1)=Y|\xi(t)=X) = \prod_{i=1}^n \mathbb{P}(\xi_i(t+1)=Y_i|\xi(t)=X)
\end{equation}
Probability transition from given state is defined by $M$.
\begin{empheq}[box=\fbox]{align}
&\mathbb{P}(\xi_i(t+1)=Y_i|\xi(t)=X) \nonumber \\
&= \left\{
\begin{array}{rl} 
\displaystyle  \prod_{j \in \mathbb{S}(X)} (1-m_{i,j}) & \text{if } Y_i=0,\\
\displaystyle  1-\prod_{j \in \mathbb{S}(X)} (1-m_{i,j}) & \text{if } Y_i=1,\end{array} \right.
\end{empheq}
We define the transition matrix, $S^{(M)} \in \mathbb{R}^{\{0,1\}^n \times \{0,1\}^n}$ by $S^{(M)}_{X,Y} = \mathbb{P}(\xi_i(t+1)=Y_i|\xi(t)=X)$ in the equation above. For two probability distributions $\mu$ and $\mu'$ which are defined on $\{0,1\}^n$, $\mu \leq_{st} \mu'$ is equivalent to the statement that all the entries of $(\mu-\mu')R$ are non-negative. Lemma~\ref{Lem:RinverseSR} is also true for $S^{(M)}$. We can check that $(R^{-1}S^{(M)}R)_{X,Z} =  S^{(M^T)}_{\neg Z, \neg X} \geq 0$ where $M^T$ is the transpose of $M$. $S^{(M)}$ is an order-preserving map under $\leq_{st}$ by Lemma~\ref{Lem:order}. 

The epidemic map associated with $M$, $\displaystyle \Phi^{(M)}: [0,1]^n \to [0,1]^n$ is defined by
\begin{empheq}[box=\fbox]{equation}
\displaystyle \Phi^{(M)}_i(x)=1- \prod_{j=1}^n (1-m_{i,j}x_j)
\end{empheq}
and $\displaystyle \Phi^{(M)} = (\displaystyle \Phi^{(M)}_1 , \displaystyle \Phi^{(M)}_2, \cdots, \displaystyle \Phi^{(M)}_n)$. $M$ is the Jacobian matrix of $\displaystyle \Phi^{(M)}(\cdot)$ at the origin which gives an upper bound. i.e. $\Phi^{(M)}(x) \preceq Mx$. The origin is the unique fixed point which is globally stable if the largest eigenvalue of $M$ is smaller than $1$. It also has a unique nontrivial fixed point which is globally stable if the largest eigenvalue of $M$ is greater than $1$. 

Same as in Theorem~\ref{Thm:upperboundmt} and Lemma~\ref{Lem:uofr}, $\lambda_{\max}(M)<1$ guarantees that the mixing time of the Markov chain whose transition matrix is $S^{(M)}$ has an upper bound of $\displaystyle t_{mix}(\epsilon) \leq \frac{\log \frac{n}{\epsilon}}{-\log \| M \|}$, i.e. the mixing time is $O(\log n)$.

\subsection{Immune-Admitting Model}\label{sec:immune-admitting}
In this section, we study the immune-admitting model. The model is the same as that of the previous section except that in a single time interval a node cannot go from infected to healthy back to infected. In other words, a node does not get infected from its neighbors if it has just recovered from the disease. To summarize this,
\begin{empheq}[box=\mbox]{align}
\tikzmark{C}
&\mathbb{P}(\xi_i(t+1)=Y_i|\xi(t)=X) \nonumber \\
&= \left\{
\begin{array}{rl}
(1-\beta)^{m_i} & \text{if } (X_i,Y_i)=(0,0), \: |N_i \cap \mathbb{S}(X)| = {m_i},\\
1- (1-\beta)^{m_i} & \text{if } (X_i,Y_i)=(0,1), \: |N_i \cap \mathbb{S}(X)| = {m_i},\\
\delta & \text{if } (X_i,Y_i)=(1,0), \\
1-\delta & \text{if } (X_i,Y_i)=(1,1).
\end{array} \right.\tikzmark{D} \label{Eq:immuneadmitting}
\end{empheq}
\begin{tikzpicture}[remember picture,overlay]
\draw ([shift={(-.2em,2.5ex)}]C) rectangle ([shift={(-0.7em,-5.5ex)}]D);
\end{tikzpicture}

The transition matrix is defined in a similar way. In this model, the probability that a node becomes healthy from infected is $\delta$ which is larger than $\delta(1-\beta)^{m_i}$ as in immune-not-admitting model described in \eqref{Eq:noimmune}. Roughly speaking, the immune-admitting model is more likely to go to steady state than the immune-not-admitting model.

The mixing time of this model is also $O(\log n)$. Most of the formal proof is very similar to the one for immune-not-admitting model, and we omit it for the sake of brevity.


An epidemic map of the immune-admitting model can be studied as well, which is defined as 
\begin{empheq}[box=\fbox]{equation}
{\widetilde{\Phi}}_i(x) = (1-\delta)x_i + ( 1- x_i)\left( 1 - \prod_{j \in N_i} (1-\beta x_j) \right) \label{Eq:Phieq}
\end{empheq}

${\widetilde{\Phi}}:[0,1]^n \to [0,1]^n$ of \eqref{Eq:Phieq} has similar properties with $\Phi(\cdot)$ of \eqref{Eq:wtphieq}. ${\widetilde{\Phi}}(\cdot)$ and $\Phi(\cdot)$ have same Jacobian matrix at the origin which is linear upper bound of both nonlinear epidemic maps. Analysis of $\Phi(\cdot)$ is modified to analyze $\widetilde{\Phi}(\cdot)$ here. We represent $\widetilde{\Phi}(\cdot)$ using $\Xi(\cdot)$ and $\omega(\cdot)$ as we did in \eqref{Eq:seq3}. We can view 
\begin{equation}
{\widetilde{\Phi}}_i(x) = x_i + (1-x_i) ( \Xi_i(x) - \omega(x_i)) \label{Eq:Phiomegaform}
\end{equation}
where $\displaystyle \Xi_i(x)=\left( 1 - \prod_{j \in N_i} (1-\beta x_j) \right)$ and  $\displaystyle \omega(s) = \frac{\delta s}{1-s}$.
It is trivial to check that ${\Xi}(\cdot)$ and $\omega(\cdot)$ and satisfy all the conditions (a) - (f). Therefore we can apply Theorem~\ref{Thm:existence} to show that ${\widetilde{\Phi}}(\cdot)$ has a unique nontrivial fixed point if the largest eigenvalue of the Jacobian matrix at the origin is greater than $1$.

The origin, the trivial fixed point of the system is globally stable if $\lambda_{max}((1-\delta)I_n + \beta A ) < 1 $. The next issue is whether the unique nontrivial fixed point is also stable if $\lambda_{max}((1-\delta)I_n + \beta A ) > 1 $. This is not true in general for ${\widetilde{\Phi}}(\cdot)$. The following is an example of an unstable nontrivial fixed point.
\begin{equation}
\mathbf{A} = \left(
\begin{array}{ccc}
0 & 1 & 1 \\
1 & 0 & 0 \\
1 & 0 & 0
\end{array} \right) \qquad \delta=0.9 \quad \beta=0.9 \label{Eq:unstable}
\end{equation}

The nontrivial fixed point of the system above is $x^*= (0.286, 0.222, 0.222)^\mathrm{T}$. The Jacobian matrix of ${\widetilde{\Phi}}$ at $x^*$ is
\begin{equation} 
J_{{\widetilde{\Phi}}(x^*)} = \left(
\begin{array}{ccc}
-0.260 & 0.514 & 0.514 \\
0.700 & -0.157 & 0 \\
0.700 & 0 & -0.157
\end{array} \right)
\end{equation}

The eigenvalue with largest absolute value in the above Jacobian matrix is $-1.059$ whose absolute value is greater than 1. However, $P(t)={\widetilde{\Phi}}^t(P(0))$ converges to a cycle rather than a nontrivial fixed point $x^*$.

The biggest difference between \eqref{Eq:Phieq} and \eqref{Eq:wtphieq} is that $\displaystyle \frac{ \partial \Phi_i }{ \partial x_j} \geq 0$ for any $i,j \in \{1,\cdots, n\}$ in \eqref{Eq:wtphieq}, while it does not hold for ${\widetilde{\Phi}}(\cdot)$ in \eqref{Eq:Phieq}. The proof of Theorem~\ref{Thm:GSofNFP} can be applied to ${\widetilde{\Phi}}(\cdot)$ if $\displaystyle \frac{ \partial {\widetilde{\Phi}}_i }{ \partial x_j} \geq 0$ for any $i,j \in \{1,\cdots, n\}$ in \eqref{Eq:Phieq}.

Even though the nontrivial fixed point of ${\widetilde{\Phi}}(\cdot)$ is not stable generally, we shall show that it is stable with high probability for a family of random graphs. To study the stability of the nontrivial fixed point with high probability, we will begin with the following lemma that demonstrates that the Jacobian matrix at $x^*$ has no eigenvalue greater than or equal to unity for any values of $\beta$ and $\delta$ and for any connected graph.

\begin{lemma}\label{Lem:jacob}
Suppose that $x^*$ is a unique nontrivial fixed point of ${\widetilde{\Phi}} : [0,1]^n \to [0,1]^n$ with $\Xi$ satisfying the conditions (a),(b) and (c) when $\lambda_{max}((1-\delta)I_n + \beta A ) > 1 $. 
Then the Jacobian matrix of ${\widetilde{\Phi}}$ at $x^*$ has no eigenvalue of greater than or equal to $1$.
\end{lemma}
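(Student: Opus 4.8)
The plan is to prove something slightly stronger than the statement, namely that every eigenvalue of the Jacobian $J := J_{{\widetilde{\Phi}}}(x^*)$ has real part strictly less than $1$; this immediately rules out any eigenvalue equal to a real number $\geq 1$. First I would compute $J$ in closed form. Writing ${\widetilde{\Phi}}_i(x) = x_i + (1-x_i)\Psi_i(x)$ with $\Psi_i = \Xi_i - \omega(x_i)$ and using the fixed-point identity $\Psi_i(x^*) = 0$ (valid because $0 \prec x^* \prec 1_n$, which follows from the connectedness of $G$ together with the form of $\omega$, as in the proof of Theorem~\ref{Thm:existence}), the undifferentiated $\Psi_i$-term vanishes and one obtains
\[
J - I = D\, J_\Psi(x^*), \qquad D = \operatorname{diag}(1-x_1^*, \dots, 1-x_n^*) \succ 0,
\]
where $J_\Psi(x^*)$ has entries $\frac{\partial \Xi_i}{\partial x_j}(x^*) - \omega'(x_i^*)\,\delta_{ij}$.

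Next I would record the sign structure of $J - I$. By (b) the off-diagonal entries $\frac{\partial \Xi_i}{\partial x_j}(x^*) \geq 0$, while $\frac{\partial \Xi_i}{\partial x_i} = 0$ since $i \notin N_i$, so the diagonal of $J_\Psi$ equals $-\omega'(x_i^*) < 0$ by (e). Left-multiplying by the positive diagonal $D$ preserves these signs, so $J - I$ is a Metzler matrix (nonnegative off-diagonal entries, negative diagonal); its off-diagonal pattern matches that of $A$, hence is irreducible, though irreducibility will not actually be needed below.

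The heart of the argument is to exhibit a strictly positive vector that $J - I$ maps into the negative orthant, and I expect this to be the one delicate step. I would take $v = x^* \succ 0$ and show $(J-I)x^* \prec 0$. Component-wise, $((J-I)x^*)_i = (1-x_i^*)\big(\nabla\Xi_i(x^*)\cdot x^* - \omega'(x_i^*)x_i^*\big)$, where $\nabla\Xi_i(x^*)\cdot x^* = \sum_j \frac{\partial \Xi_i}{\partial x_j}(x^*)\,x_j^*$. The key inequality is $\nabla\Xi_i(x^*)\cdot x^* \leq \Xi_i(x^*)$. Note that (c) does \emph{not} make $\Xi_i$ concave in the Hessian sense, so I would instead invoke Lemma~\ref{Lem:ccv} with $u = 0$ and $v = x^*$: it says $s \mapsto \Xi_i(sx^*)/s$ is decreasing (using $\Xi_i(0)=0$ from (a)), and differentiating this ratio and evaluating at $s = 1$ yields exactly $\nabla\Xi_i(x^*)\cdot x^* \leq \Xi_i(x^*)$. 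Combining with the fixed-point relation $\Xi_i(x^*) = \omega(x_i^*)$ and the strict super-linearity of $\omega$ (i.e. $\omega'(s)\,s > \omega(s)$ on $(0,1)$, which holds for $\omega(s)=\delta s/(1-s)$ and more generally reflects the strict monotonicity of $\omega(s)/s$ in (f)), I get $\nabla\Xi_i(x^*)\cdot x^* - \omega'(x_i^*)x_i^* \leq \omega(x_i^*) - \omega'(x_i^*)x_i^* < 0$ for every $i$. Since $1-x_i^* > 0$, this gives $(J-I)x^* \prec 0$.

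Finally I would convert this into the spectral conclusion by standard Metzler/Perron--Frobenius bookkeeping. Choosing $c > 0$ large enough that $C := (J-I) + cI \geq 0$ entrywise, we have $Cx^* = (J-I)x^* + c x^* \prec c x^*$. The Collatz--Wielandt upper bound $\rho(C) \leq \max_i (Cx^*)_i / x_i^*$ (valid for any nonnegative matrix) then forces $\rho(C) < c$, since each ratio is strictly below $c$ and there are finitely many. Hence every eigenvalue $\mu$ of $C$ obeys $\operatorname{Re}\mu \leq |\mu| \leq \rho(C) < c$, so every eigenvalue of $J - I = C - cI$ has negative real part, and every eigenvalue of $J$ has real part $< 1$. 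In particular $J$ has no eigenvalue equal to a real number $\geq 1$, which is the claim; this is consistent with the worked example, whose Jacobian eigenvalue $-1.059$ has real part below $1$ despite magnitude exceeding $1$.
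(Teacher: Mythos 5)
Your argument is correct and, importantly, it is self-contained, whereas the paper itself does not prove Lemma~\ref{Lem:jacob} at all --- it only points to pages 64--66 of the cited thesis --- so there is no in-paper proof to compare against. The three load-bearing steps all check out: (i) the factorization $J-I=\operatorname{diag}(1-x_i^*)\,\bigl(J_\Xi(x^*)-\operatorname{diag}(\omega'(x_i^*))\bigr)$ is valid precisely because $\Psi_i(x^*)=0$ kills the undifferentiated term, and the needed bound $x^*\prec 1_n$ is forced by $\omega(x_i^*)=\Xi_i(x^*)\le 1$ with $\omega(s)=\delta s/(1-s)\to\infty$ as $s\to 1$; (ii) the gradient inequality $\nabla\Xi_i(x^*)\cdot x^*\le\Xi_i(x^*)$ does follow from Lemma~\ref{Lem:ccv} with $u=0_n$, $v=x^*$, since $g(s)=\Xi_i(sx^*)$ is concave with $g(0)=0$ and hence $g'(1)\le g(1)$; and (iii) the Collatz--Wielandt bound $\rho(C)\le\max_i (Cx^*)_i/x_i^*$ holds for any nonnegative $C$ and strictly positive test vector, so irreducibility is indeed not needed. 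The one place where you are right to hedge is the strict super-linearity $\omega'(s)s>\omega(s)$: property (f) alone only yields the non-strict inequality where $\omega(s)/s$ is differentiable, so the strictness genuinely comes from the explicit form $\omega(s)=\delta s/(1-s)$ (for which $\omega'(s)s-\omega(s)=\delta s^2/(1-s)^2>0$ on $(0,1)$, and $x_i^*>0$ by connectedness as in Theorem~\ref{Thm:existence}); since the lemma concerns $\widetilde{\Phi}$ with that specific $\omega$, this is legitimate. Your conclusion that every eigenvalue of $J$ has real part strictly less than $1$ is stronger than the stated claim and is consistent with the paper's worked example, where the offending eigenvalue $-1.059$ exceeds $1$ only in modulus.
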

For the proof see pages 64--66 of \cite{ahn2014random}.

Even though $J_{\widetilde{\Phi}}$ has no eigenvalue which is greater than or equal to $1$, the fixed point $x^*$ still has a chance to be unstable if there is an eigenvalue which is greater than or equal to 1 in absolute value. We now show that $x^*$ is stable with high probability when we consider a certain family of random graphs and the number of vertices is large. We will later show that this family of random graphs includes Erd\"os-R\'enyi graphs.

We fix $\Xi_i(x)=\left( 1 - \prod_{j \in N_i} (1-\beta x_j) \right)$ from now on.
\begin{equation}
 \frac{ \partial \Xi_i }{ \partial x_j} = \beta \prod_{k \in N_i\setminus \{j\}} (1-\beta x_k) = \beta \frac{ 1 - \Xi_i}{1-\beta x_j}  \text{ if } i \in N_j\text{ in }G
\end{equation}
\begin{equation}
J_\Xi = \beta \,\text{diag}(1_n - \Xi) A \,\text{diag}(1_n - \beta x)^{-1}
\end{equation}

\begin{theorem}
Suppose that $G^{(n)}$ is a random graph with $n$ vertices and let 
$d^{(n)}_{\min}$ and $d^{(n)}_{\max}$ denote the minimum and maximum degree of $G^{(n)}$. 
If $\mathrm{Pr} [(d^{(n)}_{\min})^2 > a \cdot d^{(n)}_{\max} ]$ goes to $1$ as $n$ goes to infinity for any fixed $a > 0$, 
then the system is unstable at the origin and locally stable at the nontrivial fixed point $x^*$ with high probability as $n$ grows, for any fixed $\beta$ and $\delta$.
\label{lem:rg}
\end{theorem}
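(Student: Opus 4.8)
The plan is to reduce both assertions to estimates on the degrees and then to control the Jacobian $J := J_{\widetilde{\Phi}}(x^*)$ directly through its induced $\infty$-norm. First I would record the structure of $J$. Differentiating \eqref{Eq:Phieq} and using $i\notin N_i$ gives $\partial\widetilde{\Phi}_i/\partial x_i = (1-\delta)-\Xi_i(x^*)$ and $\partial\widetilde{\Phi}_i/\partial x_j = (1-x^*_i)\,\partial\Xi_i/\partial x_j\ge 0$ for $j\ne i$, so that $J = \mathrm{diag}\!\big(1-\tfrac{\delta}{1-x^*_i}\big) + \mathrm{diag}(1_n-x^*)J_\Xi$, where the fixed-point identity $\Xi_i(x^*)=\omega(x^*_i)=\tfrac{\delta x^*_i}{1-x^*_i}$ from \eqref{Eq:Phiomegaform} was used to rewrite the diagonal. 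Since $\|J\|_\infty\ge\rho(J)$ for the maximum-absolute-row-sum norm, it suffices to show $\|J\|_\infty<1$ with high probability; this automatically rules out real eigenvalues $\le-1$ and complex eigenvalues of modulus $\ge1$, which is exactly the gap left open by Lemma~\ref{Lem:jacob}.

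The crux — and the step I expect to be the main obstacle — is a uniform positive lower bound on the coordinates of $x^*$. Let $m=\min_i x^*_i$, attained at some node $i_0$; from $1-\beta x^*_j\le 1-\beta m$ and $d_{i_0}\ge d_{\min}^{(n)}$ the fixed-point equation gives $\tfrac{\delta m}{1-m}=\Xi_{i_0}(x^*)\ge 1-(1-\beta m)^{d_{\min}^{(n)}}$, while $\Xi_{i_0}(x^*)<1$ forces $x^*_i<\tfrac1{1+\delta}$ for every $i$. A short case analysis on the product $m\,d_{\min}^{(n)}$ shows that $m\to0$ is impossible once $d_{\min}^{(n)}\to\infty$: each of the regimes $m\,d_{\min}^{(n)}\to\infty$, $\to c\in(0,\infty)$, and $\to0$ contradicts the displayed inequality (the last because it would force $\delta\gtrsim\beta d_{\min}^{(n)}$). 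Hence $\liminf m>0$, and feeding this back yields $\tfrac{\delta m}{1-m}\to1$, i.e. $m\to\tfrac1{1+\delta}$; combined with the upper bound, all coordinates satisfy $x^*_i\to\tfrac1{1+\delta}$ uniformly. In particular the diagonal entries $1-\tfrac{\delta}{1-x^*_i}\to-\delta$, so their absolute values converge to $\delta<1$ uniformly in $i$.

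Next I would bound the off-diagonal mass in each row. Writing $J_{ij}=(1-x^*_i)\beta\prod_{k\in N_i\setminus\{j\}}(1-\beta x^*_k)$ and using the uniform lower bound $x^*_k\ge x_{\mathrm{low}}$ (a fixed positive constant for large $n$, since $x^*_k\to\tfrac1{1+\delta}$), each such entry is at most $\beta(1-\beta x_{\mathrm{low}})^{d_{\min}^{(n)}-1}$, so the off-diagonal row sum is at most $\beta\,d_{\max}^{(n)}(1-\beta x_{\mathrm{low}})^{d_{\min}^{(n)}-1}$. This tends to $0$ precisely when $\log d_{\max}^{(n)}=o(d_{\min}^{(n)})$, and here is where the hypothesis enters: ``$\mathrm{Pr}[(d_{\min}^{(n)})^2>a\,d_{\max}^{(n)}]\to1$ for every fixed $a$'' is equivalent to $(d_{\min}^{(n)})^2/d_{\max}^{(n)}\to\infty$ in probability, which gives both $d_{\min}^{(n)}\to\infty$ (as $(d_{\min}^{(n)})^2/d_{\max}^{(n)}\le d_{\min}^{(n)}$) and, eventually, $\log d_{\max}^{(n)}\le 2\log d_{\min}^{(n)}=o(d_{\min}^{(n)})$. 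Consequently $\beta\,d_{\max}^{(n)}(1-\beta x_{\mathrm{low}})^{d_{\min}^{(n)}-1}\to0$ in probability, and with the previous paragraph $\|J\|_\infty\to\delta<1$ in probability, giving $\rho(J)<1$ and hence local stability of $x^*$ with high probability.

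Finally, instability at the origin is the easy direction: $\lambda_{\max}(A)\ge d_{\mathrm{avg}}\ge d_{\min}^{(n)}\to\infty$ in probability, so for large $n$ one has $\beta\lambda_{\max}(A)>\delta$, i.e. $\lambda_{\max}((1-\delta)I_n+\beta A)>1$, with high probability; this both makes the origin an unstable fixed point and, via Theorem~\ref{Thm:existence} applied to the immune-admitting pair $\Xi,\omega$, guarantees that the unique nontrivial $x^*$ we analyzed actually exists. To package everything probabilistically I would fix a target such as $\|J\|_\infty<\tfrac{1+\delta}{2}$ and note that all the estimates above are deterministic consequences of the two degree events $\{d_{\min}^{(n)}>K\}$ and $\{(d_{\min}^{(n)})^2>a\,d_{\max}^{(n)}\}$, each of probability tending to $1$ for every fixed $K,a$; intersecting them and letting the thresholds grow slowly with $n$ yields the claim with high probability. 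The only genuinely delicate points are the bootstrap lower bound on $x^*$ in the second paragraph and the translation of the ``for every fixed $a$'' hypothesis into the in-probability statement $(d_{\min}^{(n)})^2/d_{\max}^{(n)}\to\infty$; the remaining estimates are routine.
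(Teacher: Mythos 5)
The paper defers this proof to pages 66--68 of the cited thesis, so there is no in-text argument to compare against line by line; judged on its own terms, your proposal is correct and essentially complete. The two load-bearing steps both check out: the bootstrap $\min_i x^*_i \to \frac{1}{1+\delta}$ from the inequality $\frac{\delta m}{1-m}\ge 1-(1-\beta m)^{d^{(n)}_{\min}}$ together with the a priori bound $x^*_i<\frac{1}{1+\delta}$, and the row-sum estimate $\|J\|_\infty \le \bigl|1-\tfrac{\delta}{1-x^*_i}\bigr| + \beta d^{(n)}_{\max}(1-\beta x_{\mathrm{low}})^{d^{(n)}_{\min}-1} \to \delta <1$, where the hypothesis enters exactly as you say (it yields $d^{(n)}_{\min}\to\infty$ and $\log d^{(n)}_{\max}=o(d^{(n)}_{\min})$ in probability). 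Structurally your route is slightly leaner than the paper's: the paper first proves Lemma~\ref{Lem:jacob} (no eigenvalue $\ge 1$, for \emph{any} connected graph) and then uses the degree condition to exclude the remaining eigenvalue locations, whereas your $\ell^\infty$ bound $\rho(J)\le\|J\|_\infty<1$ handles all eigenvalues at once and makes Lemma~\ref{Lem:jacob} redundant in the high-probability regime; the paper's set-up of $J_\Xi=\beta\,\mathrm{diag}(1_n-\Xi)A\,\mathrm{diag}(1_n-\beta x)^{-1}$ just before the theorem suggests its argument is a close cousin of your row-sum bound, so the gain is mainly in packaging. Two small points worth making explicit if you write this up: (i) the sequential limit arguments ($m\to\frac{1}{1+\delta}$ along any sequence with $d_{\min}\to\infty$) should be upgraded to the uniform statement ``for every $\epsilon>0$ there is $K$ such that $d_{\min}>K$ implies $m>\frac{1}{1+\delta}-\epsilon$ for every admissible graph,'' which follows by the standard contradiction/subsequence device you gesture at; and (ii) existence and uniqueness of $x^*$ via Theorem~\ref{Thm:existence} requires connectivity of $G^{(n)}$, which the degree hypothesis does not guarantee — this is an implicit assumption inherited from the theorem statement itself rather than a defect of your argument, but it should be flagged.
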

For the proof see pages 66--68 of \cite{ahn2014random}.

\begin{figure*}[thpb]
  \centering
    \includegraphics[width=1.4\columnwidth]{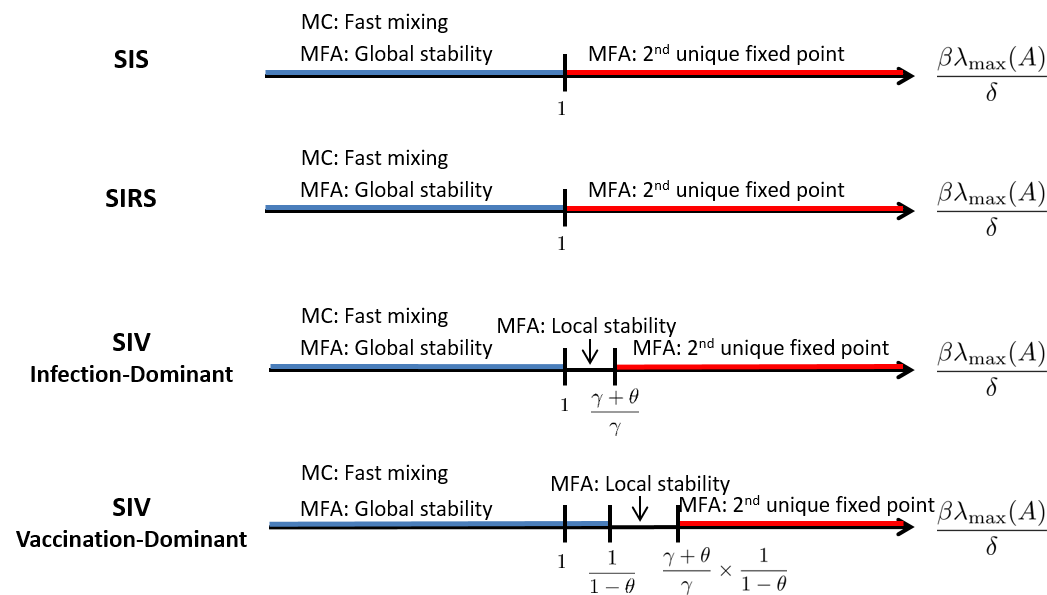}
      \caption{Summary and comparison of the results for SIS, SIRS, and SIV models, as a function of $\frac{\beta\lambda_{max}(A)}{\delta}$. MC stands for the Markov chain model. MFA stands for the mean-field approximation, aka the nonlinear model.}
      \label{fig:comp}
\end{figure*}

We can think of several random graph models that satisfy the condition of Theorem \ref{lem:rg}. For example, if the random graph has uniform degree then the minimum degree and maximum degree are identical and as long as the degree grows with $n$, the ratio $\displaystyle \frac{d_{\min}^2}{d_{\max}} = d$ will grow with any $n$ and exceed $a$ with high probability. Similarly,for random graphs where the degree distribution of each node is identical and the degree distribution "concentrates", so that 
we can expect that the maximum degree and the minimum degree are proportional to the expected of degree, in which case
$\displaystyle \frac{d_{\min}^2}{d_{\max}}$ grows if the expected degree increases unbounded with $n$.
The Erd\"os-R\'enyi random graph, $G^{(n)}=G(n,p(n))$ has identical degree distribution.
\begin{corollary}
Consider an Erd\"os-R\'enyi random graph $G^{(n)}=G(n,p(n))$ with $\displaystyle p(n)= c \frac{\log n}{n}$ where $c > 1$ is a constant. Then 
$\widetilde{\Phi}(\cdot)$ is locally unstable at the origin and has a locally stable nontrivial fixed point with high probability for any fixed $\beta$ and $\delta$.
\end{corollary}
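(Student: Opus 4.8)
The plan is to derive the corollary directly from Theorem~\ref{lem:rg} by verifying its sole hypothesis: that $\Pr[(d^{(n)}_{\min})^2 > a\, d^{(n)}_{\max}] \to 1$ for every fixed $a > 0$. Since in $G(n,p(n))$ every vertex has degree distributed as a Binomial with parameters $(n-1, p(n))$ and mean $\mu := (n-1)p(n) = (1-o(1))\,c\log n$, it suffices to show that with high probability both $d^{(n)}_{\min}$ and $d^{(n)}_{\max}$ are of order $\log n$; then the ratio $(d^{(n)}_{\min})^2/d^{(n)}_{\max}$ grows like $\log n$ and eventually exceeds any fixed $a$. I would phrase everything through first-moment (union-bound) estimates on the individual degree tails, so that the dependence between vertex degrees never needs to be addressed.

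For the lower bound on $d^{(n)}_{\min}$, let $N_{\le k}$ denote the number of vertices of degree at most $k$. By a Chernoff bound for the Binomial lower tail (equivalently the Poisson large-deviation rate), for $k = \alpha\mu$ with $\alpha \in (0,1)$ one has $\Pr[\deg(v) \le \alpha\mu] \le \exp(-\mu\, g(\alpha))$, where $g(\alpha) = 1 - \alpha + \alpha\ln\alpha$ is continuous, strictly decreasing on $(0,1)$, with $g(0^+)=1$ and $g(1)=0$. Hence $\mathbb{E}[N_{\le \alpha\mu}] \le n\exp(-\mu\, g(\alpha)) = n^{\,1 - c\,g(\alpha) + o(1)}$, where the polylogarithmic Stirling prefactors are absorbed into the $o(1)$ in the exponent. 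Because $c>1$ we have $1/c < 1 = g(0^+)$, so there is some $\alpha^\ast \in (0,1)$ with $g(\alpha^\ast) > 1/c$; for this choice the exponent is negative, $\mathbb{E}[N_{\le \alpha^\ast\mu}] \to 0$, and Markov's inequality gives $d^{(n)}_{\min} > \alpha^\ast\mu = \Omega(\log n)$ with high probability.

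A symmetric first-moment argument controls $d^{(n)}_{\max}$: using the Binomial upper tail $\Pr[\deg(v) \ge \beta\mu] \le \exp(-\mu\, h(\beta))$ with $h(\beta) = \beta\ln\beta - \beta + 1 > 0$ for $\beta > 1$, the expected number of vertices of degree at least $\beta\mu$ is $n^{\,1 - c\,h(\beta)+o(1)}$, which tends to $0$ once $\beta$ is taken large enough that $h(\beta) > 1/c$. Thus $d^{(n)}_{\max} \le \beta^\ast\mu = O(\log n)$ with high probability. Combining the two bounds, with high probability $(d^{(n)}_{\min})^2/d^{(n)}_{\max} \ge (\alpha^\ast)^2\mu^2/(\beta^\ast\mu) = \Theta(\log n) \to \infty$, so the hypothesis of Theorem~\ref{lem:rg} holds for every fixed $a$, and the stated conclusion (instability at the origin and local stability of the nontrivial fixed point with high probability) follows immediately.

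The crux is the lower bound on the minimum degree, and it is exactly here that the assumption $c>1$ is indispensable: $p(n) = c\log n/n$ with $c>1$ lies above the connectivity threshold $\log n/n$, which is what guarantees $1/c < 1$ and hence the existence of a positive $\alpha^\ast$ with $g(\alpha^\ast)>1/c$. For $c \le 1$ the graph has isolated vertices with high probability, so $d^{(n)}_{\min} = 0$ and the ratio $(d^{(n)}_{\min})^2/d^{(n)}_{\max}$ cannot be forced to infinity. Thus the argument genuinely relies on the supercritical regime and not on any finer structure of Erd\H{o}s--R\'enyi graphs beyond the degree concentration provided by the Chernoff tails.
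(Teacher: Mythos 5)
Your proof is correct and follows essentially the same route the paper intends (and defers to \cite{ahn2014random}): verify the hypothesis of Theorem~\ref{lem:rg} by showing, via Chernoff bounds on the Binomial$(n-1,p(n))$ degrees and a union bound, that $d^{(n)}_{\min}$ and $d^{(n)}_{\max}$ are both $\Theta(\log n)$ with high probability, so that $(d^{(n)}_{\min})^2/d^{(n)}_{\max}\to\infty$. Your identification of $c>1$ as exactly what makes $g(\alpha^\ast)>1/c$ achievable for the minimum-degree bound matches the paper's remark that $c=1$ is the connectivity threshold.
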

For the proof see page 69 of \cite{ahn2014random}.

Since $\displaystyle p= c \frac{\log n}{n}$ for $c=1$ is also the threshold for connectivity,
we can say that connected Erd\"os-R\'enyi graphs have a nontrivial stable fixed point with high probability.

The random geometric graph $G^{(n)}=G(n,r(n))$ also has identical degree distribution if each node is distributed uniformly. As studied in \cite{PenroseRandomGeometricGraph}, such random graphs have maximum and minimum degree which are proportional to the expected degree with high probability if $r(n)$ is smaller than the threshold of connectivity. Like Erd\"os-R\'enyi graphs, it has high probability of having a nontrivial stable fixed point if the degree grows with $n$.

\section{SIRS Epidemics}
In this section we consider the SIRS model in which each node can be in one of three states of S, I and R. During each time epoch, nodes in the susceptible state can be infected by their infected neighbors according to independent events with probability $\beta$ (the {\em infection rate}) each. Nodes that are infected, during each such time epoch can recover with probability $\delta$ (the {\em recovery rate}) and, finally, nodes in the recovered state can randomly transition to the susceptible state with probability $\gamma$ ({\em immunization loss}).
\subsection{Model Description}
\subsubsection{Exact Markov Chain Model}
We start again with the exact Markov chain model. The state of node $i$ at time $t$, denoted by $\xi_i(t)$, can take one of the following values: $0$ for \emph{Susceptible} (or healthy), $1$ for \emph{Infected} (or Infectious), and $2$ for \emph{Recovered}. i.e. $\xi_i(t) \in \left\{0,1,2\right\}$. Fig. \ref{fig} shows the three states and the corresponding transitions. $\beta$ is the transmission probability on each link, $\delta$ is the healing probability, and $\gamma$ is the immunization loss probability.

\begin{figure}[htpb]
  \centering
    \includegraphics[width=0.7\columnwidth]{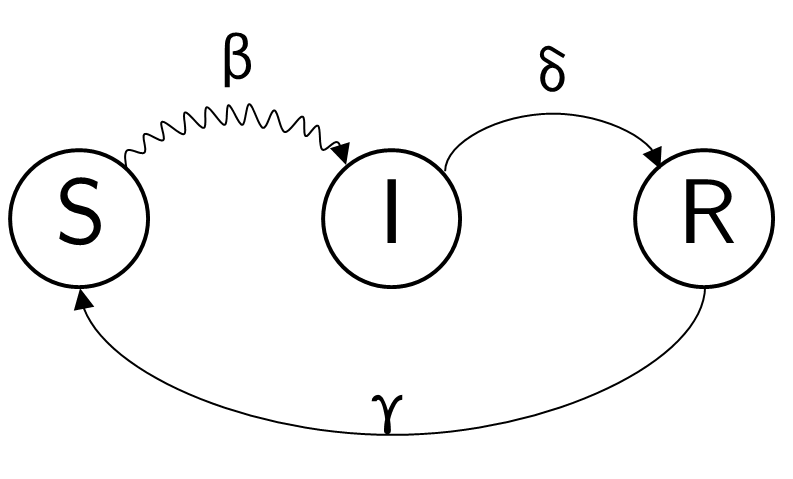}
      \caption{State diagram of a single node in the SIRS model, and the transition rates. Wavy arrow represents exogenous (network-based) transition. $S$ is healthy but can get infected, $I$ is infected, $R$ is healthy but cannot get infected.}
      \label{fig}
\end{figure}

The state of the whole network can be represented as:
\begin{equation}
\xi(t)=(\xi_i(t),\dots,\xi_n(t)) \in \left\{0,1,2\right\}^n
\end{equation}
Furthermore, let $S$ denote the $3^n\times 3^n$ state transition matrix of the Markov chain, with elements of the form:
\begin{align}\label{MC1}
S_{X,Y}&= \mathbb{P}\left(\xi(t+1)=Y \mid \xi(t) = X\right)\notag\\
 &= \prod_{i=1}^n \mathbb{P}\left(\xi_i(t+1)=Y_i \mid \xi(t) = X\right) ,
\end{align}
due to the independence of the next states given the current state.
\begin{empheq}[box=\fbox]{multline}\label{MC2}
\mathbb{P}\left(\xi_i(t+1)=Y_i \mid \xi(t) = X\right)=\\
\begin{cases}
(1-\beta)^{m_i},& \text{if } (X_i,Y_i)=(0,0)\\
1-(1-\beta)^{m_i},& \text{if } (X_i,Y_i)=(0,1)\\
0,& \text{if } (X_i,Y_i)=(0,2)\\
0,& \text{if } (X_i,Y_i)=(1,0)\\
1-\delta,& \text{if } (X_i,Y_i)=(1,1)\\
\delta,& \text{if } (X_i,Y_i)=(1,2)\\
\gamma,& \text{if } (X_i,Y_i)=(2,0)\\
0,& \text{if } (X_i,Y_i)=(2,1)\\
1-\gamma,& \text{if } (X_i,Y_i)=(2,2)\\
\end{cases} ,
\end{empheq}
where $m_i=\left\vert{\left\{ {j\in N_i} \mid X_j=1\right\}}\right\vert=\left\vert{N_i\cap I(t)}\right\vert$. The set of susceptible, infected, and recovered nodes at time $t$ are denoted as $S(t)$, $I(t)$, and $R(t)$ respectively.

We state the marginal probability of the nodes as $p_{R,i}(t)$ and $p_{I,i}(t)$, for the probability that \emph{node $i$ is in state $R$ at time $t$} and the probability that \emph{node $i$ is in state $I$ at time $t$}, respectively. Then $p_{S,i}(t)$ follows immediately as $1-p_{R,i}(t)-p_{I,i}(t)$. Based on the above-mentioned transition rates, we can calculate the two marginal probabilities as:
\begin{align}
p&_{R,i}(t+1) =(1-\gamma)p_{R,i}(t)+\delta p_{I,i}(t) ,\label{exact_R}\\
p&_{I,i}(t+1) =(1-\delta)p_{I,i}(t)\notag\\
&+\mathbb{E}_{ |\xi_i(t)=0}\bigg[1-\prod_{j\in N_i}(1-\beta\mathbbm{1}_{\xi_j(t)=1})\bigg](1-p_{R,i}(t)-p_{I,i}(t)) ,\label{exact_I}
\end{align}
As mentioned, the recursion for $p_{S,i}(t+1)$ can be found from $p_{S,i}(t)+p_{I,i}(t)+p_{R,i}(t)=1$.
\subsubsection{Nonlinear Model}
One may consider the mean-field approximation of the above marginal probabilities, which can be expressed as:
\begin{empheq}[box=\fbox]{align}
P&_{R,i}(t+1) =(1-\gamma)P_{R,i}(t)+\delta P_{I,i}(t) ,\label{nonlinear_R}\\
P&_{I,i}(t+1) =(1-\delta)P_{I,i}(t)+\notag\\
&\Big(1-\prod_{j\in N_i} (1-\beta P_{I,j}(t))\Big)(1-P_{R,i}(t)-P_{I,i}(t)) ,\label{nonlinear_I}
\end{empheq}
This approximate model is in fact a nonlinear mapping with $2n$ states (rather than $3^n$ states).

\subsubsection{Linear Model}
One step further would be to approximate the preceding equations by a linear model. Linearizing Eqs. \eqref{nonlinear_R} and \eqref{nonlinear_I} around the origin results in the following mapping:
\begin{align}
\tilde{P}_{R,i}(t+1) &=(1-\gamma)\tilde{P}_{R,i}(t)+\delta \tilde{P}_{I,i}(t) ,\\
\tilde{P}_{I,i}(t+1) &=(1-\delta)\tilde{P}_{I,i}(t)+\beta\sum\limits_{j\in N_i} \tilde{P}_{I,j} .
\end{align}
These equations (for all $i$) can be expressed in a matrix form:
\begin{empheq}[box=\fbox]{align}\label{linear}
&\begin{bmatrix}\tilde{P}_R(t+1)\\\tilde{P}_I(t+1)\end{bmatrix}=M \begin{bmatrix}\tilde{P}_R(t)\\\tilde{P}_I(t)\end{bmatrix} ,\\
&\text{where}\notag\\
&M = \begin{bmatrix}
(1-\gamma)I_n & \delta I_n\\
0_{n\times n} & (1-\delta)I_n+\beta A
\end{bmatrix} .
\end{empheq}

\subsection{Analysis of the Nonlinear Model}
\subsubsection{Epidemic Extinction: $\frac{\beta\lambda_{\max}(A)}{\delta}<1$}
The origin is trivially a fixed point of both the linear (Eq. \ref{linear}) and nonlinear (Eqs. \ref{nonlinear_R} and \ref{nonlinear_I}) mappings. In fact, at this fixed point we have:
$$[P_{R,1}(t),  \dots, P_{R,n}(t), P_{I,1}(t), \dots, P_{I,n}(t)]^T= 0_{2n} ,$$
which means all the nodes are susceptible (healthy) with probability 1, and the system stays there permanently, because there are no infected nodes anymore.

Clearly, if $\|M\|<1$, then the origin is globally stable for the linear model (\ref{linear}) and also locally stable for the nonlinear model (\ref{nonlinear_I}, \ref{nonlinear_R}). The eigenvalues of $M$ matrix consist of the eigenvalues of $(1-\gamma)I_n$ and the eigenvalues of $(1-\delta)I_n+\beta A$. Noticing that the eigenvalues of $(1-\gamma)I_n$ are always less than one, it can be concluded that $\|M\|<1$ if the largest eigenvalue of $(1-\delta)I_n+\beta A$ is less than one.

In addition, the linear model (\ref{linear}) is an upper bound on the nonlinear model (\ref{nonlinear_R}, \ref{nonlinear_I}), i.e.
\begin{multline}
P_{I,i}(t+1) =(1-\delta)P_{I,i}(t)\\
+\Big(1-\prod_{j\in N_i} (1-\beta P_{I,j}(t))\Big)(1-P_{R,i}(t)-P_{I,i}(t))\\
\leq (1-\delta)P_{I,i}(t)+\beta\sum\limits_{j\in N_i} P_{I,j} ,
\end{multline}
This concludes the following.
\begin{proposition}
If $\frac{\beta\lambda_{\max}(A)}{\delta}<1$, then the origin is a globally stable fixed point for both linear model (\ref{linear}) and nonlinear model (\ref{nonlinear_R}, \ref{nonlinear_I}).
\end{proposition}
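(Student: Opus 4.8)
The plan is to build directly on two facts already in hand from the excerpt: that the condition $\frac{\beta\lambda_{\max}(A)}{\delta}<1$ forces the spectral radius of $M$ to be strictly below one, and that the linear model dominates the nonlinear one componentwise through the inequality $P_{I,i}(t+1)\leq (1-\delta)P_{I,i}(t)+\beta\sum_{j\in N_i}P_{I,j}(t)$. The linear statement is essentially immediate, so the real content is upgrading the \emph{local} stability of the nonlinear model (already noted via its Jacobian $M$ at the origin) to \emph{global} stability.

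\textbf{Linear model.} Because $M$ in \eqref{linear} is block upper-triangular, its spectrum is the union of the eigenvalues of $(1-\gamma)I_n$ (all equal to $1-\gamma<1$) and those of $(1-\delta)I_n+\beta A$. Since $A$ is symmetric, $\lambda_{\max}((1-\delta)I_n+\beta A)=1-\delta+\beta\lambda_{\max}(A)$, and by Perron--Frobenius $|\lambda_{\min}(A)|\leq\lambda_{\max}(A)$, so every eigenvalue of $(1-\delta)I_n+\beta A$ lies in $(-1,1)$ once $\beta\lambda_{\max}(A)<\delta$. Hence $\rho(M)<1$, which gives $M^t\to 0$ and therefore $\tilde{P}(t)=M^t\tilde{P}(0)\to 0_{2n}$ from every initial vector; this is global stability for \eqref{linear}.

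\textbf{Nonlinear model.} Fix an arbitrary admissible initial condition $P(0)=\tilde{P}(0)$ with $P_{R,i}(0),P_{I,i}(0)\geq 0$ and $P_{R,i}(0)+P_{I,i}(0)\leq 1$, and run the nonlinear trajectory $P(t)$ of \eqref{nonlinear_R}--\eqref{nonlinear_I} and the linear trajectory $\tilde{P}(t)$ of \eqref{linear} from the same start. I would first record that the nonlinear map keeps $P(t)$ in this admissible region, which is a routine check on the signs of the terms in \eqref{nonlinear_R} and \eqref{nonlinear_I}; in particular $P(t)\succeq 0_{2n}$ for all $t$. The core step is to show by induction that $P(t)\preceq\tilde{P}(t)$ componentwise for every $t$. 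The base case is the shared initial condition. For the inductive step, the $R$-component uses the exact identity \eqref{nonlinear_R} with nonnegative coefficients $1-\gamma,\delta$, so $P_{R,i}(t)\leq\tilde{P}_{R,i}(t)$ and $P_{I,i}(t)\leq\tilde{P}_{I,i}(t)$ yield $P_{R,i}(t+1)\leq\tilde{P}_{R,i}(t+1)$; the $I$-component invokes the upper bound already established in the excerpt and then the inductive hypothesis together with nonnegativity of $1-\delta$ and $\beta A$ to bound the right-hand side by $\tilde{P}_{I,i}(t+1)$. This closes the induction, and combining $0_{2n}\preceq P(t)\preceq\tilde{P}(t)$ with $\tilde{P}(t)\to 0_{2n}$ gives $P(t)\to 0_{2n}$ by squeezing.

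The main obstacle is not any single inequality but keeping the domination argument airtight across iterations: one must guarantee the nonnegativity of the nonlinear iterate (so the lower squeeze bound $0_{2n}\preceq P(t)$ is valid) while simultaneously applying the scalar upper bound and the coordinatewise monotonicity of the nonnegative matrix $M$ at every step. Given the structural facts already proved earlier, none of these steps should require substantial computation, and the two halves of the proposition follow together.
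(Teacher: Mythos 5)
Your proof is correct and follows essentially the same route as the paper: the block upper-triangular structure of $M$ gives $\rho(M)<1$ for the linear model, and the componentwise domination $P_{I,i}(t+1)\leq(1-\delta)P_{I,i}(t)+\beta\sum_{j\in N_i}P_{I,j}(t)$ transfers global stability to the nonlinear model. The paper states this domination and immediately concludes; your added induction and the check that the admissible region is invariant (so the squeeze from below by $0_{2n}$ is valid) merely make explicit the details the paper leaves implicit.
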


\subsubsection{Epidemic Spread: $\frac{\beta\lambda_{\max}(A)}{\delta}>1$}
\paragraph{Existence and Uniqueness of Nontrivial Fixed Point}\label{sec:SIRS_secondfixedpoint}
The trivial fixed point of the mappings, the origin, is not stable if $(1-\delta)+\beta\lambda_{\max}(A)>1$. We show that there exists a unique nontrivial fixed point when $(1-\delta)+\beta\lambda_{\max}(A)>1$ for SIRS model.

By rearranging Eq. \eqref{nonlinear_I}, we can rewrite the system equations as:
\begin{empheq}[left=\empheqlbrace]{align}
P_{R,i}(t+1)=&(1-\gamma)P_{R,i}(t)+\delta P_{I,i}(t)\label{nonlinear_R_new}\\
P_{I,i}(t+1)=&P_{I,i}(t)+(1-P_{R,i}(t)-P_{I,i}(t))\notag\\
&\cdot\big(\Xi_i(P_I(t))-\omega(P_{R,i}(t),P_{I,i}(t))\big) ,\label{nonlinear_I_new}
\end{empheq}
where $\Xi_i \colon [0,1]^n \to [0,1]$ and $\omega \colon [0,1]^2 \to \mathbb{R}^+$ are the following maps associated with network $G$:
\begin{equation}
\Xi_i(P_I(t))=1-\prod_{j\in N_i} (1-\beta P_{I,j}(t)) ,
\end{equation}
\begin{equation}
\omega(P_{R,i}(t),P_{I,i}(t))=\frac{\delta P_{I,i}(t)}{1-P_{R,i}(t)-P_{I,i}(t)} .
\end{equation}

It can be verified that the maps defined above, enjoy the following properties:
\begin{enumerate}[label=(\alph*)]
\item $\Xi_i(0_n)=0$\\ $\frac{\partial \Xi_i(P_I)}{\partial P_{I,j}}\bigg|_{0_n}=\beta A_{i,j}$
\item $\begin{cases}\frac{\partial \Xi_i(P_I)}{\partial P_{I,j}}>0 &\mbox{if } i\in N_j\\ \frac{\partial \Xi_i(P_I)}{\partial P_{I,j}}=0 &\mbox{if } i\not\in N_j\end{cases}$
\item $\frac{\partial^2\Xi_i(P_I)}{\partial P_{I,j} \partial P_{I,k}}\leq 0 \quad \forall i,j,k \in \{1,\dots,n\}$
\item $\omega(0,0)=0$\\ $\frac{\partial \omega(P_{R,i},P_{I,i})}{\partial P_{I,i}}\bigg|_{(0,0)}=\delta$
\item $\frac{\partial \omega(P_{R,i},P_{I,i})}{\partial P_{I,i}}>0 \quad \forall P_{I,i}\in(0,1)$
\item $\frac{\omega(P_{R,i},P_{I,i})}{P_{I,i}}$ is an increasing function of both $P_{R,i}$ and $P_{I,i}$. More specifically: $\frac{\omega(s_1,t_1)}{s_1}<\frac{\omega(s_2,t_2)}{s_2}$ if $s_1<s_2$ and $t_1<t_2$.
\end{enumerate}

The main result of this section is as follows.
\begin{theorem}\label{thm:SIRS_nontrivial}
If $\frac{\beta\lambda_{\max}(A)}{\delta}>1$, the nonlinear map (\ref{nonlinear_R}, \ref{nonlinear_I}), or equivalently (\ref{nonlinear_R_new}, \ref{nonlinear_I_new}), has a unique nontrivial fixed point.
\end{theorem}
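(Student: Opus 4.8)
The plan is to collapse the $2n$-dimensional fixed-point problem to an $n$-dimensional one of exactly the type already handled by Theorem~\ref{Thm:existence}. At any fixed point of (\ref{nonlinear_R_new},~\ref{nonlinear_I_new}), the recovery equation \eqref{nonlinear_R_new} forces $\gamma P_{R,i}=\delta P_{I,i}$, i.e.\ $P_{R,i}=\tfrac{\delta}{\gamma}P_{I,i}$, so the recovered coordinates are slaved to $P_I$ and can be eliminated. Writing $x:=P_I$ and using the fixed-point form of \eqref{nonlinear_I} \emph{before} dividing by $P_{S,i}$, namely $\delta x_i=\Xi_i(x)\,(1-P_{R,i}-x_i)$, I would first rule out saturation: if $1-P_{R,i}-x_i=0$ for some $i$, the right-hand side vanishes, forcing $x_i=0$ and hence $P_{R,i}=\tfrac{\delta}{\gamma}x_i=0$, which gives $1-P_{R,i}-x_i=1\neq0$, a contradiction. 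Thus $P_{S,i}=1-(1+\tfrac{\delta}{\gamma})x_i>0$ at every node, and the whole system reduces to the $n$ scalar equations $\Xi_i(x)=\tilde\omega(x_i)$ on the box $[0,\tfrac{\gamma}{\gamma+\delta})^n$, where $\tilde\omega(s):=\omega(\tfrac{\delta}{\gamma}s,s)=\tfrac{\delta s}{1-(1+\delta/\gamma)s}$.

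Next I would verify that $(\Xi,\tilde\omega)$ meet the hypotheses of Theorem~\ref{Thm:existence}. The map $\Xi$ is unchanged from the SIS setting and already obeys (a)--(c). For $\tilde\omega$ a direct computation gives $\tilde\omega(0)=0$, $\tilde\omega'(s)=\tfrac{\delta}{(1-(1+\delta/\gamma)s)^2}>0$ with $\tilde\omega'(0)=\delta$, and $\tilde\omega(s)/s=\tfrac{\delta}{1-(1+\delta/\gamma)s}$ strictly increasing. This last fact is precisely the one-variable shadow of the SIRS property~(f), which was stated for the two-variable $\omega$ exactly because along the constraint $P_R=\tfrac{\delta}{\gamma}P_I$ both arguments increase together with $s$. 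Crucially, $\tilde\omega'(0)=\delta$ matches the SIS recovery slope, so the Jacobian of the reduced map at the origin is again $(1-\delta)I_n+\beta A$, and the threshold/instability condition furnished by Lemma~\ref{lm:v} is verbatim the hypothesis $\tfrac{\beta\lambda_{\max}(A)}{\delta}>1$.

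With these properties established, the existence-and-uniqueness argument of Theorem~\ref{Thm:existence}---which uses the concavity of Lemma~\ref{Lem:ccv} and a Perron direction $v\succ0_n$ from Lemma~\ref{lm:v} to trap a fixed point, and then the strict ratio-monotonicity of $\tilde\omega(s)/s$ to exclude a second one---applies and yields a unique nontrivial zero $x^{\ast}\succ0_n$ of $\Psi_i(x)=\Xi_i(x)-\tilde\omega(x_i)$. Undoing the elimination, the unique nontrivial fixed point of the SIRS map is then $(P_R^{\ast},P_I^{\ast})=(\tfrac{\delta}{\gamma}x^{\ast},\,x^{\ast})$.

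I expect the one genuine subtlety---the only place where this proof departs from a black-box citation---to be the domain of $\tilde\omega$. In the SIS case $\omega$ lives on all of $[0,1]$ with $\omega(1)=1\geq1$, whereas here $\tilde\omega$ is defined only on $[0,\tfrac{\gamma}{\gamma+\delta})$ and blows up to $+\infty$ at the right endpoint. I would therefore re-examine only the boundary step of the proof of Theorem~\ref{Thm:existence} and replace the condition ``$\omega(1)\ge1$'' by the blow-up $\tilde\omega(s)\to+\infty$, which forces $\Psi_i\to-\infty$ and hence traps the zero strictly inside the box even more strongly than before; since $\Xi_i\le1$, no spurious zero can appear near the endpoint. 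The remaining ingredients---the monotone-iteration existence and the concavity-based uniqueness---are insensitive to this truncation, so the main work reduces to confirming that this single boundary estimate survives the smaller domain.
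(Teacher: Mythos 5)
Your proof is correct, and it reaches the result by a somewhat different organization than the paper's. The paper does not eliminate the recovered coordinates up front: it keeps the two-variable $\omega(P_{R,i},P_{I,i})=\frac{\delta P_{I,i}}{1-P_{R,i}-P_{I,i}}$, defines the sets $U_i$ with an existentially quantified auxiliary vector $x_R$ satisfying $0_n\preceq x_R\preceq 1_n-x_I$, and reruns the entire existence-and-uniqueness argument of Theorem~\ref{Thm:existence} in this enlarged setting, invoking the fixed-point relation $p_{R,i}^*=\frac{\delta}{\gamma}p_{I,i}^*$ of Eq.~\eqref{relation} only where it is needed (to force $b_k=0$ in the connectivity argument and to close the final chain of inequalities at \eqref{rnd_eq_4}). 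You instead impose that same relation at the outset to slave $P_R$ to $P_I$, which collapses the problem to the SIS fixed-point equation with the steeper effective recovery curve $\tilde\omega(s)=\frac{\delta s}{1-(1+\delta/\gamma)s}$; since $\tilde\omega'(0)=\delta$ and $\tilde\omega(s)/s$ is strictly increasing, Theorem~\ref{Thm:existence} applies nearly verbatim, and your saturation argument correctly rules out the degenerate case $P_{S,i}=0$. Your handling of the one genuine discrepancy---that $\tilde\omega$ blows up at $s=\gamma/(\gamma+\delta)$ instead of satisfying $\omega(1)\geq 1$---is also right: that hypothesis is only used to dispose of the boundary case $x_i^*=1$ in the maximal-point step, and since $\Xi_i\leq 1$ forces every point of $U$ to satisfy $\tilde\omega(x_i)\leq 1$, the set $U$ lies compactly inside the domain of $\tilde\omega$ and the ``increase $z_i$ slightly'' step never reaches the boundary. (Both your argument and the paper's implicitly assume $\gamma>0$, as both divide by $\gamma$, so this is not a gap relative to the paper.) What your reduction buys is modularity and the clean structural observation that SIRS is literally SIS with a modified $\omega$; what the paper's formulation buys is a template that transfers directly to the SIV models of Theorem~\ref{thm:SIV_id_nontrivial}, where the relation between $p_R^*$ and $p_I^*$ is affine rather than linear and a one-variable reduction is less immediate.
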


\paragraph{Stability of the Nontrivial Fixed Point}
Since the trivial fixed point was globally stable when $\frac{\beta\lambda_{\max}(A)}{\delta}<1$, the existence of a second unique fixed point at $\frac{\beta\lambda_{\max}(A)}{\delta}>1$ raises the question of whether it is also stable. It turns out that this is not true in general. In fact, same as in immune-admitting SIS model (Section~\ref{sec:immune-admitting}), we can find simple examples in which the system converges to a cycle rather than the unique second fixed point.
Nevertheless, like immune-admitting SIS, this fixed point can be shown to be stable with high probability for some general families of random graphs.

\subsection{Analysis of the Exact Markov Chain}
Since the graph $G$ is connected and the Markov chain has an absorbing state $\xi=(0,0,\dots,0) = \bar{0}$, the unique stationary distribution is:
$$\pi=e_{\bar{0}} ,$$
where $e_X \in \mathbb{R}^{3^n}$ denotes the probability vector with all elements of zero, except the $X$-th one. This coincides with the fixed point of the mappings; however, the main concern is whether the Markov chain converges to its stationary distribution within a ``reasonable amount of time,'' or not.

We show that when $\frac{\beta\lambda_{\max}(A)}{\delta}<1$, not only are the linear and nonlinear maps globally stable at the origin, but also the mixing time of the Markov chain is $O(\log n )$, meaning that the Markov chain mixes fast and the epidemic dies out.

Let the row vector $\mu(t)\in \mathbb{R}^{3^n}$ be the probability vector of the Markov chain. The relationship between these probabilities ($\mu_X(t)$) and the marginal probabilities ($p_{R,i}(t)$, $p_{I,i}(t)$) is in the following forms: $p_{R,i}(t)=\sum_{X_i=2} \mu_X(t)$, $p_{I,i}(t)=\sum_{X_i=1} \mu_X(t)$. We express all these terms as well as $p_0=\sum \mu_X(t)=1$ in the form of a column vector $p(t)=[p_0(t),p_1(t),\dots,p_{2n}]^T$, i.e.
\begin{equation}
p(t)=\begin{bmatrix} 1, \vline p_{R,1}(t), \dots,p_{R,n}(t), \vline p_{I,1}(t), \dots, p_{I,n}(t)\end{bmatrix}^T .
\end{equation}
The matrix $B \in \mathbb{R}^{3^n\times (2n+1)}$ which relates the ``observable data'' $p(t)$, and the ``hidden complete data'' $\mu(t)$, can be expressed as:
\begin{equation}
B_{X,k}=
\begin{cases}
1,& \text{if } k=0\\ \hdashline
0,& \text{if } k\in\left\{1,2,\dots,n\right\} \text{ and } X_k=0\\
0,& \text{if } k\in\left\{1,2,\dots,n\right\} \text{ and } X_k=1\\
1,& \text{if } k\in\left\{1,2,\dots,n\right\} \text{ and } X_k=2\\ \hdashline
0,& \text{if } k\in\left\{n+1,n+2,\dots,2n\right\} \text{ and } X_{k-n}=0\\
1,& \text{if } k\in\left\{n+1,n+2,\dots,2n\right\} \text{ and } X_{k-n}=1\\
0,& \text{if } k\in\left\{n+1,n+2,\dots,2n\right\} \text{ and } X_{k-n}=2\\
\end{cases}
\end{equation}

Now we can proceed to the main theorem of this section.
\begin{theorem}\label{thm_mixing}
If $\frac{\beta\lambda_{\max}(A)}{\delta}<1$, the mixing time of the Markov chain whose transition matrix $S$ is described by Eqs. \eqref{MC1} and \eqref{MC2} is $O(\log n)$.
\end{theorem}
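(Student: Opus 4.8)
The plan is to follow the template of the SIS mixing-time result (Theorem~\ref{Thm:upperboundmt}): reduce the total-variation distance to the probability of not having been absorbed, bound that probability by a sum of marginals, and control the marginals through the linear system matrix $M$ of \eqref{linear}. Since the chain is connected with the single absorbing state $\bar{0}$ and $\pi=e_{\bar{0}}$, for any initial distribution $\mu$ one has the identity $\|\mu S^t-\pi\|_{TV}=1-(\mu S^t)_{\bar{0}}=\mathbb{P}(\xi(t)\neq\bar{0})$, and a union bound over the nodes gives $\mathbb{P}(\xi(t)\neq\bar{0})\le\sum_{i=1}^n\big(p_{I,i}(t)+p_{R,i}(t)\big)=\mathbf{1}^\mathrm{T}p(t)$, where $p(t)=[p_R(t);p_I(t)]\in\mathbb{R}^{2n}$.

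First I would establish the SIRS analogue of Proposition~\ref{Lem:LP}, namely the componentwise linear upper bound $p(t+1)\preceq M\,p(t)$. The recovered recursion \eqref{exact_R} is already exact and linear, giving the top block of $M$. For the infected marginal \eqref{exact_I} I would use the elementary inequality $1-\prod_{j\in N_i}(1-\beta\mathbbm{1}_{\xi_j(t)=1})\le\beta\sum_{j\in N_i}\mathbbm{1}_{\xi_j(t)=1}$, take the conditional expectation, and multiply by $\mathbb{P}(\xi_i(t)=0)$; bounding $\mathbb{P}(\xi_j=1,\xi_i=0)\le p_{I,j}(t)$ then yields $p_{I,i}(t+1)\le(1-\delta)p_{I,i}(t)+\beta\sum_{j\in N_i}p_{I,j}(t)$, which is precisely the bottom block of $M$. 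Because $M$ has nonnegative entries, iteration preserves the order, so $p(t)\preceq M^t p(0)\preceq M^t\mathbf{1}$, and therefore $\|\mu S^t-\pi\|_{TV}\le\mathbf{1}^\mathrm{T}M^t\mathbf{1}$ uniformly over $\mu$.

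The remaining, and main, task is to show that $\mathbf{1}^\mathrm{T}M^t\mathbf{1}$ decays geometrically with an $n$-independent rate. Here the principal difficulty---absent in the SIS case---is that $M$ is block upper-triangular and hence \emph{non-normal}, so one cannot simply write $\|M^t\|\le\|M\|^t$ (indeed $\|M\|_2$ may exceed $1$). I would instead exploit the block structure directly: writing $a=1-\gamma$ and $N=(1-\delta)I_n+\beta A$, one computes $M^t=\left[\begin{smallmatrix} a^tI_n & \delta\sum_{k=0}^{t-1}a^{t-1-k}N^k \\ 0 & N^t\end{smallmatrix}\right]$ by induction. The block $N$ is symmetric and nonnegative, so by Perron--Frobenius $\|N\|_2=\lambda_{\max}(N)=(1-\delta)+\beta\lambda_{\max}(A)=:b$, and the hypothesis $\frac{\beta\lambda_{\max}(A)}{\delta}<1$ is exactly $b<1$. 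Using $\mathbf{1}_n^\mathrm{T}N^k\mathbf{1}_n\le\|\mathbf{1}_n\|_2^2\,\|N\|_2^k=nb^k$ and summing the three blocks gives, with $\rho=\max(a,b)<1$, a bound of the form $\mathbf{1}^\mathrm{T}M^t\mathbf{1}\le C\,n\,t\,\rho^{t}$ for a constant $C$ depending only on $\beta,\delta,\gamma$.

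Finally, absorbing the harmless polynomial factors $t$ and $n$ into the exponential---replacing $\rho$ by any $\rho'\in(\rho,1)$, for which $t\rho^t\le C'(\rho')^t$---and setting the resulting bound equal to $\epsilon$ in \eqref{Eq:mixingdefn} yields $t_{mix}(\epsilon)=O\!\big(\log(n/\epsilon)\big)=O(\log n)$, as claimed. The only subtle point to verify carefully is the degenerate case $a=b$, where the off-diagonal sum contributes an extra factor of $t$; this merely adds a $\log t$ term, which is absorbed into $O(\log n)$.
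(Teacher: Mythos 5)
Your proof is correct and follows the same skeleton as the paper's: reduce the total-variation distance to $1-(\mu S^t)_{\bar 0}$, union-bound this by $\sum_{i=1}^n\big(p_{R,i}(t)+p_{I,i}(t)\big)$, and control the marginals through the linear system matrix $M$ of \eqref{linear}. Two points of divergence are worth recording. First, you obtain $p_{I,i}(t+1)\le(1-\delta)p_{I,i}(t)+\beta\sum_{j\in N_i}p_{I,j}(t)$ from the elementary inequality $1-\prod_{j\in N_i}(1-\beta\mathbbm{1}_{\xi_j(t)=1})\le\beta\sum_{j\in N_i}\mathbbm{1}_{\xi_j(t)=1}$ together with $\mathbb{P}(\xi_j(t)=1,\xi_i(t)=0)\le p_{I,j}(t)$, whereas the paper reruns the linear-programming duality of Proposition~\ref{Lem:LP}; your route is shorter and loses only the (here unneeded) assertion that the bound is tightest among marginal-only bounds, and as a side effect it works uniformly over the initial distribution $\mu$ without invoking any stochastic-dominance step. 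Second, and more substantively, the paper concludes with $1_{2n}^TM^t1_{2n}\le\|1_{2n}\|^2\|M\|^t=2n\|M\|^t$ and asserts $\|M\|<1$ because the eigenvalues of $M$ lie below $1$; since $M$ is block upper-triangular and non-normal, its spectral norm can exceed $1$ even under the hypothesis (already for a single isolated node with small $\gamma$ and $\delta=1/2$), while the spectral radius being below $1$ does not yield $\|M^t\|\le\rho(M)^t$ with a uniform constant. Your explicit computation of $M^t=\left[\begin{smallmatrix} a^tI_n & \delta\sum_{k=0}^{t-1}a^{t-1-k}N^k\\ 0 & N^t\end{smallmatrix}\right]$ with $a=1-\gamma$ and $N=(1-\delta)I_n+\beta A$, followed by $1_n^TN^k1_n\le n\|N\|_2^k$ (legitimate because $N$ is symmetric) and absorption of the resulting polynomial factor $t$ into a slightly larger geometric rate, closes exactly this gap; the degenerate case $a=b$ that you flag is handled by the same absorption. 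So your argument is not merely a valid alternative but a repair of the one step the paper's proof leaves unjustified, and it reaches the same conclusion $t_{mix}(\epsilon)=O(\log(n/\epsilon))$.
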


\section{SIV Epidemics}
In this section we consider the effect of vaccination by incorporating direct immunization into the model studied in the previous sections. In other words, the transition from $S$ to $R$ is also permitted now (See Fig. \ref{fig2}). This class of processes are often referred to as SIV (Susceptible-Infected-Vaccinated) epidemics. Depending on the value of $\gamma$, this model can represent temporary ($\gamma\neq 0$) or permanent ($\gamma=0$) immunization. Moreover, based on the efficacy of the vaccine, there are two different models: infection-dominant and vaccination-dominant.

\begin{figure}[thpb]
  \centering
    \includegraphics[width=0.7\columnwidth]{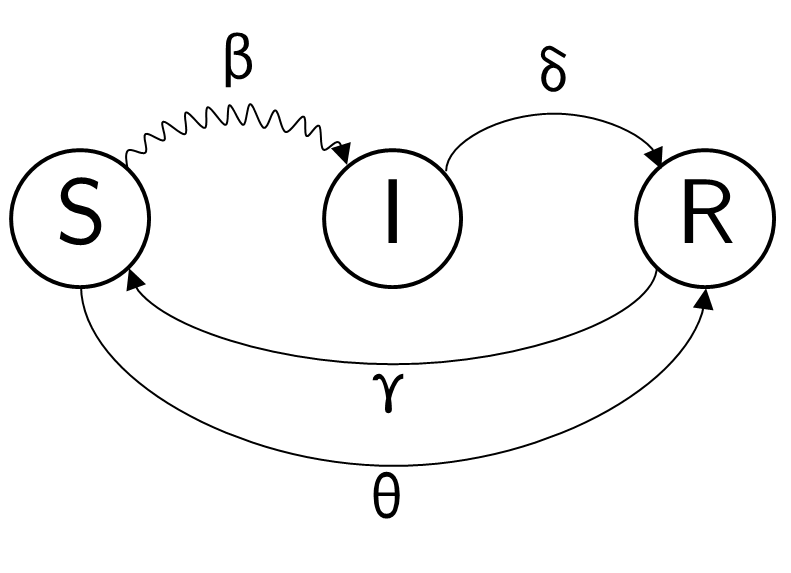}
      \caption{State diagram of a single node in the SIRS-with-Vaccination model, and the transition rates. Wavy arrow represents exogenous (network-based) transition. $\theta$ represents the probability of direct immunization.}
      \label{fig2}
\end{figure}

\subsection{Infection-Dominant Model}
In this case, the infection is dominant, in the sense that if a susceptible node receives both infection and vaccine at the same time, it gets infected. The elements of state transition matrix are
\begin{align}\label{MC1_id}
S_{X,Y}&= \mathbb{P}\left(\xi(t+1)=Y \mid \xi(t)= X\right)\notag\\
&= \prod_{i=1}^n \mathbb{P}\left(\xi_i(t+1)=Y_i \mid \xi(t) = X\right) ,
\end{align}
where
\begin{empheq}[box=\fbox]{multline}\label{MC2_id}
\mathbb{P}\left(\xi_i(t+1)=Y_i \mid \xi(t) = X\right)=\\
\begin{cases}
(1-\beta)^{m_i}(1-\theta),& \text{if } (X_i,Y_i)=(0,0)\\
1-(1-\beta)^{m_i},& \text{if } (X_i,Y_i)=(0,1)\\
(1-\beta)^{m_i}\theta,& \text{if } (X_i,Y_i)=(0,2)\\
0,& \text{if } (X_i,Y_i)=(1,0)\\
1-\delta,& \text{if } (X_i,Y_i)=(1,1)\\
\delta,& \text{if } (X_i,Y_i)=(1,2)\\
\gamma,& \text{if } (X_i,Y_i)=(2,0)\\
0,& \text{if } (X_i,Y_i)=(2,1)\\
1-\gamma,& \text{if } (X_i,Y_i)=(2,2)\\
\end{cases} ,
\end{empheq}
and as before $m_i=\left\vert{\left\{ {j\in N_i} \mid X_j=1\right\}}\right\vert=\left\vert{N_i\cap I(t)}\right\vert$. Compared to Eq. \eqref{MC2}, the first and the third elements have changed in Eq. \eqref{MC2_id}, and for $\theta=0$ the model reduces to the previous one.

In this infection-dominant model the marginal probabilities are:
\begin{align}
p&_{R,i}(t+1) =(1-\gamma)p_{R,i}(t)+\delta p_{I,i}(t)\notag\\
&+\mathbb{E}_{ |\xi_i(t)=0}\bigg[\prod_{j\in N_i}(1-\beta\mathbbm{1}_{\xi_j(t)=1})\bigg]\theta(1-p_{R,i}(t)-p_{I,i}(t)) ,\label{exact_R_id}\\
p&_{I,i}(t+1) =(1-\delta)p_{I,i}(t)\notag\\
&+\mathbb{E}_{ |\xi_i(t)=0}\bigg[1-\prod_{j\in N_i}(1-\beta\mathbbm{1}_{\xi_j(t)=1})\bigg](1-p_{R,i}(t)-p_{I,i}(t)) ,\label{exact_I_id}
\end{align}

The steady state behavior in the presence of immunization is rather different from the SIS/SIRS cases, in which all the nodes became susceptible. In this model, once there is no node in the infected state, the Markov chain reduces to a simpler Markov chain, where the nodes are all decoupled. In fact from that time on, each node has an independent transition probability between $S$ and $R$. The stationary distribution of each single node is then $P_S^* = \frac{\gamma}{\gamma+\theta}$ and $P_R^* = \frac{\theta}{\gamma+\theta}$ (Fig. \ref{steady}). In order for this MC to converge, we should have $\gamma\theta \neq 1$. The stationary distribution of each state $X$ is then:
$$\pi_X = \prod_{i=1}^n (\frac{\gamma}{\gamma+\theta})^{\mathbb{I}(X_i=0)} \cdot 0^{\mathbb{I}(X_i=1)} \cdot (\frac{\theta}{\gamma+\theta})^{\mathbb{I}(X_i=2)}$$
\begin{figure}[thpb]
  \centering
    \includegraphics[width=0.8\columnwidth]{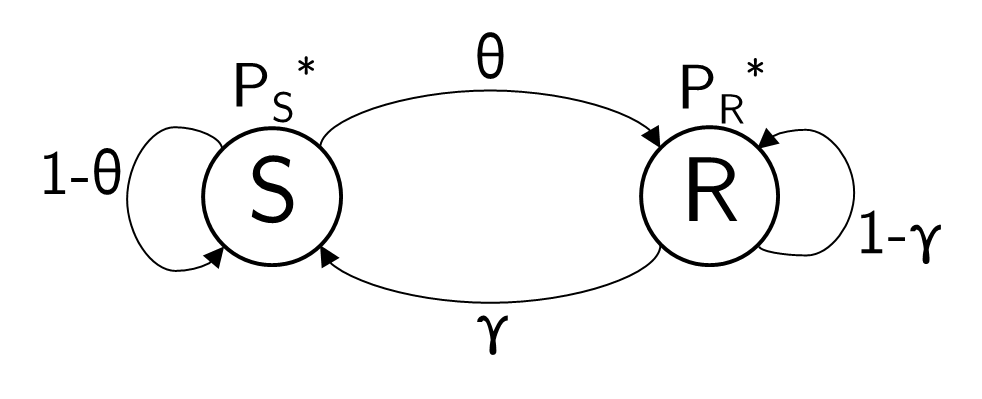}
      \caption{Reduced Markov chain of a single node in the steady state.}
      \label{steady}
\end{figure}

Now the nonlinear map (mean-field approximation of the Markov chain model) can be obtained as:
\begin{empheq}[box=\fbox]{align}
P&_{R,i}(t+1) =(1-\gamma)P_{R,i}(t)+\delta P_{I,i}(t)\notag\\
&+ \prod_{j\in N_i} (1-\beta P_{I,j}(t))\theta(1-P_{R,i}(t)-P_{I,i}(t)), \label{nonlinear_R_id}\\
P&_{I,i}(t+1) =(1-\delta)P_{I,i}(t)+\notag\\
&\Big(1-\prod_{j\in N_i} (1-\beta P_{I,j}(t))\Big)(1-P_{R,i}(t)-P_{I,i}(t)). \label{nonlinear_I_id}
\end{empheq}

It can be easily verified that one fixed point of this nonlinear map occurs at $P_{R,i}(t)=P_R^*$ and $P_{I,i}(t)=0$, i.e.
$$\begin{bmatrix}P_{R}(t)\\ P_{I}(t)\end{bmatrix}= \begin{bmatrix} \frac{\theta}{\gamma+\theta}1_n\\ 0_n\end{bmatrix}, $$
which is nicely consistent with the steady state of the Markov chain.

After some algebra, the linearization of the above model around the fixed point can be expressed as:
\begin{empheq}[box=\fbox]{align}
&\begin{bmatrix}\tilde{P}_R(t+1)\\\tilde{P}_I(t+1)\end{bmatrix}=\begin{bmatrix}  P_R^* 1_n\\ 0_n\end{bmatrix} + M \begin{bmatrix}\tilde{P}_R(t)-P_R^* 1_n\\\tilde{P}_I(t) - 0_n\end{bmatrix} ,\\
&\text{where}\notag\\
&M=\begin{bmatrix}
(1-\gamma-\theta)I_n & (\delta-\theta) I_n - \theta  P_S^* \beta A\\
0_{n\times n} & (1-\delta)I_n+ P_S^*\beta A
\end{bmatrix} .
\end{empheq}

\subsubsection{Analysis of the Nonlinear Model}

\paragraph{Epidemic Extinction: $\frac{\gamma}{\gamma+\theta}\frac{\beta\lambda_{\max}(A)}{\delta}<1$}

The following result summarizes the stability of the disease-free fixed point.
\begin{proposition}\label{prop_id}
The main fixed point of the nonlinear map (\ref{nonlinear_R_id}, \ref{nonlinear_I_id}) is
\begin{enumerate}[label=\alph*)]
\item locally stable, if $\frac{\gamma}{\gamma+\theta}\frac{\beta}{\delta}\lambda_{max}(A)<1$, and
\item globally stable, if $\frac{\beta}{\delta}\lambda_{max}(A)<1$ .
\end{enumerate}
\end{proposition}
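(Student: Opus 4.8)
The plan is to handle the two claims by different mechanisms: local stability (a) reduces to a spectral analysis of the linearization $M$ already computed above, whereas global stability (b) needs a comparison (monotonicity) argument on the infected component followed by an asymptotically-autonomous argument on the recovered component.

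\textbf{Part (a).} The displayed matrix $M$ is block upper-triangular, so its spectrum is the union of the spectra of the diagonal blocks $(1-\gamma-\theta)I_n$ and $(1-\delta)I_n+P_S^*\beta A$, where $P_S^*=\frac{\gamma}{\gamma+\theta}$. The first contributes the single value $1-\gamma-\theta$, which lies strictly inside the unit circle whenever $0<\gamma+\theta<2$ (the degenerate case $\gamma=\theta=1$, i.e. $\gamma\theta=1$, was excluded when the model was defined). The second block is symmetric with eigenvalues $1-\delta+P_S^*\beta\lambda_i(A)$; the largest is $1-\delta+P_S^*\beta\lambda_{\max}(A)$, and demanding it be $<1$ is exactly $P_S^*\frac{\beta}{\delta}\lambda_{\max}(A)=\frac{\gamma}{\gamma+\theta}\frac{\beta}{\delta}\lambda_{\max}(A)<1$, while the most negative eigenvalue stays above $-1$ because $|\lambda_{\min}(A)|\le\lambda_{\max}(A)$ and $\delta\le1$ give $1-\delta+P_S^*\beta\lambda_{\min}(A)>1-2\delta\ge-1$. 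Hence $\rho(M)<1$ under the stated hypothesis and the fixed point is locally asymptotically stable.

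\textbf{Part (b).} I first drive the infected fraction to zero. Applying the Weierstrass/union-bound inequality $1-\prod_{j\in N_i}(1-\beta P_{I,j}(t))\le\beta\sum_{j\in N_i}P_{I,j}(t)$ together with $1-P_{R,i}(t)-P_{I,i}(t)=P_{S,i}(t)\le1$ in \eqref{nonlinear_I_id} yields the component-wise bound $P_I(t+1)\preceq M_0\,P_I(t)$ with $M_0:=(1-\delta)I_n+\beta A$. Since $M_0$ is entrywise nonnegative it preserves $\preceq$, so iterating gives $0_n\preceq P_I(t)\preceq M_0^{\,t}P_I(0)$. Under $\frac{\beta}{\delta}\lambda_{\max}(A)<1$ we have $\rho(M_0)=1-\delta+\beta\lambda_{\max}(A)<1$, and as $M_0$ is symmetric $\|M_0^{\,t}\|=\rho(M_0)^t\to0$, so squeezing gives $P_I(t)\to0_n$ geometrically. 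To obtain $P_R(t)\to P_R^*1_n$ with $P_R^*=\frac{\theta}{\gamma+\theta}$, write $\pi_i(t):=\prod_{j\in N_i}(1-\beta P_{I,j}(t))$ and $e_{R,i}(t):=P_{R,i}(t)-P_R^*$; substituting into \eqref{nonlinear_R_id} and using the fixed-point identity $(\gamma+\theta)P_R^*=\theta$ gives, after rearrangement,
\begin{equation}
e_{R,i}(t+1)=(1-\gamma-\theta)e_{R,i}(t)+g_i(t),\qquad g_i(t)=-\theta\big(1-\pi_i(t)\big)\big(1-P_{R,i}(t)\big)+\big(\delta-\theta\pi_i(t)\big)P_{I,i}(t).
\end{equation}
Because $0\le1-\pi_i(t)\le\beta\sum_{j\in N_i}P_{I,j}(t)$ and all remaining factors lie in $[0,1]$, the forcing $g_i(t)$ is dominated by the geometrically decaying $P_I(t)$; since $|1-\gamma-\theta|<1$, the scalar linear recursion with vanishing input forces $e_{R,i}(t)\to0$. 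Combined with $P_I(t)\to0_n$, this establishes global stability.

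\textbf{Main obstacle.} The infected-component estimate is the familiar SIS-type comparison and is routine. The genuinely delicate part is the recovered-component convergence, which is an input-to-state argument: one must verify that the coupling terms $g_i(t)$ truly inherit the geometric decay of $P_I(t)$ (via the bound on $1-\pi_i(t)$) and that the contraction factor $1-\gamma-\theta$ remains strictly inside the unit disk, which is precisely where the standing assumption $\gamma+\theta\in(0,2)$ (equivalently $\gamma\theta\neq1$ plus nondegenerate immunization) is needed.
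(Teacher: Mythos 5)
Your proof is correct and follows essentially the same route as the paper: part (a) via the spectrum of the block upper-triangular Jacobian $M$, and part (b) via the Weierstrass bound $P_I(t+1)\preceq((1-\delta)I_n+\beta A)P_I(t)$. In fact your write-up is more complete than the paper's on two points: the paper's proof of (b) stops at the linear upper bound on the infected component and never addresses why $P_R(t)\to P_R^*1_n$, whereas your input-to-state argument for the recovered component (the recursion $e_{R,i}(t+1)=(1-\gamma-\theta)e_{R,i}(t)+g_i(t)$ with geometrically vanishing forcing $g_i$ — whose algebra I verified is consistent with \eqref{nonlinear_R_id} and the identity $(\gamma+\theta)P_R^*=\theta$) is the missing half of the global-stability claim; and in (a) you verify that the eigenvalues of the lower-right block also stay above $-1$, so that the full spectral radius, not just the largest eigenvalue, is below one — a point the paper glosses over.
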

The authors of \cite{prakash2012threshold} have shown the same condition for the local stability, but they do not provide any result on the global stability.

\paragraph{Epidemic Spread: $\frac{\gamma}{\gamma+\theta}\frac{\beta\lambda_{\max}(A)}{\delta}>1$}
The main fixed point of the mapping is not stable if $\frac{\gamma}{\gamma+\theta}\frac{\beta\lambda_{\max}(A)}{\delta}>1$. We show the existence and uniqueness of the the second fixed point for this case.

The gist of the proof is the same as that of Section \ref{sec:SIRS_secondfixedpoint}, except we replace Property (d) with the more general of:
\begin{enumerate}[label=(\alph*')]
\addtocounter{enumi}{3}
\item $\omega(P_{R,i},0)=0$\\ $\frac{\partial \omega(P_{R,i},P_{I,i})}{\partial P_{I,i}}\bigg|_{(P_{R,i},0)}=\frac{\delta}{1-P_{R,i}}$ 
\end{enumerate}
for any $P_{R,i}\neq 1$.

\begin{theorem}\label{thm:SIV_id_nontrivial}
If $\frac{\gamma}{\gamma+\theta}\frac{\beta\lambda_{\max}(A)}{\delta}>1$, the nonlinear map (\ref{nonlinear_R_id}, \ref{nonlinear_I_id}), has a unique nontrivial fixed point.
\end{theorem}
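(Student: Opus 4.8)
The plan is to collapse the $2n$-dimensional fixed-point problem back to the $n$-dimensional one already resolved in Theorem~\ref{Thm:existence}, following the SIRS argument of Theorem~\ref{thm:SIRS_nontrivial} almost verbatim; the single new feature is that the disease-free equilibrium has moved away from the origin to $P_{R,i}=P_R^*=\frac{\theta}{\gamma+\theta}$, $P_{I,i}=0$, which is exactly why property (d) must be upgraded to (d$'$). First I would impose $P_R(t+1)=P_R(t)$ and $P_I(t+1)=P_I(t)$ in (\ref{nonlinear_R_id}, \ref{nonlinear_I_id}). Writing $\Xi_i(P_I)=1-\prod_{j\in N_i}(1-\beta P_{I,j})$ and $s_i=1-P_{R,i}-P_{I,i}$, the infected equation reduces to $\Xi_i(P_I)=\omega(P_{R,i},P_{I,i})$ with the very same $\omega(P_{R,i},P_{I,i})=\frac{\delta P_{I,i}}{1-P_{R,i}-P_{I,i}}$ as in the SIRS model. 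In particular $\Xi_i$ still satisfies (a)--(c) unchanged, since it depends on $P_I$ alone and is identical to before.

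The key step is eliminating $P_R$. Substituting $\delta P_{I,i}=\Xi_i s_i$ from the infected equation into the recovered equation cancels the factor $\Xi_i$ that the vaccination term $\prod_{j\in N_i}(1-\beta P_{I,j})=1-\Xi_i$ would otherwise introduce, and a short manipulation gives the affine relation $(\gamma+\theta)P_{R,i}=\theta+\big(\delta(1-\theta)-\theta\big)P_{I,i}$, i.e. $P_{R,i}=P_R^*+cP_{I,i}$. This shows, exactly as in SIRS, that the recovered coordinates are slaved to the infected ones along an affine manifold, and that the unique fixed point with $P_I=0_n$ is the shifted disease-free point $(P_R^*1_n,0_n)$. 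Inserting this relation into $\omega$ yields a single-variable effective recovery function $\tilde\omega(P_{I,i})=\omega(P_R^*+cP_{I,i},P_{I,i})$, so the fixed-point problem becomes $\Xi_i(P_I)=\tilde\omega(P_{I,i})$ for every $i$, which is precisely the form $\Psi_i(x)=\Xi_i(x)-\tilde\omega(x_i)=0$ handled by Theorem~\ref{Thm:existence}.

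It then remains to check that $\tilde\omega$ meets hypotheses (d$'$), (e), (f). Clearly $\tilde\omega(0)=0$. For the derivative at $0$ the chain rule gives $\tilde\omega'(0)=\frac{\partial\omega}{\partial P_{R,i}}\big|_{(P_R^*,0)}c+\frac{\partial\omega}{\partial P_{I,i}}\big|_{(P_R^*,0)}$, and since $\omega(\cdot,0)\equiv0$ the first-order dependence on $P_R$ drops out, leaving $\tilde\omega'(0)=\frac{\delta}{1-P_R^*}=\frac{\delta(\gamma+\theta)}{\gamma}$ by property (d$'$) — this is exactly the point at which (d) must be replaced by (d$'$). Strict positivity of $\tilde\omega'$ and strict monotonicity of $\tilde\omega(s)/s$ follow from direct differentiation, using that the coefficient of $P_{I,i}$ in the denominator $s_i$ is positive. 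Applying Theorem~\ref{Thm:existence} with effective recovery rate $\tilde\omega'(0)$ in place of $\delta$, the existence-and-uniqueness threshold $\frac{\beta\lambda_{\max}(A)}{\tilde\omega'(0)}>1$ reads $\frac{\gamma}{\gamma+\theta}\frac{\beta\lambda_{\max}(A)}{\delta}>1$, precisely the hypothesis. This produces a unique $P_I^*\succ0_n$, and the affine relation then determines the recovered coordinates uniquely, giving the unique nontrivial fixed point.

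I expect the main obstacle to be twofold. First, one must verify the cancellation in the recovered equation with care: it is what keeps $P_{R,i}$ affine in $P_{I,i}$ and hence preserves the one-dimensional reduction, whereas without it the vaccination term would recouple $P_R$ and $P_I$ through $\Xi_i$ and the clean passage to Theorem~\ref{Thm:existence} would break down. Second, one must confirm feasibility, namely that the candidate fixed point obeys $s_i>0$ and $P_{R,i},P_{I,i}\in[0,1]$ over the relevant range of $P_{I,i}$, and that $\tilde\omega$ blows up as $s_i\downarrow0$ so that the intermediate-value argument behind existence still forces a sign change of $\Psi_i$. Both are routine given the structure above, but they are where the genuine work, as opposed to bookkeeping, resides.
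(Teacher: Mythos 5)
Your proposal is correct, and it rests on the same two essential facts as the paper's proof --- the affine slaving relation $(\gamma+\theta)P_{R,i}^{*}=\theta+\big(\delta(1-\theta)-\theta\big)P_{I,i}^{*}$ (identical to the relation recorded in the paper's proof of Theorem~\ref{thm:SIV_id_nontrivial}) and the rescaled linearization $\beta A-\frac{\delta}{1-P_R^*}I_n$ --- but it packages them differently. The paper keeps the two-variable $\omega(P_{R,i},P_{I,i})$, upgrades property (d) to (d$'$), and re-runs the $2n$-dimensional argument of Theorem~\ref{thm:SIRS_nontrivial} wholesale, invoking the affine relation only at the uniqueness step; you instead substitute that relation at the outset, collapse the problem to the single-variable effective map $\tilde\omega(s)=\delta s/\big((1-P_R^*)-(1+c)s\big)$ with $c=\frac{\delta-\delta\theta-\theta}{\gamma+\theta}$, and invoke Theorem~\ref{Thm:existence} as a black box. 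Your route buys genuine reuse of the already-proven $n$-dimensional theorem (no need to repeat the max-closure, maximal-point and contradiction machinery) and makes the threshold transparent as $\beta\lambda_{\max}(A)/\tilde\omega'(0)$ with $\tilde\omega'(0)=\delta/(1-P_R^*)=\delta(\gamma+\theta)/\gamma$. The bookkeeping you defer all checks out: $1+c=\frac{\gamma+\delta(1-\theta)}{\gamma+\theta}>0$, so $\tilde\omega'>0$ and $\tilde\omega(s)/s$ is increasing (properties (e),(f)); the pole of $\tilde\omega$ sits at $s=\gamma/\big(\gamma+\delta(1-\theta)\big)\le 1$, so the role of ``$\omega(1)\ge1$'' in the existence argument is preserved; and on the feasible interval $P_{I,i}<\gamma/\big(\gamma+\delta(1-\theta)\big)$ one automatically has $P_{R,i}=P_R^*+cP_{I,i}\ge0$ even when $c<0$. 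The only step worth stating explicitly rather than implicitly is that the correspondence between fixed points of (\ref{nonlinear_R_id}, \ref{nonlinear_I_id}) and solutions of $\Xi_i(x)=\tilde\omega(x_i)$ is a bijection: your substitution of $\delta P_{I,i}=\Xi_i s_i$ into the $R$-equation is reversible precisely because, once the $I$-equation holds, the affine relation is equivalent to the $R$-equation --- a one-line verification.
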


\subsubsection{Analysis of the Exact Markov Chain}
We show the mixing time result for this case as well. Vectors $\mu(t)$, $p(t)$ and the matrix $B$ are defined as before.

\begin{theorem}\label{thm_mixing_id}
If $\frac{\beta\lambda_{\max}(A)}{\delta}<1$, the mixing time of the Markov chain whose transition matrix $S$ is described by Eqs. \eqref{MC1_id} and \eqref{MC2_id} is $O(\log n)$.
\end{theorem}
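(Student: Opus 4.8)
The plan is to combine the geometric decay of the infection marginals---handled exactly as in the SIS and SIRS cases---with a new ingredient that controls the convergence of the susceptible/vaccinated marginals to the nontrivial product stationary distribution $\pi$. The essential structural difference from Theorems~\ref{Thm:upperboundmt} and~\ref{thm_mixing} is that here $\pi$ is no longer the single point mass $e_{\bar 0}$; instead it is the product measure in which each node is independently $S$ with probability $P_S^*=\frac{\gamma}{\gamma+\theta}$ and $R$ with probability $P_R^*=\frac{\theta}{\gamma+\theta}$. Consequently the total-variation distance to $\pi$ cannot be bounded merely by the probability that some node is infected or recovered, and a genuine argument for the joint $S$/$R$ dynamics is required.

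First I would reuse the linear upper bound on the infection marginals. Starting from the exact recursion \eqref{exact_I_id}, the conditional expectation $\mathbb{E}_{|\xi_i(t)=0}[1-\prod_{j\in N_i}(1-\beta\mathbbm{1}_{\xi_j(t)=1})]$ is bounded above by $\beta\sum_{j\in N_i}p_{I,j}(t)$ by the union bound, and the factor $(1-p_{R,i}(t)-p_{I,i}(t))\le 1$, which yields $p_I(t+1)\preceq\big((1-\delta)I_n+\beta A\big)p_I(t)$. Since $\frac{\beta\lambda_{\max}(A)}{\delta}<1$ is exactly $\lambda_{\max}\big((1-\delta)I_n+\beta A\big)<1$ and this matrix is symmetric, iterating gives $\sum_i p_{I,i}(t)\le n\,\|(1-\delta)I_n+\beta A\|^t$, which falls below any $\epsilon$ once $t=O(\log(n/\epsilon))$. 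Thus the probability that any node is infected decays geometrically and becomes negligible after $O(\log n)$ steps.

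The second, and new, step exploits the fact that the set of infection-free states is closed under the dynamics \eqref{MC2_id}: with $m_i=0$ a susceptible node cannot become infected, so from then on each node independently evolves as a two-state $S\!\leftrightarrow\!R$ chain governed by the fixed rates $\theta$ and $\gamma$. On this closed set the chain factorizes into $n$ independent two-state chains, each mixing in $O(1)$ time uniformly in $n$. I would therefore construct a coupling of the chain started from an arbitrary distribution with a copy started from $\pi$: let them run until the first chain reaches an infection-free state---which, by Step~1, occurs within $O(\log n)$ steps---and thereafter couple the two chains node-by-node using the elementary coupling of each two-state $S/R$ chain. A union bound over the $n$ nodes shows the node-wise couplings all succeed within an additional $O(\log(n/\epsilon))$ steps.

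Putting the two phases together, the coupling time is $O(\log(n/\epsilon))$, so $\|\mu S^t-\pi\|_{TV}\le \mathbb{P}(\text{chains uncoupled at }t)\le\epsilon$ for $t=O(\log n)$, which is the claimed bound in \eqref{Eq:mixingdefn}. The main obstacle is precisely this reliance on the product structure: unlike the SIS/SIRS setting, bounding the infection and recovery marginals does not by itself bound the distance to $\pi$, because total-variation distance to a product measure is not controlled by marginal deviations alone. Recognizing that the infection-free states are absorbing for the ``some node infected'' indicator, and that the restricted dynamics is an exact product of two-state chains, is what makes the node-wise coupling---and hence the logarithmic bound---go through.
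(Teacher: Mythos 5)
Your proposal is correct, and it diverges from the paper's proof in two ways worth recording. For the infection marginals, the paper recycles the linear-programming/Lagrange-duality argument of Theorem~\ref{thm_mixing}, observing that the rows of $SBf_i$ relevant to $p_{I,i}(t+1)$ do not depend on $\theta$, so the same multipliers $\lambda^*$ certify $p_I(t+1)\preceq((1-\delta)I_n+\beta A)\,p_I(t)$; your union bound on the exact recursion \eqref{exact_I_id} reaches the identical inequality more elementarily (the LP route additionally certifies that the bound is the tightest one using marginals, which is not needed for the mixing-time conclusion). The more substantive difference is the second phase. The paper, exactly as in the SIS/SIRS proofs, bounds $\sup_\mu\|\mu S^t-\pi\|_{TV}$ by $\mathbb{P}(\text{some node infected at }t\mid\text{all infected at }0)$ and stops there; but since $\pi$ is now the nondegenerate product measure with node marginals $(P_S^*,P_R^*)=(\tfrac{\gamma}{\gamma+\theta},\tfrac{\theta}{\gamma+\theta})$ rather than the point mass $e_{\bar 0}$, that inequality is not valid as stated (take $\mu=e_{\bar 0}$ and $t=0$: no node is ever infected, yet $\|\mu-\pi\|_{TV}=1-(P_S^*)^n$). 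Your two-phase argument---drive the chain into the closed infection-free set in $O(\log(n/\epsilon))$ steps, then couple the $n$ decoupled two-state $S/R$ chains node by node with a union bound---is precisely the missing ingredient, and it costs only another $O(\log(n/\epsilon))$ steps, so the theorem's conclusion stands. The only caveat is that the coupling phase needs the two-state chain to be aperiodic and irreducible, i.e.\ $0<\gamma+\theta<2$ (equivalently $|1-\gamma-\theta|<1$), which is the same non-degeneracy condition the paper itself imposes for the reduced single-node chain to converge.
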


\subsection{Vaccination-Dominant Model}
In this variation of the model the assumption is if a susceptible node receives both infection and vaccine at the same time, it becomes vaccinated. The transition probabilities of the Markov chain are again
\begin{align}\label{MC1_vd}
S_{X,Y}&= \mathbb{P}\left(\xi(t+1)=Y \mid \xi(t)= X\right)\notag\\
&= \prod_{i=1}^n \mathbb{P}\left(\xi_i(t+1)=Y_i \mid \xi(t) = X\right) ,
\end{align}
with the change that
\begin{empheq}[box=\fbox]{multline}\label{MC2_vd}
\mathbb{P}\left(\xi_i(t+1)=Y_i \mid \xi(t) = X\right)=\\
\begin{cases}
(1-\beta)^{m_i}(1-\theta),& \text{if } (X_i,Y_i)=(0,0)\\
(1-(1-\beta)^{m_i})(1-\theta),& \text{if } (X_i,Y_i)=(0,1)\\
\theta,& \text{if } (X_i,Y_i)=(0,2)\\
0,& \text{if } (X_i,Y_i)=(1,0)\\
1-\delta,& \text{if } (X_i,Y_i)=(1,1)\\
\delta,& \text{if } (X_i,Y_i)=(1,2)\\
\gamma,& \text{if } (X_i,Y_i)=(2,0)\\
0,& \text{if } (X_i,Y_i)=(2,1)\\
1-\gamma,& \text{if } (X_i,Y_i)=(2,2)\\
\end{cases} ,
\end{empheq}
and $m_i=\left\vert{\left\{ {j\in N_i} \mid X_j=1\right\}}\right\vert=\left\vert{N_i\cap I(t)}\right\vert$ as before.

In this case the marginal probabilities are:
\begin{align}
p_{R,i}&(t+1) =(1-\gamma)p_{R,i}(t)+\delta p_{I,i}(t)+\notag\\
& \theta(1-p_{R,i}(t)-p_{I,i}(t)) ,\label{exact_R_vd}\\
p_{I,i}&(t+1) =(1-\delta)p_{I,i}(t)+ (1-\theta)\times\notag\\
&\mathbb{E}_{ |\xi_i(t)=0}\bigg[1-\prod_{j\in N_i}(1-\beta\mathbbm{1}_{\xi_j(t)=1})\bigg](1-p_{R,i}(t)-p_{I,i}(t))\label{exact_I_vd}
\end{align}

The nonlinear map, or the mean-field approximation, can be stated as:
\begin{empheq}[box=\fbox]{align}
&P_{R,i}(t+1) =(1-\gamma)P_{R,i}(t)+\delta P_{I,i}(t)\notag\\
&+ \theta(1-P_{R,i}(t)-P_{I,i}(t)), \label{nonlinear_R_vd}\\
&P_{I,i}(t+1) =(1-\delta)P_{I,i}(t)+(1-\theta)\notag\\
&\cdot\Big(1-\prod_{j\in N_i} (1-\beta P_{I,j}(t))\Big)(1-P_{R,i}(t)-P_{I,i}(t)) \label{nonlinear_I_vd}
\end{empheq}

As a result, the first order (linear) model is:
\begin{empheq}[box=\fbox]{align}
&\begin{bmatrix}\tilde{P}_R(t+1)\\\tilde{P}_I(t+1)\end{bmatrix}=\begin{bmatrix} P_R^* 1_n\\ 0_n\end{bmatrix} + M \begin{bmatrix}\tilde{P}_R(t)-P_R^* 1_n\\\tilde{P}_I(t) - 0_n\end{bmatrix} ,\notag\\
&\text{where}\notag\\
&M=\begin{bmatrix}
(1-\gamma-\theta)I_n & (\delta-\theta) I_n - \theta  P_S^* \beta A\\
0_{n\times n} & (1-\delta)I_n+(1-\theta) P_S^*\beta A
\end{bmatrix} .\notag
\end{empheq}
We should note that for the vaccination-dominant model, the steady state of the Markov chain and the main fixed point of the mapping are exactly the same as in  the infection-dominant model. However, as we may expect, the vaccination-dominant model is more stable.
\subsubsection{Analysis of the Nonlinear Model}
\paragraph{Epidemic Extinction: $(1-\theta)\frac{\gamma}{\gamma+\theta}\frac{\beta}{\delta}\lambda_{max}(A)<1$}
The stability of the vaccination-dominant model can be summarized in the following theorem.
\begin{proposition}\label{prop_vd}
The main fixed point of the nonlinear map (\ref{nonlinear_R_vd}, \ref{nonlinear_I_vd}) is
\begin{enumerate}[label=\alph*)]
\item locally stable, if $(1-\theta)\frac{\gamma}{\gamma+\theta}\frac{\beta}{\delta}\lambda_{max}(A)<1$, and
\item globally stable, if $(1-\theta)\frac{\beta}{\delta}\lambda_{max}(A)<1$ .
\end{enumerate}
\end{proposition}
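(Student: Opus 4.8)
The plan is to handle the two parts separately, exploiting the block-triangular structure that already appears in the linearized matrix $M$ displayed just before the proposition. For part (a), local stability is equivalent to the Jacobian of the map at the main fixed point $(P_R^* 1_n, 0_n)$ having spectral radius strictly below one, and that Jacobian is precisely $M$. Since $M$ is block upper triangular, its spectrum is the union of the spectra of the two diagonal blocks $(1-\gamma-\theta)I_n$ and $(1-\delta)I_n+(1-\theta)P_S^*\beta A$. The first block contributes only the eigenvalue $1-\gamma-\theta$, which lies in $(-1,1)$ since $\gamma,\theta\in[0,1]$ are not both equal to one. The eigenvalues of the second block are $1-\delta+(1-\theta)P_S^*\beta\lambda_k(A)$, where $\lambda_k(A)$ are the eigenvalues of the symmetric matrix $A$; the largest falls below one exactly when $(1-\theta)\frac{\gamma}{\gamma+\theta}\frac{\beta\lambda_{\max}(A)}{\delta}<1$, and the smallest automatically exceeds $-1$ because $1-\delta+(1-\theta)P_S^*\beta\lambda_{\min}(A)\ge 1-\delta-(1-\theta)P_S^*\beta\lambda_{\max}(A)>1-2\delta\ge-1$. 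Hence $\rho(M)<1$ under the stated condition, giving local stability.

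For part (b), I would reuse the monotone upper-bounding strategy employed for SIS and SIRS, but in two stages, because the target fixed point is no longer the origin. First I would verify that the region $\{(P_R,P_I):P_{R,i},P_{I,i}\ge0,\ P_{R,i}+P_{I,i}\le1\}$ is forward invariant, so each susceptible coordinate $P_{S,i}=1-P_{R,i}-P_{I,i}$ stays in $[0,1]$; summing the two update equations and using $\theta+(1-\theta)q\le1$ for $q\in[0,1]$ shows the coordinate sum never exceeds $1-\gamma P_{R,i}\le1$. Then, bounding $1-\prod_{j\in N_i}(1-\beta P_{I,j})\le\beta\sum_{j\in N_i}P_{I,j}$ and $P_{S,i}\le1$ in \eqref{nonlinear_I_vd} yields the componentwise inequality $P_I(t+1)\preceq\big((1-\delta)I_n+(1-\theta)\beta A\big)P_I(t)$. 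The bounding matrix is entrywise nonnegative, so the partial order is preserved along iterations, and its spectral radius equals $1-\delta+(1-\theta)\beta\lambda_{\max}(A)$, which is below one exactly under the global condition $(1-\theta)\frac{\beta}{\delta}\lambda_{\max}(A)<1$ (the negative eigenvalues of $A$ again cause no trouble, by the same estimate as in part (a)). Since this matrix is symmetric, its powers decay in operator norm, so $0_n\preceq P_I(t)\preceq\big((1-\delta)I_n+(1-\theta)\beta A\big)^t P_I(0)\to 0_n$, i.e. $P_I(t)\to0_n$.

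With the infection driven to zero, the final step is to show $P_R(t)\to P_R^* 1_n$. Writing $q_i(t)=P_{R,i}(t)-P_R^*$ and using $P_R^*=\frac{\theta}{\gamma+\theta}$, the recursion \eqref{nonlinear_R_vd} rearranges into the scalar affine form $q_i(t+1)=(1-\gamma-\theta)q_i(t)+(\delta-\theta)P_{I,i}(t)$. Because $|1-\gamma-\theta|<1$ and the forcing term $(\delta-\theta)P_{I,i}(t)$ vanishes, a standard argument (a contractive linear recursion with an asymptotically vanishing input converges to its unforced fixed point) gives $q_i(t)\to0$, completing the convergence to $(P_R^* 1_n, 0_n)$.

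The main obstacle I anticipate is this second stage. Unlike the SIS and SIRS disease-free analyses, in which both marginals collapse to the origin and a single monotone domination suffices, here the recovered marginals must settle at the nonzero value $P_R^*$, so the proof cannot be a pure domination by a contracting linear system. The coupling must be decoupled in the correct triangular order---first establish $P_I(t)\to0_n$ using the crude bound $P_{S,i}\le1$, and only then feed this into the $P_R$ recursion as a vanishing perturbation---and one must argue carefully that the forced recursion genuinely converges to $P_R^*$ rather than merely remaining bounded.
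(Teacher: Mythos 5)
Your proposal is correct and follows essentially the same route as the paper: part (a) via the block-triangular Jacobian $M$, whose diagonal blocks are $(1-\gamma-\theta)I_n$ and $(1-\delta)I_n+(1-\theta)P_S^*\beta A$, and part (b) via the componentwise domination $P_I(t+1)\preceq\left((1-\delta)I_n+(1-\theta)\beta A\right)P_I(t)$, which is exactly the paper's displayed bound \eqref{upperbound_2_vd}. Your write-up is in fact more complete than the paper's two-line proof---the forward-invariance check justifying $P_{S,i}\le 1$, the verification that the eigenvalues of $M$ also stay above $-1$ (the paper conflates ``largest eigenvalue less than $1$'' with ``norm less than $1$''), and especially the final cascaded step driving $P_R(t)\to P_R^*1_n$ once $P_I(t)\to 0_n$ are all left implicit in the paper.
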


\begin{figure}[tpb]
  \centering
    \includegraphics[width=0.9\columnwidth]{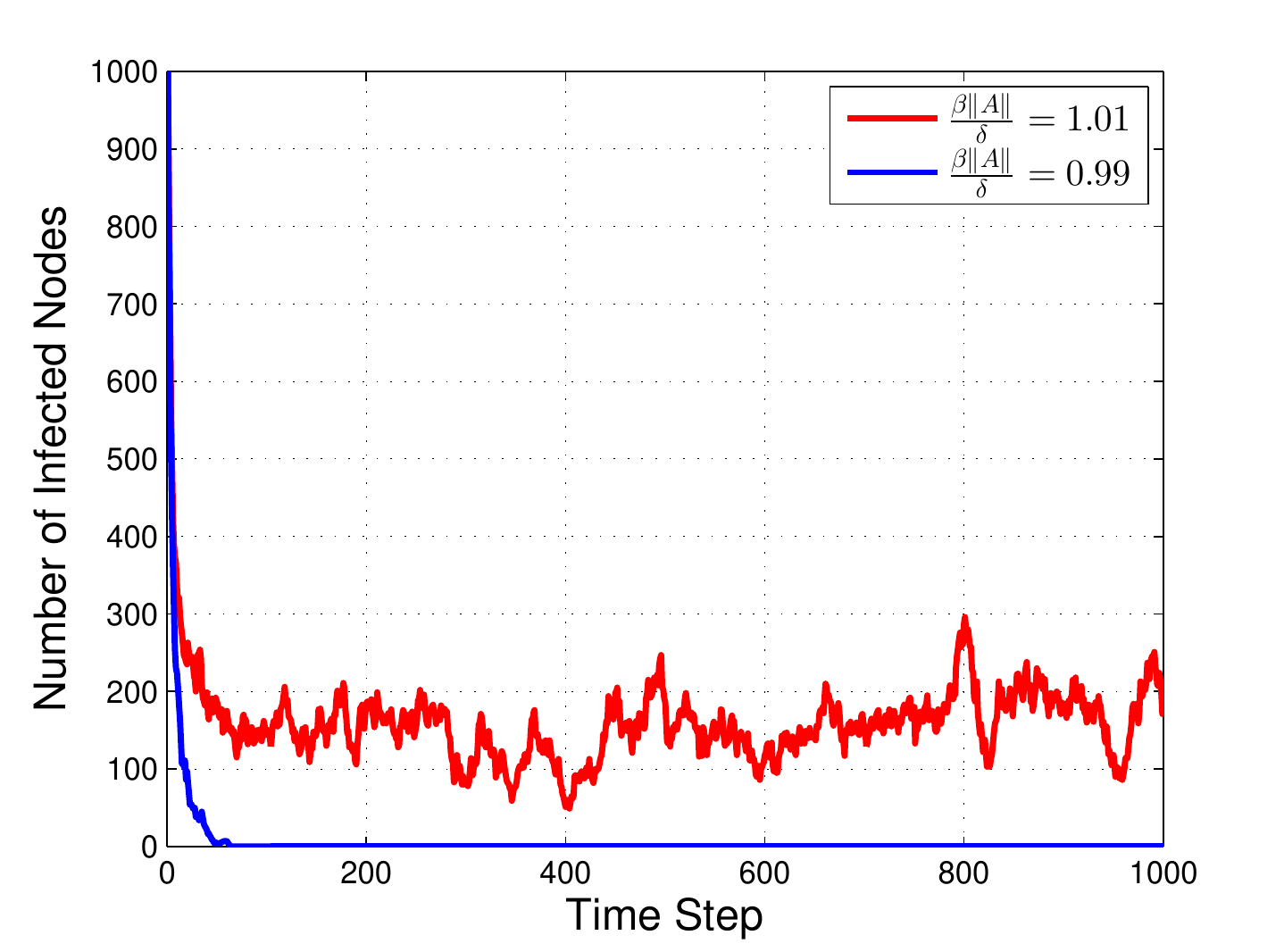}
      \caption{The evolution of an SIS epidemic over an Erd\H{o}s-R\'enyi graph with $n=2000$ nodes. Below the threshold we observe fast extinction of the epidemic (blue curve). Above the threshold, convergence is not observed (red curve).}
      \label{plot0}
\end{figure}

\begin{figure*}[!t]
\centering
\subfloat{\includegraphics[width=2.3in]{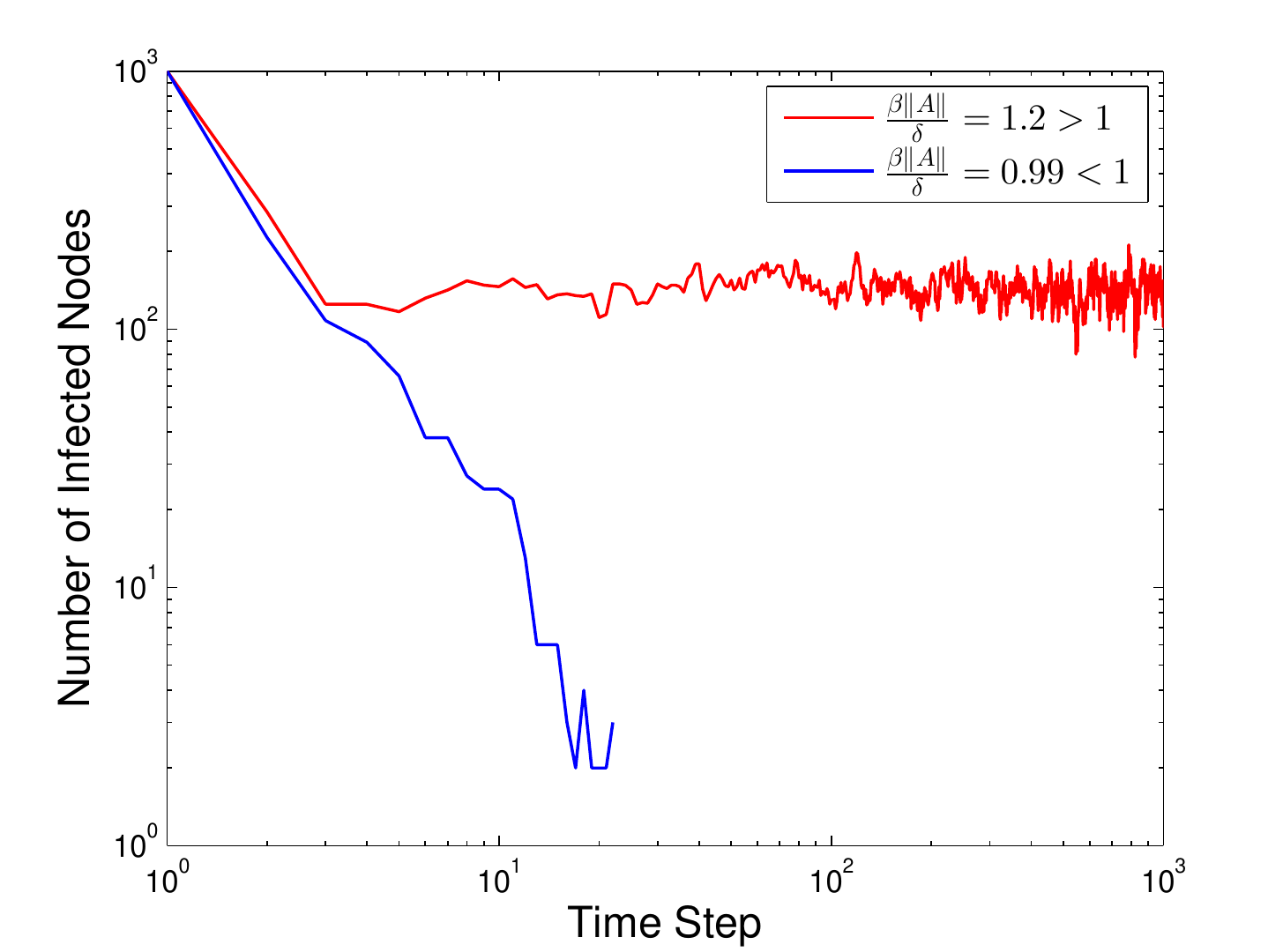}%
\label{plot1}}
\subfloat{\includegraphics[width=2.3in]{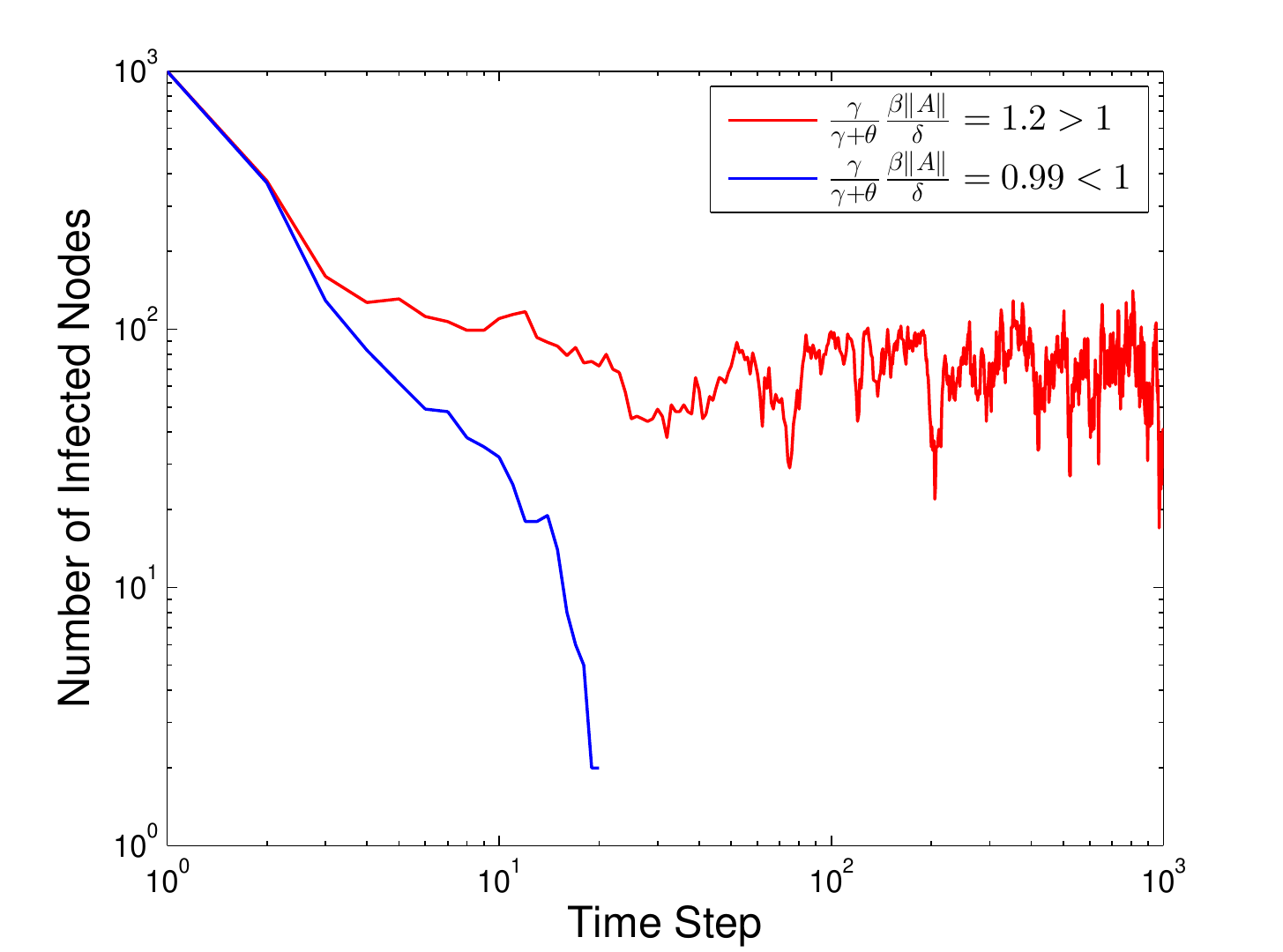}%
\label{plot2}}
\subfloat{\includegraphics[width=2.3in]{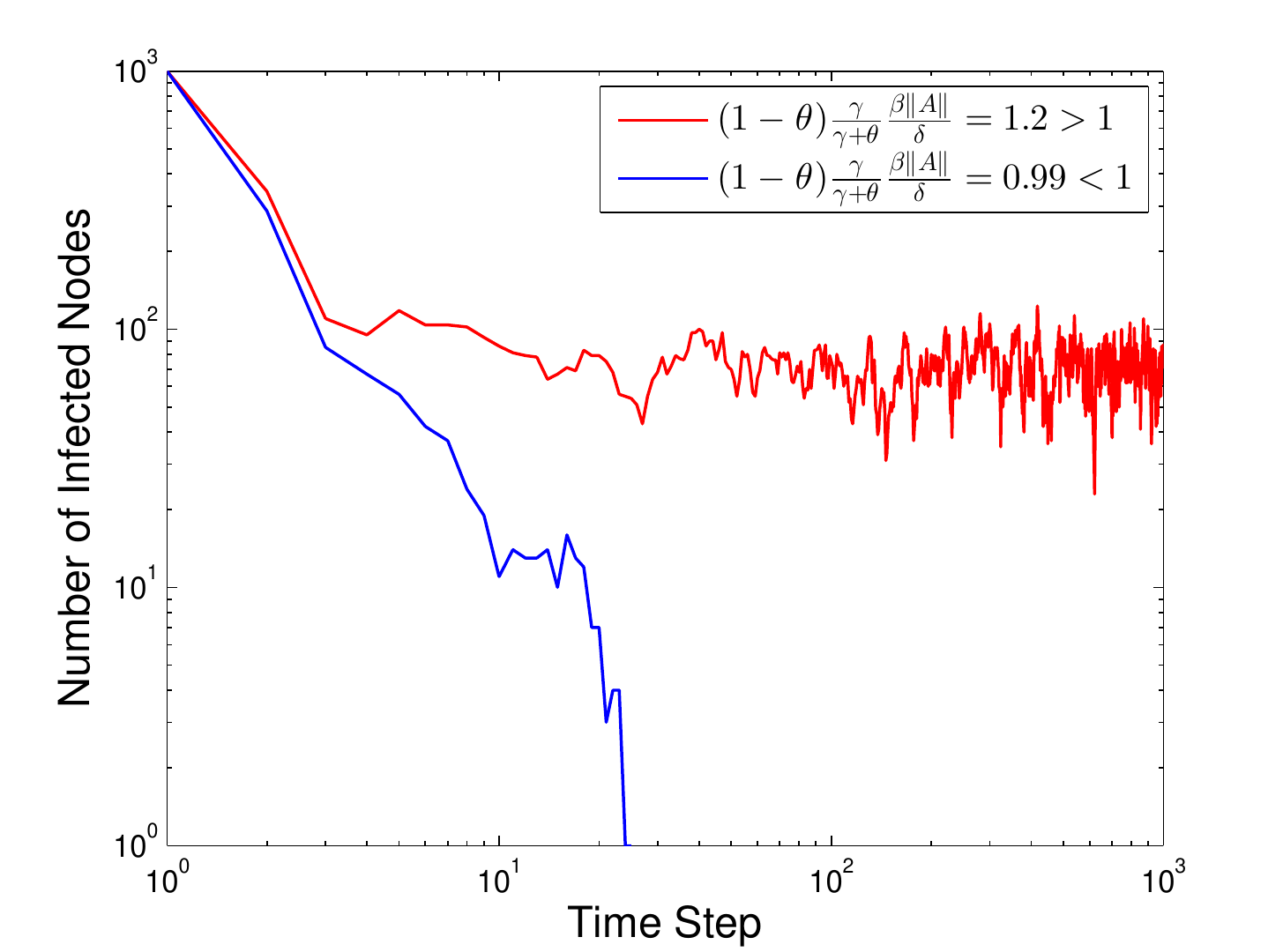}%
\label{plot3}}
\caption{The evolution of a) SIRS, b) SIV-Vaccination-Dominant, c) SIV-Infection-Dominant epidemics over an Erd\H{o}s-R\'enyi graph with $n=2000$ nodes. The blue curves show fast extinction of the epidemic. The red curves show epidemic spread around the nontrivial fixed point (convergence is not observed.)}
\label{joint}
\end{figure*}

\paragraph{Epidemic Spread: $(1-\theta)\frac{\gamma}{\gamma+\theta}\frac{\beta}{\delta}\lambda_{max}(A)>1$}
As before, the disease-free fixed point of the mapping is not stable when $(1-\theta)\frac{\gamma}{\gamma+\theta}\frac{\beta}{\delta}\lambda_{max}(A)>1$, and there exists a unique second fixed point.
\begin{theorem}
If $(1-\theta)\frac{\gamma}{\gamma+\theta}\frac{\beta}{\delta}\lambda_{max}(A)>1$, the nonlinear map (\ref{nonlinear_R_vd}, \ref{nonlinear_I_vd}), has a unique nontrivial fixed point.
\end{theorem}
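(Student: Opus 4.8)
The plan is to reuse verbatim the existence-and-uniqueness machinery developed for the SIRS model in Theorem~\ref{thm:SIRS_nontrivial} and adapted for the infection-dominant SIV model in Theorem~\ref{thm:SIV_id_nontrivial}, changing only the scalar constants that set the threshold. First I would rearrange the infected-probability recursion \eqref{nonlinear_I_vd} into the canonical form
\[
P_{I,i}(t+1)=P_{I,i}(t)+\big(1-P_{R,i}(t)-P_{I,i}(t)\big)\big(\Xi_i(P_I(t))-\omega(P_{R,i}(t),P_{I,i}(t))\big),
\]
and a one-line computation identifies $\Xi_i(P_I)=(1-\theta)\big(1-\prod_{j\in N_i}(1-\beta P_{I,j})\big)$ and $\omega(P_{R,i},P_{I,i})=\frac{\delta P_{I,i}}{1-P_{R,i}-P_{I,i}}$. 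The only structural change from the SIRS case is the constant prefactor $(1-\theta)$ in front of $\Xi_i$; the map $\omega$ is literally unchanged.

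Second I would check the structural hypotheses. Since $\omega$ coincides with the SIRS one, properties (d$'$), (e) and (f) hold exactly as in Section~\ref{sec:SIRS_secondfixedpoint}, with (d$'$) yielding $\frac{\partial\omega}{\partial P_{I,i}}\big|_{(P_{R,i},0)}=\frac{\delta}{1-P_{R,i}}$. For $\Xi_i$, multiplication by $(1-\theta)\in(0,1]$ leaves the sign of every first and second partial derivative intact, so (b) and (c) are immediate, while (a) now reads $\Xi_i(0_n)=0$ with $\frac{\partial\Xi_i}{\partial P_{I,j}}\big|_{0_n}=(1-\theta)\beta A_{i,j}$ --- the effective per-edge infection rate is $(1-\theta)\beta$.

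Third I would invoke the existence-and-uniqueness argument itself. The disease-free fixed point sits at $P_{R,i}=P_R^*=\frac{\theta}{\gamma+\theta}$, $P_{I,i}=0$, so $1-P_R^*=P_S^*=\frac{\gamma}{\gamma+\theta}$. Linearizing the $\Xi$-$\omega$ system there, the infected subsystem is driven by $(1-\delta)I_n+P_S^*(1-\theta)\beta A$ --- exactly the bottom-right block of the matrix $M$ already recorded for the linear model --- whose spectral radius exceeds $1$ iff $P_S^*(1-\theta)\beta\lambda_{\max}(A)>\delta$, i.e. iff $(1-\theta)\frac{\gamma}{\gamma+\theta}\frac{\beta}{\delta}\lambda_{\max}(A)>1$. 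By the analog of Lemma~\ref{lm:v} (with $\beta$ replaced by the effective rate $P_S^*(1-\theta)\beta$ and $\delta$ kept as the decay), this hypothesis is equivalent to the existence of a strictly positive $v$ with $\big(P_S^*(1-\theta)\beta A-\delta I_n\big)v\succ 0_n$, which seeds a strict super-solution. Feeding this into the monotone-iteration-plus-concavity argument of Theorem~\ref{thm:SIRS_nontrivial} --- existence by sandwiching a monotone orbit between the origin and the super-solution, uniqueness from the strict sublinearity of $\Xi_i$ supplied by Lemma~\ref{Lem:ccv} together with the strict concavity in (c) --- produces a unique nontrivial zero of $\Psi_i=\Xi_i-\omega$, equivalently a unique nontrivial fixed point of \eqref{nonlinear_R_vd}, \eqref{nonlinear_I_vd}.

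The hard part will be bookkeeping rather than conceptual. Two independent multiplicative corrections to the bare SIS threshold must be tracked simultaneously: the factor $(1-\theta)$ coming from the reduced effective infection rate inside $\Xi_i$, and the factor $\frac{\gamma}{\gamma+\theta}=P_S^*$ coming from the fact that only that fraction of the population is susceptible at the disease-free equilibrium, which enters through the shifted evaluation point in (d$'$). I would also need to confirm that the coupled recovered-probability dynamics stays consistent --- in particular that the $R$-recursion fixed-point constraint $P_{R,i}=\frac{(\delta-\theta)P_{I,i}+\theta}{\gamma+\theta}$ collapses the two-variable system to a single scalar concave system in $P_I$ without violating any of (a)--(f). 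This elimination step is the one most prone to sign slips, but once it is verified the result follows by direct transcription of the SIRS and infection-dominant proofs.
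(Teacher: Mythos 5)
Your proposal is correct and follows exactly the route the paper intends: the paper omits this proof, stating only that it is ``similar to that of Theorem~\ref{thm:SIV_id_nontrivial},'' and your adaptation --- absorbing the factor $(1-\theta)$ into $\Xi_i$ so that $\frac{\partial \Xi_i}{\partial P_{I,j}}\big|_{0_n}=(1-\theta)\beta A_{i,j}$, keeping $\omega$ and hence property (d$'$) unchanged, and replacing the fixed-point relation \eqref{relation} by $P_{R,i}^*=\frac{\theta+(\delta-\theta)P_{I,i}^*}{\gamma+\theta}$ --- is precisely that omitted argument. The constants you track also agree with the paper's linearization, whose infected block is $(1-\delta)I_n+(1-\theta)P_S^*\beta A$, yielding the stated threshold.
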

The proof is similar to that of Theorem \ref{thm:SIV_id_nontrivial}, and is omitted for brevity.

\subsubsection{Analysis of the Exact Markov Chain}
As shown above, the stability condition of the main fixed point (epidemic extinction) is relaxed by a factor of $(1-\theta)$ in the vaccination-dominant model. In this part, we show that the condition for the fast mixing time of the Markov chain is also relieved by the same factor.
\begin{theorem}\label{thm_mixing_vd}
If $(1-\theta)\frac{\beta\lambda_{\max}(A)}{\delta}<1$, the mixing time of the Markov chain whose transition matrix $S$ is described by Eqs. \eqref{MC1_vd} and \eqref{MC2_vd} is $O(\log n)$.
\end{theorem}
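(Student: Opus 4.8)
The plan is to follow the same two-step template that underlies the infection-dominant result (Theorem~\ref{thm_mixing_id}), modifying only the effective infection matrix so as to pick up the extra factor $(1-\theta)$ that appears in the vaccination-dominant kernel \eqref{MC2_vd}. First I would establish a linear marginal recursion for the infection probabilities. Starting from the exact equation \eqref{exact_I_vd}, I bound the product by the union bound $1-\prod_{j\in N_i}(1-\beta\mathbbm{1}_{\xi_j(t)=1})\le \beta\sum_{j\in N_i}\mathbbm{1}_{\xi_j(t)=1}$, absorb the factor $(1-P_{R,i}(t)-P_{I,i}(t))=\mathbb{P}(\xi_i(t)=0)$ into the conditional expectation to form a joint expectation, and use $\mathbb{P}(\xi_j(t)=1,\xi_i(t)=0)\le p_{I,j}(t)$. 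This yields
\[
p_{I,i}(t+1) \le (1-\delta)\,p_{I,i}(t) + (1-\theta)\,\beta\sum_{j\in N_i} p_{I,j}(t),
\]
that is, $p_I(t+1)\preceq \hat M\,p_I(t)$ with $\hat M=(1-\delta)I_n+(1-\theta)\beta A$. This is precisely the analog, for the present kernel, of the tightest-marginal bound of Proposition~\ref{Lem:LP}; the only difference from the infection-dominant case is the prefactor $(1-\theta)$ multiplying the network term, and it is this factor that propagates into the threshold.

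Next I would note that the hypothesis $(1-\theta)\frac{\beta\lambda_{\max}(A)}{\delta}<1$ is equivalent to $\lambda_{\max}(\hat M)=(1-\delta)+(1-\theta)\beta\lambda_{\max}(A)<1$, and that symmetry of $\hat M$ together with the Perron--Frobenius bound $|\lambda_{\min}(A)|\le\lambda_{\max}(A)$ forces the spectral radius below one, so $\|\hat M\|<1$. Iterating the marginal bound from the worst-case start $p_I(0)\preceq 1_n$ (legitimate since $\hat M$ has nonnegative entries and thus preserves $\preceq$) and applying a union bound over nodes gives
\[
\mathbb{P}\big(\exists\, i:\ \xi_i(t)=1\big)\ \le\ \sum_{i=1}^n p_{I,i}(t)\ \le\ 1_n^{\mathrm T}\hat M^{\,t}1_n\ \le\ n\,\|\hat M\|^{\,t},
\]
so the infection is extinct after $t_1=O(\log(n/\epsilon))$ steps with probability at least $1-\epsilon/2$. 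This is the step that carries the threshold condition and the logarithmic timescale.

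For the second phase I would exploit the structural fact, already observed for the infection-dominant model, that once no node is infected the chain can never re-enter an infected state and decouples into $n$ independent two-state $S\leftrightarrow R$ chains with transition probabilities $\theta$ and $\gamma$, whose stationary law is the product $\pi=\bigotimes_{i=1}^{n}\big(\tfrac{\gamma}{\gamma+\theta},\tfrac{\theta}{\gamma+\theta}\big)$. Assuming $\gamma\theta\neq 1$ (so each two-state chain is ergodic with a spectral gap bounded away from $0$ uniformly in $n$), I would run the $\mu$-started chain freely until extinction and then couple each coordinate's $S$--$R$ walk to a stationary copy; each such coupling coalesces at a geometric rate with a fixed constant, so a union bound over the $n$ coordinates shows that all coordinates agree within a further $s=O(\log(n/\epsilon))$ steps with probability at least $1-\epsilon/2$. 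Combining the two phases gives $\|\mu S^t-\pi\|_{TV}\le\epsilon$ for $t=t_1+s=O(\log(n/\epsilon))$, hence $t_{mix}(\epsilon)=O(\log n)$, of the same order as the contact-model bound $t_{mix}(\epsilon)\le \frac{\log(n/\epsilon)}{-\log\|\hat M\|}$.

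The main obstacle is the combination step, not either phase in isolation. Unlike the SIS and SIRS chains, whose stationary distribution is the single all-healthy state $e_{\bar 0}$ and for which the total-variation distance equals the probability of not being absorbed, here the stationary distribution is a nontrivial product measure on $\{S,R\}^n$: extinguishing the infection is necessary but not sufficient, and one must additionally equilibrate the decoupled vaccination dynamics and control both events inside a single $O(\log n)$ window. The delicate point is to arrange the coupling so that the geometric per-coordinate coalescence after extinction, compounded with the union bound over $n$ coordinates, does not inflate the horizon beyond $O(\log n)$. By contrast, the marginal recursion itself --- with the corrected $(1-\theta)$ factor --- is routine given Proposition~\ref{Lem:LP}, and the extinction estimate is identical in form to that of Theorem~\ref{thm_mixing_id}.
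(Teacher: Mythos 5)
Your proposal is correct and proves the stated bound, but it departs from the paper's proof in two ways worth recording. First, the paper obtains the linear recursion $p_I(t+1)\preceq\bigl((1-\delta)I_n+(1-\theta)\beta A\bigr)p_I(t)$ by rerunning the LP-duality argument of Proposition~\ref{Lem:LP}: it recomputes $(SBf_i)_X$ for the vaccination-dominant kernel (the only changed entry being $(1-\theta)(1-(1-\beta)^{m_{i-n}})$ when $X_{i-n}=0$) and exhibits the feasible multiplier $\lambda^*_{n+j}=\beta(1-\theta)$ for $j\in N_i$. You instead derive the same inequality directly from \eqref{exact_I_vd} via the union bound $1-\prod_{j\in N_i}(1-\beta\mathbbm{1}_{\xi_j(t)=1})\le\beta\sum_{j\in N_i}\mathbbm{1}_{\xi_j(t)=1}$ and $\mathbb{P}(\xi_j(t)=1,\xi_i(t)=0)\le p_{I,j}(t)$; this is shorter and avoids duality, at the cost of the (here irrelevant) extra information that the bound is the tightest one expressible in the marginals. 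Second, and more substantively, the paper's proof ends at the extinction estimate: following the template of Theorem~\ref{thm_mixing_id}, it bounds $\epsilon<\mathbb{P}(\text{some node infected at time }t)\le n\|(1-\delta)I_n+(1-\theta)\beta A\|^t$ and concludes, which implicitly uses $\|\mu S^t-\pi\|_{TV}\le\mathbb{P}_\mu(\text{some node infected at time }t)$. As you observe, for the SIV chain $\pi$ is not the point mass $e_{\bar 0}$ but the product measure $\prod_i\bigl(\tfrac{\gamma}{\gamma+\theta},\tfrac{\theta}{\gamma+\theta}\bigr)$ on $\{S,R\}^n$, and that inequality can fail (already for $n=1$ with no infection present), so extinction alone does not control the total-variation distance. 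Your second phase---coupling the $n$ decoupled two-state $S\leftrightarrow R$ chains to stationarity in a further $O(\log(n/\epsilon))$ steps, valid under the paper's own caveat $\gamma\theta\neq1$ (together with $\gamma+\theta>0$)---supplies exactly the step the paper leaves implicit, and since it costs only another $O(\log n)$ the conclusion is unchanged. In short, your route is more elementary in the first step and more complete in the last one.
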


\section{Experimental Results}

We show the simulation results on Erd\H{o}s-R\'enyi graphs, for below and above the epidemic thresholds, and they confirm the theorems proved in the paper. 

In a graph with $n=2000$ and $\lambda_{\max}(A)=16.159$, for SIS epidemics, we fix $\delta=0.9$ and try different values of $\beta$. As it can be seen in Fig. \ref{plot0}, when the condition $\frac{\beta\|A\|}{\delta}<1$ is satisfied (e.g. $\beta=0.055$) the epidemic decays exponentially, and dies out quickly. In contrast when $\frac{\beta\|A\|}{\delta}>1$ (e.g. $\beta=0.056$), the epidemic does not exhibit convergence to the disease-free state in any observable time. In fact the epidemic keeps spreading, around the nontrivial fixed point.

The same behavior is observed for the other models as well. The results are plotted in Fig. \ref{joint}, in log-log scale. $\gamma$ and $\theta$ are set to $0.5$, and we change $\beta$. For SIRS model, the threshold condition is $\frac{\beta\|A\|}{\delta}< 1$, which is the same as that of SIS, and it means having an additional recovered state does not necessarily make the system more stable.
For the first SIV model (infection-dominant), we observe the same exponential decay when $\frac{\gamma}{\gamma+\theta}\frac{\beta\|A\|}{\delta}<1$ (e.g. when $\|A\|=16.232$ and $\beta=0.11$), which means the vaccination indeed makes the system more stable. Furthermore, for the vaccination-dominant model, under $(1-\theta)\frac{\gamma}{\gamma+\theta}\frac{\beta\|A\|}{\delta}<1$ (e.g. $\beta=0.22$), we observe the fast convergence again, which confirms that the system is even more stable when vaccination is dominant. As plots show, for above the threshold cases (e.g. $\beta=0.07$ for SIRS, $0.13$ for SIV-infection-dominant, and $0.29$ for SIV-vaccination-dominant) we do not observe epidemic extinction in any reasonable time.

\section{Summary and Conclusions}

We studied the exact network-based Markov chain Model for the SIS, SIR, SIRS and SIV epidemics, and their celebrated mean-field approximations, as well as their linear approximations. 
Below a certain threshold, the disease-free fixed point is globally stable for the nonlinear model, and also the mixing time of the exact Markov chain is $O(\log n)$, which means the epidemic dies out fast. Furthermore, above a threshold, the disease-free fixed point is not stable for the linear and nonlinear models, and there exists a second unique fixed point, which corresponds to the endemic state. This nontrivial fixed point is also stable in most cases. Fig. \ref{fig:comp} compares and summarizes all the results. As one can see, for SIS and SIRS cases there is no gap between the two thresholds, but there is a gap in SIV cases, over which only the local stability of the mean-field approximation is known.
Finally we should remark that the exact epidemic threshold of the Markov chain, and whether such threshold exists, is still an open question. Extensive numerical simulations suggest the existence of such threshold and a phase transition behavior. However, the observed threshold, for certain networks, is different from the threshold for stability of the nonlinear model.

\appendix  

%

\setcounter{equation}{0}
\makeatletter
\renewcommand{\theequation}{A.\arabic{equation}}
\renewcommand{\thefigure}{A.\arabic{figure}}

{\noindent \bfseries Proof of Lemma \ref{Lem:ccv}.}
$h_{i,u,v}(s)$ is concave by property (c). 
\begin{align}
&\frac{d}{ds}\left( \frac{h_{i,u,v}(s) - h_{i,u,v}(0)}{s} \right) \nonumber \\
&= \frac{1}{s}\left(h_{i,u,v}'(s) - \frac{h_{i,u,v}(s) - h_{i,u,v}(0)}{s} \right) \\
&= \frac{1}{s}\left(h_{i,u,v}'(s) - h_{i,u,v}'(s^*) \right) \leq 0
\end{align}
$\displaystyle \frac{h_{i,u,v}(s) - h_{i,u,v}(0)}{s} = h_{i,u,v}'(s^*)$ for some $s^* \in (0,s)$ by the mean value theorem.
\hfill$\blacksquare$

{\noindent \bfseries Proof of Lemma \ref{lm:v}.}
Suppose that $\lambda_{max}((1-\delta)I_n + \beta A ) > 1 $, and $w$ is an eigenvector corresponding to the maximum eigenvalue.
$(1-\delta)I_n + \beta A$ is non-negative and irreducible because $A$ is the adjacency matrix of a connected graph $G$ 
(a non-negative matrix $X$ is irreducible if there exists $m(i,j) \in \mathbb{N}$ for each pair of indices $i,j$ such that $(X^{m(i,j)})_{i,j}$ is nonzero).
Every entry of $w$ is positive by Perron-Frobenius theorem for irreducible matrices, 
and $(\beta A - \delta I_n )w \succ 0_n$ because the eigenvalue corresponding to $w$ is greater than unity.

Suppose that there exits $v \succ 0_n$ such that $(\beta A - \delta I_n )v \succ 0_n$. Then, $((1-\delta)I_n + \beta A )v \succ v $
\begin{align}
\lambda_{max}((1-\delta)I_n + \beta A ) &= \sup_{u \in \mathbb{R}^n}  \frac{ \| ((1-\delta)I_n + \beta A )u \|_2 }{\| u \|_2} \\
&\geq \frac{ \| ((1-\delta)I_n + \beta A )v \|_2 }{\| v \|_2} > 1
\end{align}
\hfill$\blacksquare$

{\noindent \bfseries Proof of Theorem \ref{Thm:existence}.}
$U_i$ and $U$ are defined by $\Psi$ as below.
\begin{equation}
U_i = \{ x \in [0,1]^n : \Psi_i(x) \geq 0 \} \qquad U= \bigcap_{i=1}^n U_i
\end{equation}
By the lemma above, there exists $v \succ 0_n$ such that $(\beta A - \delta I_n )v \succ 0_n$. 
There is a small $\epsilon >0$ such that $\epsilon v \in U$ 
because the Jacobian of $\Psi=(\Psi_1, \cdots, \Psi_n)^\mathrm{T}$ is equal to $\beta A - \delta I_n$ at the origin and $\Psi(0)=0$ by property (a) of $\Xi$ and (d) of $\omega$.

Define $\max(x,y)= (\max(x_1,y_1), \cdots, \max(x_n,y_n))$. We claim that $\max(x,y) \in U$ if $x,y \in S$. The proof follows. 

$\max(x_i, y_i)=x_i$ without loss of generality for $x,y \in U$.
\begin{equation}
\Psi_i(\max(x,y)) = \Xi_i(\max(x,y)) - \omega(x_i) \geq \Xi_i(x) - \omega(x_i) \geq 0
\end{equation}
The first inequality holds by property (b), and the second inequality holds because $x \in U$.
Therefore $\max(x,y) \in U_i$ for every $i$ and it completes the proof of the claim.

This leads to the existence of a unique maximal point $x^* \in U$ such that $x^* \succeq x$ for all $x \in U$.
$\epsilon v \in U$ and the maximality of $x^*$ guarantees that $x^*$ has positive entries.

We claim that $\Psi_i(x^*)=0$ for all $i \in \{1, \cdots, n \}$. Assume that $\Psi_i(x^*) \neq 0$ for some $i$. Then, $\Psi_i(x^*)>0$ since $x^* \in U$. 
There exists $z_i > x_i^*$ such that 
\begin{equation}
\Psi_i(x^*) = \Xi_i(x^*) - \omega(x^*_i) > \Xi_i(x^*) - \omega(z_i) \geq 0 \label{Eq:mpz1}
\end{equation}
Define $z=(z_1, \cdots, z_n)^\mathrm{T}$ with $z_j =x^*_j$ for $j \neq i$. For every $k \in \{1, \cdots, n\}$,
\begin{equation}
\Psi_k(z) = \Xi_k(z) -\omega(z_k) \geq \Xi_k(x^*) - \omega(z_k) \geq 0 \label{Eq:mpz2}
\end{equation}
The first inequality of \eqref{Eq:mpz2} holds by property (b). The second inequality of \eqref{Eq:mpz2} holds by \eqref{Eq:mpz1} if $k=i$ and the inequality holds by definition of $z$ if $k \neq i$. \eqref{Eq:mpz2} guarantees that $z \in U$. 
$z_i > x_i^*$ and $z_j =x^*_j$ for $j \neq i$ contradict that $x^*$ is the maximal point of $U$. 
The assumption was therefore wrong, $\Psi_i(x^*)=0$ for all $i \in \{1, \cdots, n \}$, and there exists a nontrivial zero of $\Psi$.

The next step is showing that $x^*$ is the unique nontrivial zero of $\Psi$. Assume that $y^*$ is another nontrivial zero. Then $y^* \in U$ and $\Psi(y^*)=0_n$. 

We claim that every entry of $y^*$ is positive. Define $K_0$ and $K_+$ where $y^*_i=0$ if $i \in K_0$ and $y^*_i >0 $ if $i \in K_+$. Then, $K_0 \cup K_+ = \{1, \cdots, n\}$. $K_0$ and $K_+$ are separation of vertex set of the system.
Assume that $K_0$ is a non-empty set. 
There exists $j \in K_+$ such that $j$ is connected to a node in $K_0$ because $G$ is connected.
Denote $k \in K_0$ as a node which is connected to $j$.
\begin{equation}
\Psi_k(y^*) = \Xi_k(y^*) - \omega(y^*_k) = \Xi_k(y^*) > 0 
\end{equation}
The inequality above is strict by property (b) since $k \in N_j$ and $y^*_j >0$. 
It contradicts that $\Psi(y^*)=0$. $K_0$ is the empty set.

We get the following inequality by Lemma~\ref{Lem:ccv} for $u=0_n, v=x^*$ and $s \leq 1$.
\begin{equation}
\frac{\Xi_i(sx^*)}{s} = \frac{h_{i,u,v}(s) - h_{i,u,v}(0)}{s}
\geq h_{i,u,v}(1) - h_{i,u,v}(0)= \Xi_i(x^*) \label{Eq:concieq}
\end{equation}

There exists $\alpha \in (0,1)$ such that $y^* \succeq \alpha x^*$ and $y^*_j = \alpha x^*_j$ for some $j \in \{1, \cdots, n \}$.
\begin{align}
\Psi_j(y^*) &= \Xi_j(y^*) - \omega(\alpha x^*_j) \\
&\geq \Xi_j(\alpha x^*) - \omega(\alpha x^*_j) \label{Eq:ieq1} \\ 
&\geq \alpha \Xi_j(x^*) - \omega(\alpha x^*_j) \label{Eq:ieq2} \\
&> \alpha \left( \Xi_j(x^*) - \omega(x^*_j) \right) = 0 \label{Eq:ieq3} 
\end{align}
\eqref{Eq:ieq1} and \eqref{Eq:ieq2} are guaranteed by property (b) and \eqref{Eq:concieq}.
\eqref{Eq:ieq3} also holds because $\displaystyle \frac{\alpha \omega(x^*_j)}{\alpha x^*_j} > \frac{\omega(\alpha x^*_j)}{\alpha x^*_j}$
by $\alpha \in (0,1)$, $x^*_j >0$ and property (f).

This contradicts that $\Psi_i(y^*)=0$ for all $i$. Therefore $x^*$ is the unique nontrivial zero of $\Psi$.
\hfill$\blacksquare$

{\noindent \bfseries Proof of Theorem \ref{Thm:GSofNFP}.}
It is trivial to check that $\displaystyle \frac{ \partial \Phi_i }{ \partial x_j} \geq 0$ for any $i,j \in \{1,\cdots, n\}$.

Suppose that $\Phi(x) \preceq x$. 
Then, $\Phi(\Phi(x)) \preceq \Phi(x)$ since $\Phi$ is increasing.
Similarly, $\Phi(\Phi(x)) \succeq \Phi(x)$ if $\Phi(x) \succeq x$.

Define a sequence $y^{(0)}= 1_n = (1,1, \cdots, 1)^\mathrm{T}$ and $y^{(k+1)}=\Phi(y^{(k)})$. 
\begin{equation}
y^{(1)} = (1-\delta)1_n + \delta \Xi(1_n) \preceq 1_n = y^{(0)} 
\end{equation}
The equation above implies that $y^{(k+1)} \preceq y^{(k)}$ for every $k \in \mathbb{N}$.
The sequence $\{y^{(k)}\}_{k=0}^\infty \subset [0,1]^n$ has a limit point because it is decreasing, and bounded from below.
Denote $y^*$ as a limit point of the sequence, then $\Phi(y^*)=y^*$. 
There are two candidates for $y^*$ because $\Phi$ has only two fixed points.

Since $\Phi$ is an increasing map, and $y^{(0)} \succeq x$ for every $x \in [0,1]^n$, $y^{(k)} \succeq \Phi^k(x)$.
$y^{(k)} \succeq \Phi^k(x^*)=x^*$ for every $k$ implies that $y^* \succeq x^*$. It also implies that $y^*=x^*$.
For any $P(0) \in [0,1]^n$, an upper bound of $P(t)$ is $y^{(t)}$ and it goes to $x^*$ as $t$ goes to infinity.

Suppose that all the entries of $P(0)$ are positive. 
This is reasonable since there exists $m$ such that all the entries of $P(m)$ are positive if $P(0)$ is not the origin.
There exists $\alpha \in (0,1)$ such that $\alpha x^* \preceq P(0)$.
Define a sequence $z^{(0)}= \alpha x^*$ and $z^{(k+1)}=\Phi(z^{(k)})$. 
\begin{align}
z^{(1)}_i &= z^{(0)}_i + ( 1- (1-\delta)z^{(0)}_i)\left( \Xi_i(\alpha x^*) - \frac{\delta \alpha x^*_i}{1- (1-\delta)\alpha x^*_i} \right) \\
&> z^{(0)}_i + \alpha ( 1- (1-\delta)z^{(0)}_i)\left( \Xi_i(x^*) - \frac{\delta x^*_i}{1- (1-\delta)x^*_i} \right)\\
&= z^{(0)}_i
\end{align}
The inequality above holds by \eqref{Eq:ieq1}, \eqref{Eq:ieq2} and \eqref{Eq:ieq3}.
It implies that $z^{(k+1)} \succeq z^{(k)}$ for every $k \in \mathbb{N}$, and $z^{(k)}$ gives a lower bound for $P(k)$. 
Since $z^{(0)}= \alpha x^* \preceq x^*$, $z^{(k)} \preceq \Phi^k(x^*)=x^*$. 
$\{z^{(k)}\}_{k=0}^\infty \subset [0,1]^n$ has a limit point because it is increasing, and bounded from above.
$x^*$ is the only possible limit point of $\{z^{(k)}\}_{k=0}^\infty$. 
The lower bound of $P(t)$ is $z^{(t)}$ and it goes to $x^*$ as $t$ goes to infinity.

Both the upper and lower bounds of $P(t)$ go to $x^*$ which implies that $P(t)$ converges to $x^*$.
\hfill$\blacksquare$

{\noindent \bfseries Proof of Proposition \ref{Lem:LP}.}
We maximize the marginal probability of infection at time $t+1$ ($p_i(t+1)$) given marginals at time $t$ ($p_j(t)$'s). For the sake of simplicity, let us drop the time index $t$ and mark time index only for $t+1$.
\begin{align}
\max_{\mu B = p^T, \mu \succeq 0} p_i(t+1) 
&= \max_{\mu B = p^T, \mu \succeq 0} \mu S B f_i \\ 
&= \max_{\mu \succeq 0} \min_\lambda \mu S B f_i - (\mu B - p^T)\lambda \\
&= \min_\lambda \max_{\mu \succeq 0} \mu (S B f_i - B \lambda ) + p^T \lambda 
\end{align}
$\max_{\mu \succeq 0} \mu (S B f_i - B \lambda) = +\infty$ if any entry of $(S B f_i - B\lambda)$ is strictly positive. It follows that $ S B f_i - B \lambda \preceq 0$. Evaluation of $S B f_i$ and $ B \lambda $ is as follows.
\begin{align}
(S B f_i)_X &=(S B)_{X,i} = \sum_{Y \in \{0,1\}^n} S_{X,Y} B_{Y,i} = \sum_{Y \in \{0,1\}^n} S_{X,Y} Y_i \\
&= \mathbb{P}(Y_i =1|X)= \left\{
\begin{array}{rl}
1-(1-\beta)^m & \text{if } X_i=0,\\
1-\delta(1-\beta)^m & \text{if } X_i=1.
\end{array} \right. \label{Eq:SBf_i}
\end{align}
$\mathbb{P}(Y_i=1|X)$ follows \eqref{Eq:noimmune} and $m$ is the number of infected neighbors of $i$ as stated before.
\begin{equation}
(B \lambda)_X = \lambda_0 + \sum_{k=1}^n B_{X,k} \lambda_k = \lambda_0 + \sum_{k=1}^n \lambda_k X_k \label{Eq:Blambda}
\end{equation}

We try several $X$ for \eqref{Eq:SBf_i}, \eqref{Eq:Blambda} and $ S B f_i - B \lambda \preceq 0$ to get a feasible $\lambda$.
\begin{equation}
\left\{
\begin{array}{rl}
X=\bar{0} &,  \lambda_0 \geq 0 \\
X=\hat{i} &,  \lambda_0 + \lambda_i \geq 1-\delta \\
X=\hat{j} , j \in N_i &, \lambda_0 + \lambda_j \geq \beta \\
X=\hat{j} , j \notin N_i &, \lambda_0 + \lambda_j \geq 0 \end{array} \right. \label{Eq:feasibility}
\end{equation}

We claim that $\lambda^*=(\lambda^*_0, \lambda^*_1, \cdots, \lambda^*_n )^T$ defined by $\lambda^*_0 = 0$, $\lambda^*_i = 1-\delta$, $\lambda^*_j= \beta$ for $j \in N_i$ and $\lambda^*_j= 0$ for $j \notin N_i$ is in feasible set.

For $X_i=0$, $|N_i \cap \mathbb{S}(X)|=m$
\begin{equation}
\mathbb{P}(Y_i =1|X) = 1-(1-\beta)^m \leq m\beta = \lambda^*_0 + \sum_{k=1}^n \lambda^*_k X_k
\end{equation}

For $X_i=1$, $|N_i \cap \mathbb{S}(X)|=m$
\begin{equation}
\mathbb{P}(Y_i =1|X) = 1-\delta(1-\beta)^m \leq 1- \delta + m\beta = \lambda^*_0 + \sum_{k=1}^n \lambda^*_k X_k
\end{equation}

Therefore $\lambda^*$ is in feasible set. 
\begin{align}
\max_{\mu B = p^T, \mu \succeq 0} p_i(t+1) 
&= \min_\lambda \max_{\mu \succeq 0} \mu (S B f_i - B \lambda ) + p^T \lambda \\
&\leq p^T \lambda^* = (1-\delta)p_i + \beta \sum_{j \in N_i} p_j
\end{align}
\hfill$\blacksquare$

{\noindent \bfseries Proof of Theorem \ref{Thm:upperboundmt}.}
In order to compute the mixing time \eqref{Eq:mixingdefn}, we should find the supremum of $\|\mu S^t-\pi\|_{TV}$. We have
\begin{align}
\|\mu S^t-\pi\|_{TV} &= \frac{1}{2}\sum\limits_X \lvert(\mu S^t)_X - \pi_X\rvert\\
&= \frac{1}{2}\sum\limits_X \lvert(\mu S^t)_X - (e_{\bar{0}})_X\rvert\\
&= \frac{1}{2}\big( 1-(\mu S^t)_{\bar{0}}\big) + \frac{1}{2}\sum\limits_{X\neq \bar{0}} (\mu S^t)_X\\
&= \frac{1}{2}\big( 1-(\mu S^t)_{\bar{0}}\big) + \frac{1}{2}\big( 1-(\mu S^t)_{\bar{0}}\big)\\
&= 1-(\mu S^t)_{\bar{0}}\\
&\leq 1-(e_{\bar{1}} S^t)_{\bar{0}} .
\end{align}
In the last inequality we have used the fact that the worst-case initial $\mu$ is the all-infected state, i.e. $e_{\bar{1}}$. This is rigorously established in Section \ref{sec:partialordering} through the partial ordering. 

Now for any $t< t_{mix}(\epsilon)$ we have
\begin{align}
\epsilon &< 1-(e_{\bar{1}} S^t)_{\bar{0}}\\
 &=1-\mathbb{P}\left( \substack{\text{all nodes are healthy at time $t$} \mid\quad \\\text{all nodes were infected at time $0$}}\right)\\
 &=\mathbb{P}\left( \substack{\text{some nodes are infected at time $t$} \mid \\\text{all nodes were infected at time $0$}} \right)\\
 &\leq \sum\limits_{i=1}^n \mathbb{P}\left( \substack{\text{node $i$ is infected at time $t$} \mid\qquad \\\text{all nodes were infected at time $0$}} \right)\\
 &\leq 1_{n}^T ((1-\delta)I_n + \beta A)^t 1_{n}\label{epsilon}\\
 &\leq \|1_{n}\|^2 \|(1-\delta)I_n + \beta A\|^t\\
 &=n\|(1-\delta)I_n + \beta A\|^t .\label{epsilon_last}
\end{align}
where \eqref{epsilon} comes from the fact that $\sum\limits_{i=1}^n p_i(t) = 1_{n}^T p(t) \leq 1_{n}^T ((1-\delta)I_n + \beta A)^t p(0)$.

Now since $\|(1-\delta)I_n + \beta A\|<1$, we get $t< \frac{\log \frac{n}{\epsilon}}{-\log \|(1-\delta)I_n + \beta A\|}$ for all $t< t_{mix}(\epsilon)$. Therefore $t_{mix}(\epsilon)\leq \frac{\log \frac{n}{\epsilon}}{-\log \|(1-\delta)I_n + \beta A\|}$, which concludes that the mixing time is $O(\log n)$.
\hfill$\blacksquare$

{\noindent \bfseries Proof of Lemma \ref{Lem:RinverseSR}.}
We want to compute the inverse matrix of $R$ first. Define a matrix $R'$.
\begin{equation}
R'_{X,Y}= \left\{
\begin{array}{rl}
(-1)^{|\mathbb{S}(Y-X)|} & \text{if } X \preceq Y,\\
0 & \text{otherwise } 
\end{array} \right.
\end{equation}
$|\mathbb{S}(Y-X)|$ represents the number of nodes which are infected in $Y$, but not in $X$. We claim that $R'=R^{-1}$. If $X \npreceq Y$, then $X \npreceq Z$ or $Z \npreceq Y$ holds for every $Z \in \{0,1\}^n$. By the definition of $R$ and $R'$, $R_{X,Z}=0$ or $R'_{Z,Y}=0$ if $X \npreceq Y$. It is straightforward that $(RR')_{X,Y}=0$ if $X \npreceq Y$. It's enough to consider the case $X \preceq Y$.
\begin{align}
(RR')_{X,Y} &= \sum_Z R_{X,Z} R'_{Z,Y} = \sum_{X \preceq Z \preceq Y} 1^{|\mathbb{S}(Z-X)|} (-1)^{|\mathbb{S}(Y-Z)|} \nonumber \\
&= (1-1)^{|\mathbb{S}(Y-X)|}
\end{align}
$(RR')_{X,Y} = 1$ if $|\mathbb{S}(Y-X)|=0$ and $(RR')_{X,Y} = 0$ otherwise. It leads that $RR'$ is an identity matrix of size $2^n$ and $R'=R^{-1}$.
\begin{align}
&(R^{-1}SR)_{X,Z} \\
&=\sum_{Y \preceq Z} (R^{-1} S)_{X,Y} = \sum_{Y \preceq Z} \sum_W R^{-1}_{X,W} S_{W,Y} \\
&= \sum_{Y \preceq Z} \sum_{W \succeq X} (-1)^{|\mathbb{S}(W-X)|} S_{W,Y} \\
&=\sum_{W \succeq X} (-1)^{|\mathbb{S}(W-X)|} \prod_{i \in \mathbb{S}(Z)^c} \mathbb{P}(\xi_i(t+1)=0|\xi(t)=W) \\
&=\sum_{W \succeq X} (-1)^{|\mathbb{S}(W-X)|} \delta^{|\mathbb{S}(W) \cap \mathbb{S}(Z)^c|}(1-\beta)^{\sum_{i\in \mathbb{S}(Z)^c } | N_i \cap \mathbb{S}(W) |}\\
&=\sum_{W \succeq X} (-1)^{|\mathbb{S}(W-X)|} \delta^{|\mathbb{S}(W) \cap \mathbb{S}(Z)^c|}(1-\beta)^{\sum_{i\in \mathbb{S}(W) } | N_i \cap \mathbb{S}(Z)^c |} \label{Eq:RinverseSR1}
\end{align}

By some algebra,
\begin{align}
&\delta^{-|\mathbb{S}(X) \cap \mathbb{S}(Z)^c|}(1-\beta)^{-\sum_{i \in \mathbb{S}(X) } | N_i \cap \mathbb{S}(Z)^c |}(R^{-1}SR)_{X,Z} \\
&= \sum_{W \succeq X} (-1)^{|\mathbb{S}(W-X)|} \delta^{|\mathbb{S}(W-X) \cap \mathbb{S}(Z)^c|}(1-\beta)^{\sum_{i\in \mathbb{S}(W-X) } | N_i \cap \mathbb{S}(Z)^c |} \\
&= \prod_{i \in \mathbb{S}(X)^c} \left( 1 - (1-\beta)^{| N_i \cap \mathbb{S}(Z)^c |} \delta^{1_{\{i \in \mathbb{S}(Z)^c\}}} \right) \label{Eq:RinverseSR2}
\end{align}

Define $\neg X=\bar{1} - X$. $\neg X$ is an opposite state of $X$ where each node is healthy in $\neg X$ if it is infected in $X$ and vice versa. From \eqref{Eq:RinverseSR1} and \eqref{Eq:RinverseSR2}, We simplify $(R^{-1}SR)_{X,Z}$ using $\neg X$ and $\neg Z$.
\begin{equation}
(R^{-1}SR)_{X,Z}=\mathbb{P}(\xi(t+1)=\neg X|\xi(t)=\neg Z) \geq 0
\end{equation}
\hfill$\blacksquare$

{\noindent \bfseries Proof of Lemma \ref{Lem:order}.}
We defined $2^n$-dimensional square matrix $R$ from Lemma~\ref{Lem:RinverseSR} because we can represent $\mu \leq_{st} \mu'$ using $R$. By definition of $\mu \leq_{st} \mu'$,
\begin{equation}
((\mu-\mu')R)_Y = \sum_X (\mu-\mu')_X R_{X,Y} = \sum_{X \preceq Y} (\mu - \mu')_X \geq 0 \label{Eq:RYpositive}
\end{equation}
$\mu \leq_{st} \mu'$ if and only if all of $(\mu-\mu')R$'s entries are non-negative. $((\mu-\mu')R)_Y=0$ if $Y=\bar{1}=(1, 1,, \cdots, 1)$ because both of $\mu$ and $\mu'$ are probability vectors whose $1$-norm is $1$. 

Define a row vector $\nu \in \mathbb{R}^{\{0,1\}^n}$ whose $Y$-th element is defined by $\nu_Y=((\mu-\mu')R)_Y$. $\nu_Y \geq 0$ for all $Y \in \{0,1\}^n$ by \eqref{Eq:RYpositive}. $\nu$ is a non-negative row vector, and $\nu_{\bar{1}}=0$. $\mu-\mu'=\nu R^{-1}$. We can understand $\mu-\mu'$ as a conical combination of all row vectors of $R^{-1}$ but the $\bar{1}$-th row vector. 

$\mu S \leq_{st} \mu' S$ if and only if $(\mu-\mu')SR$ is a non-negative vector. $\mu-\mu'=\nu R^{-1}$ for non-negative $\nu$ since $\mu \leq_{st} \mu'$. $(\mu-\mu')SR=\nu R^{-1}SR$ is non-negative since $\nu$ is non-negative and $R^{-1}SR$ is a matrix all of whose entries are non-negative by Lemma~\ref{Lem:RinverseSR}.
\hfill$\blacksquare$

{\noindent \bfseries Proof of Lemma \ref{Lem:uofr}.}
We begin by evaluating each entry of $Su(r)$.
\begin{align}
&(S u(r))_X \\
&= \sum_{Y \in \{0,1\}^n} S_{X,Y} u(r)_Y  \\
&= \sum_{Y \in \{0,1\}^n} \Big( \prod_{i \in \mathbb{S}(Y)} (1-r_i) \mathbb{P}(Y_i=1|X) \Big) \Big( \prod_{i \notin \mathbb{S}(Y)} \mathbb{P}(Y_i=0|X) \Big) \\
&=\prod_{i=1}^n (1-r_i) \mathbb{P}(Y_i=1|X) + \mathbb{P}(Y_i=0|X)
\end{align}

Assume $\mathbb{S}(X) \cap \mathbb{S}(Z) = \emptyset$ for two states $X,Z \in \{0,1\}^n$ i.e. there is no common infected node in the two states $X$ and $Z$. It is trivial to check that the following is true:
\begin{equation}
\mathbb{P}(Y_k=0|X+Z)=\mathbb{P}(Y_k=0|X)\mathbb{P}(Y_k=0|Z)
\end{equation}

For simplicity, we call $q_{k,X}=\mathbb{P}(Y_k=0|X)$.
\begin{align}
&(S u(r))_{X+Z} \\
&=\prod_{i=1}^n (1-r_i) \mathbb{P}(Y_i=1|X+Z) + \mathbb{P}(Y_i=0|X+Z) \\
&=\prod_{i=1}^n (1-r_i)(1-q_{i,X+Z}) + q_{i,X+Z} \\
&=\prod_{i=1}^n (1-r_i)(1-q_{i,X}q_{i,Z}) + q_{i,X}q_{i,Z} \\
&\geq \prod_{i=1}^n \left( (1-r_i)(1-q_{i,X}) + q_{i,X} \right) \left( (1-r_i)(1-q_{i,Z}) + q_{i,Z} \right) \label{Eq:dividingSur1} \\
&= (S u(r))_X (S u(r))_Z \label{Eq:dividingSur2}
\end{align}

\eqref{Eq:dividingSur1} holds by the following one for $a,b,c \in [0,1]$:
\begin{align}
&(c(1-ab)+ab) - (c(1-a) + a)(c(1-b) + b) \nonumber \\
&= c(1-c)(1-a)(1-b) \geq 0
\end{align}

Define $\hat{i} \in \{0,1\}^n$ as the state where everyone is healthy but $i$. The following inequality holds by \eqref{Eq:dividingSur2}.
\begin{align}
(S u(r))_X & \geq \prod_{i \in \mathbb{S}(X)}(S u(r))_{\hat{i}} \\
&=\prod_{i \in \mathbb{S}(X)} \prod_{j=1}^n (1-r_j) \mathbb{P}(Y_j=1|\hat{i}) + \mathbb{P}(Y_j=0|\hat{i}) \\ \label{Eq:r'eq}
&=\prod_{i \in \mathbb{S}(X)} \left( (1-r_i)(1-\delta)+\delta \right) \prod_{j \sim i} \left( (1-r_j)\beta + 1-\beta \right) \\
&=\prod_{i \in \mathbb{S}(X)}(1-(1-\delta)r_i)\prod_{j \sim i} (1- \beta r_j) \\
&= \prod_{i \in \mathbb{S}(X)} 1- \Phi_i(r) \\
&= u(\Phi(r))_X
\end{align}
\hfill$\blacksquare$

{\noindent \bfseries Proof of Theorem \ref{thm:SIRS_nontrivial}.}
Let's define the map $\Psi \colon [0,1]^{2n} \to \mathbb{R}^n$ as $\Psi=[\Psi_1,\dots,\Psi_n]^\top$ with
\begin{equation}
\Psi_i(P(t))=\Xi_i(P_I(t))-\omega(P_{R,i}(t),P_{I,i}(t)) .
\end{equation}
Zeros of $\Psi$ correspond to fixed points of the nonlinear map (Eq. \ref{nonlinear_I_new}).

Now we define sets $U_i$ and $U$ as follows:
\begin{equation}
U_i=\{ x_I\in[0,1]^n: \Psi_i(\begin{bmatrix}x_R\\x_I\end{bmatrix})\geq 0, 0_n\preceq x_R\preceq 1_n-x_I\} ,
\end{equation}
\begin{equation}
U=\bigcap\limits_{i=1}^n U_i .
\end{equation}
In plain words, $U$ is the set of ``infection situations'' from which the system becomes ``more infected'' or remains there.

From Lemma \ref{lm:v}, $\lambda_{\max}((1-\delta)I_n+\beta A)>1$ implies that there exists $v\succ0_n$ such that $(\beta A-\delta I_n)v\succ0_n$. On the other hand $\Psi(0_{2n})=0_n$ and the Jacobian of $\Psi$ at the origin is equal to $\begin{bmatrix}0_{n\times n} & \beta A-\delta I_n\end{bmatrix}_{n\times 2n}$. As a result, there exists a small $\epsilon>0$ such that $\Psi(\begin{bmatrix}\epsilon u\\ \epsilon v\end{bmatrix})=(\beta A-\delta I_n)v\epsilon$, which is $\succ 0_n$, and indicates that $\epsilon v \in U$.

We claim that if $x,y\in U$, then $\max(x,y)\triangleq(\max(x_1,y_1),\dots,\max(x_n,y_n))\in U$.
For all $i\in\{1,\dots,n\}$, $\exists\, a_i\in[0,1-x_i]\mbox{ s.t. } \Xi_i(x)-\omega(a_i,x_i)\geq0$, and $\exists\, b_i\in[0,1-y_i]\mbox{ s.t. } \Xi_i(y)-\omega(b_i,y_i)\geq0$.
\begin{align}
\Psi_i(\begin{bmatrix}c\\ \max(x,y)\end{bmatrix}) &=\Xi_i(\max(x,y))-\omega(c_i,\max(x_i,y_i)).
\end{align}
Without loss of generality assume $\max(x_i,y_i)=x_i$, then if we pick $c_i=a_i$, it follows that:
\begin{align}
\Psi_i(\begin{bmatrix}c\\ \max(x,y)\end{bmatrix}) &=\Xi_i(\max(x,y))-\omega(a_i,x_i)\\
&\geq \Xi_i(x)-\omega(a_i,x_i)\geq 0\label{random_ineq_1} .
\end{align}
Inequality \eqref{random_ineq_1} comes from Property (b). Now $\max(x,y)\in U_i$, and we can use the same argument for all $i$. Hence $\max(x,y)\in U$, and the claim is true.

It follows that there exists a unique maximal point $x^*\in U$ such that $x^*\succeq x$ for all $x\in U$. Moreover, since $\epsilon v\in U$, we can conclude that $x^*\succ 0_n$ (all elements of $x^*$ are positive).

Now we further claim that $\Psi_i(\begin{bmatrix}a\\x^*\end{bmatrix})=0$ for some $0_n\preceq a\preceq 1_n-x^*$ and $\forall i\in \{1,\dots,n\}$. Assume, by the way of contradiction, that $\Psi_i(\begin{bmatrix}a\\x^*\end{bmatrix})\neq0$ for all $0_n\preceq a\preceq 1_n-x^*$, which means $\Psi_i(\begin{bmatrix}a\\x^*\end{bmatrix})>0$. Since $\Psi_i(\begin{bmatrix}a\\x^*\end{bmatrix})=\Xi_i(x^*)-\omega(a_i,x_i^*)>0$ and $\omega(a_i,x_i^*)<\omega(a_i,z_i)$ for any $z_i>x_i^*$ (Property (e)), there exists $z_i>x_i^*$ such that
\begin{equation}\label{eq13}
\Xi_i(x^*)-\omega(a_i,z_i)\geq 0 .
\end{equation}
Now define $z=[z_1,\dots,z_n]^\top$ with $z_j=x_j \, \forall j\neq i$. For every $k\in\{1,\dots,n\}$ we have
$$\Psi_k(\begin{bmatrix}a\\z\end{bmatrix})=\Xi_k(z)-\omega(a_k,z_k)\geq \Xi_k(x^*)-\omega(a_k,z_k)\geq 0 ,$$
for some $0_n\preceq a\preceq 1_n-z$. The first inequality holds by Property (b). The second inequality holds by \eqref{eq13} for $k=i$, and by definition for $k\neq i$. It implies that $z\in U$. Since $z_i>x_i^*$, this contradicts the fact that $x^*$ is the maximal point of $U$. Hence $\Psi_i(\begin{bmatrix}a\\x^*\end{bmatrix})=0$ for some $0_n\preceq a\preceq 1_n-x^*$, and this is true for all $i\in\{1,\dots,n\}$. Thus far we have proved that there exists a nontrivial zero for $\Psi$.

We note that in order for a point $\begin{bmatrix}p_R^*\\p_I^*\end{bmatrix}$ to be a fixed point of the nonlinear map, it should satisfy Eq. \eqref{nonlinear_R_new}, i.e.
\begin{equation}\label{relation}
p_{R,i}^*=(1-\gamma)p_{R,i}^*+\delta p_{I,i}^* \implies p_{R,i}^*=\frac{\delta}{\gamma}p_{I,i}^* .
\end{equation}

For proving the uniqueness of nontrivial zero of $\Psi$, assume by contradiction that in addition to $x^*$, $y^*$ is another nontrivial zero. Therefore $y^*\in U$, and $\Psi(\begin{bmatrix}b\\y^*\end{bmatrix})=0_n$ for some $0_n\preceq b\preceq 1_n-y^*$.

We claim that $y^*$ is all-positive. Let us define $K_0=\{1\leq i\leq n : y_i^*=0\}$ and $K_+=\{1\leq i\leq n : y_i^*>0\}$. $K_0\cup K_+=\{1,\dots,n\}$. Assume that $K_0$ is not empty. Since $G$ is connected, there exist $k\in K_0$ and $j\in K_+$ such that they are neighbors.
\begin{equation}
\Psi_k(\begin{bmatrix}b\\y^*\end{bmatrix})=\Xi_k(y^*)-\omega(b_k,y_k^*)=\Xi_k(y^*)>0 .
\end{equation}
The second equality holds by Property (d) and due to $b_k=y_k^*=0$ (from Eq. \ref{relation}). The inequality comes from Property (b) ($k\in N_j$) and $y_j^*>0$. This contradicts $\Psi(\begin{bmatrix}b\\y^*\end{bmatrix})=0_n$, and implies that $K_0=\emptyset$, and therefore every element of $y^*$ is positive.

By Property (c) and from Lemma \ref{Lem:ccv}, we know for $s\leq 1$
$$\frac{\Xi_i(u+sv)-\Xi_i(u)}{s}\geq \frac{\Xi_i(u+v)-\Xi_i(u)}{1} .$$
By setting $u=0_n$ and $v=x^*$, and using Property (a), it follows that
\begin{equation}\label{nice}
\frac{\Xi_i(sx^*)}{s}\geq \Xi_i(x^*) .
\end{equation}

For $x^*$ and $y^*$ there exists $\alpha\in(0,1)$ such that $y^*\succeq\alpha x^*$ and $y_j^*=\alpha x_j^*$ for some $j\in\{1,\dots,n\}$.
\begin{align}
\Psi_j(\begin{bmatrix}b\\y^*\end{bmatrix}) &=\Xi_j(y^*)-\omega(b_j,\alpha x_j^*)\\
&\geq \Xi_j(\alpha x^*)-\omega(b_j,\alpha x_j^*)\label{rnd_eq_1}\\
&\geq \alpha\Xi_j(x^*)-\omega(b_j,\alpha x_j^*)\label{rnd_eq_2}\\
&> \alpha\Xi_j(x^*)-\alpha \omega(\frac{b_j}{\alpha},x_j^*)\label{rnd_eq_3}\\
&= \alpha\big(\Xi_j(x^*)- \omega(a_j,x_j^*)\big)=0\label{rnd_eq_4} .
\end{align}
Inequality \eqref{rnd_eq_1} holds by Property (b), \eqref{rnd_eq_2} follows from \eqref{nice}, \eqref{rnd_eq_3} holds by Property (f), and finally \eqref{rnd_eq_4} comes from \eqref{relation}. This contradicts that $\Psi_i(\begin{bmatrix}b\\y^*\end{bmatrix})=0$ for all $i$.

It concludes that $\begin{bmatrix}a\\x^*\end{bmatrix}$ is the unique nontrivial zero of $\Psi$, and hence the unique nontrivial fixed point of the system.

\hfill$\blacksquare$

{\noindent \bfseries Proof of Theorem \ref{thm_mixing}.}
First we use the linear programming technique to show for each $i\in\left\{n+1,n+2,\dots,2n\right\}$, we have $p_i(t+1) \leq (1-\delta)p_i(t) + \beta\sum\limits_{j\in N_i}p_j(t)$.
Let $f_i\in \mathbb{R}^{2n+1}$ represent the $i^{th}$ unit column vector.
For the sake of convenience, let us drop the time index $(t)$.
\begin{align}
\max_{\mu B=p^T, \mu\succeq 0} p_i(t+1) &= \max_{\mu B=p^T, \mu\succeq 0} \mu SBf_i\\
 &= \max_{\mu\succeq 0} \min_\lambda \mu SBf_i-(\mu B-p^T)\lambda\\ \label{minmax}
 &= \min_\lambda \max_{\mu\succeq 0} \mu(SBf_i-B\lambda) +p^T\lambda ,
\end{align}
where $\lambda\in\mathbb{R}^{2n+1}$ is a column vector. If any element of $(SBf_i-B\lambda)$ is strictly positive, it leads to $\max_{\mu\succeq 0} \mu(SBf_i-B\lambda)=+\infty$. Therefore:
\begin{equation}\label{leq}
SBf_i-B\lambda\preceq 0 .
\end{equation}
Now we proceed with further calculation of $SBf_i$ and $B\lambda$.
\begin{align}
&(SBf_i)_X = (SB)_{X,i} = \sum_{Y\in\left\{0,1,2\right\}^n} S_{X,Y}B_{Y,i}\\
 &= \begin{cases} \mathbb{P}\left( Y_i=2 \mid X\right\},& i\in\left\{1,2,\dots,n\right)\\ \mathbb{P}\left( Y_{i-n}=1 \mid X\right\},& i\in\left\{n+1,n+2,\dots,2n\right)\end{cases}\\
 &= \begin{cases}
0,& \text{if } i\in\left\{1,2,\dots,n\right\} \text{ and } X_i=0\\
\delta,& \text{if } i\in\left\{1,2,\dots,n\right\} \text{ and } X_i=1\\
1-\gamma,& \text{if } i\in\left\{1,2,\dots,n\right\} \text{ and } X_i=2\\
1-(1-\beta)^{m_{i-n}},& \text{if } i\in\left\{n+1,\dots,2n\right\} \text{ and } X_{i-n}=0\\
1-\delta,& \text{if } i\in\left\{n+1,\dots,2n\right\} \text{ and } X_{i-n}=1\\ \label{SBf_i}
0,& \text{if } i\in\left\{n+1,\dots,2n\right\} \text{ and } X_{i-n}=2\\
\end{cases}\\ \label{Bl}
&(B\lambda)_X =\lambda_0 + \sum\limits_{k=1}^n B_{X,k}\lambda_k + \sum\limits_{k=n+1}^{2n} B_{X,k}\lambda_k .
\end{align}
As mentioned earlier, we want to evaluate $p_i(t+1)$ only for $i\in\left\{n+1,n+2,\dots,2n\right\}$.
Define $\hat{i} \in \left\{0,1,2\right\}^n$ as the state where only $i$ is infected, and the rest are susceptible. Trying several $X$ in \eqref{leq} using \eqref{SBf_i} and \eqref{Bl} yields:
\begin{equation}
\begin{cases}
X=\bar{0}, &\lambda_0+0+0 \geq 0\\
X=\bar{2}, &\lambda_0+\sum\limits_{k=1}^n \lambda_k+0 \geq 0\\
X=\hat{i}, &\lambda_0+0+\lambda_{n+i}\geq 1-\delta\\
X=\hat{j}, j\in N_i, &\lambda_0+0+\lambda_{n+j}\geq \beta\\
X=\hat{j}, j\not\in N_i, &\lambda_0+0+\lambda_{n+j}\geq 0
\end{cases}
\end{equation}
Now we claim that $\lambda^*=[\lambda_0^*,\lambda_1^*,\dots,\lambda_{2n}^*]^T$ defined by the following values is in the feasible set:
\begin{equation}
\begin{cases}
\lambda_0^*=0\\
\lambda_1^*=\dots=\lambda_n^*=0\\
\lambda_{n+i}^*=1-\delta\\
\lambda_{n+j}=\beta \text{  for } j\in N_i\\
\lambda_{n+j}=0 \text{  for } j\not\in N_i
\end{cases}
\end{equation}
We verify the claim for all possible cases as the following. Assume $\left\vert{N_i\cap I(t)}\right\vert=m$ .
\begin{multline}
\text{For $X_i=0$ : }\mathbb{P}\left( Y_i=1 \mid X\right)=1-(1-\beta)^m \leq\\
m\beta = \lambda_0^* + \sum\limits_{k=1}^n B_{X,k}\lambda_k^* + \sum\limits_{k=n+1}^{2n} B_{X,k}\lambda_k^* .
\end{multline}
\begin{multline}
\text{For $X_i=1$ : }\mathbb{P}\left( Y_i=1 \mid X\right)=1-\delta \leq\\
1-\delta+m\beta = \lambda_0^* + \sum\limits_{k=1}^n B_{X,k}\lambda_k^* + \sum\limits_{k=n+1}^{2n} B_{X,k}\lambda_k^* .
\end{multline}
\begin{multline}
\text{For $X_i=2$ : }\mathbb{P}\left( Y_i=1 \mid X\right)=0 \leq\\
m\beta = \lambda_0^* + \sum\limits_{k=1}^n B_{X,k}\lambda_k^* + \sum\limits_{k=n+1}^{2n} B_{X,k}\lambda_k^* .
\end{multline}
It follows that $\lambda^*$ is in the feasible set. Back to the Eq. \eqref{minmax} we have:
\begin{align}
\max_{\mu B=p^T, \mu\succeq 0} p_i(t+1) &= \min_\lambda \max_{\mu\succeq 0} \mu(SBf_i-B\lambda) +p^T\lambda\notag\\
 &\leq p^T\lambda^* = (1-\delta)p_i +\beta\sum\limits_{j\in N_i}p_j ,\notag
\end{align}
which proves:
\begin{equation}
p_{I,i}(t+1) \leq (1-\delta)p_{I,i}(t) + \beta\sum\limits_{j\in N_i}p_{I,j}(t) .
\end{equation}
Moreover, we already know that $p_{R,i}(t+1)=(1-\gamma)p_{R,i}(t)+\delta p_{I,i}(t)$ (Eq. \ref{exact_R}), and all the equations can be expressed in a vector form, using $p_R=[p_{R,1}(t),p_{R,2}(t),\dots,p_{R,n}(t)]^T$ and $p_I=[p_{I,1}(t),p_{I,2}(t),\dots,p_{I,n}(t)]^T$:
\begin{align}
\begin{bmatrix}p_R\\p_I\end{bmatrix}(t+1)&\preceq\begin{bmatrix}
(1-\gamma)I_n & \delta I_n\\
0_n & (1-\delta)I_n+\beta A
\end{bmatrix} \begin{bmatrix}p_R\\p_I\end{bmatrix}(t)\\
&= M \begin{bmatrix}p_R\\p_I\end{bmatrix}(t). \notag
\end{align}

The mixing time as defined before in \eqref{Eq:mixingdefn} is
$$
t_{mix}(\epsilon)=\min\left\{t : \sup_\mu \|\mu S^t-\pi\|_{TV} \leq \epsilon\right\} ,
$$
and we have:
\begin{align}
\|\mu S^t-\pi\|_{TV} &= \frac{1}{2}\sum\limits_X \lvert(\mu S^t)_X - \pi_X\rvert\\
&= \frac{1}{2}\sum\limits_X \lvert(\mu S^t)_X - (e_{\bar{0}})_X\rvert\\
&= \frac{1}{2}\big( 1-(\mu S^t)_{\bar{0}}\big) + \frac{1}{2}\sum\limits_{X\neq \bar{0}} (\mu S^t)_X\\
&= \frac{1}{2}\big( 1-(\mu S^t)_{\bar{0}}\big) + \frac{1}{2}\big( 1-(\mu S^t)_{\bar{0}}\big)\\
&= 1-(\mu S^t)_{\bar{0}}\\
&= 1-\mu S^t e_{\bar{0}}^T\\
&\leq 1-e_{\bar{1}} S^t e_{\bar{0}}^T  .
\end{align}
Hence, for any $t< t_{mix}(\epsilon)$:
\begin{align}
\epsilon &< 1-\mathbb{P}\left( \substack{\text{all nodes are susceptible at time $t$} \mid \\\text{all nodes were infected at time $0$}}\right)\\
 &=\mathbb{P}\left( \substack{\text{some nodes are infected or recovered at time $t$} \mid \\\text{all nodes were infected at time $0$}} \right)\\
 &\leq \sum\limits_{i=1}^n (p_{I,i}(t)+p_{R,i}(t)) = 1_{2n}^T \begin{bmatrix}p_R\\p_I\end{bmatrix}(t)\\
 &\leq 1_{2n}^T M^t \begin{bmatrix}p_R\\p_I\end{bmatrix}(0)\\
 &\leq 1_{2n}^T M^t 1_{2n}\\
 &\leq \|1_{2n}\|^2 \|M\|^t\\
 &=2n\|M\|^t .
\end{align}
$\|M\|<1$ leads to the fact that $t< \frac{\log \frac{2n}{\epsilon}}{-\log \|M\|}$ for all $t< t_{mix}(\epsilon)$. Therefore $t_{mix}(\epsilon)\leq \frac{\log \frac{2n}{\epsilon}}{-\log \|M\|}$, which means the mixing time is $O(\log n)$.
\hfill$\blacksquare$

{\noindent \bfseries Proof of Proposition \ref{prop_id}.}
$M$ is in fact the Jacobian matrix of the nonlinear map, and its largest eigenvalue is less than $1$ if the largest eigenvalue of $(1-\delta)I_n+P_S^*\beta A)$ is less than $1$. It follows that the fixed point is locally stable under this condition, and the statement a is true.

Eq. \eqref{nonlinear_I_id} can be upper bounded as:
\begin{align}
P&_{I,i}(t+1) =(1-\delta)P_{I,i}(t)+\notag\\
&\Big(1-\prod_{j\in N_i} (1-\beta P_{I,j}(t))\Big)(1-P_{R,i}(t)-P_{I,i}(t))\notag\\
&\leq (1-\delta)P_{I,i}(t)+ \big(\beta\sum\limits_{j\in N_i}P_{I,j}\big)(1-P_{R,i}(t)-P_{I,i}(t))\label{upperbound_1_id}\\
&\leq (1-\delta)P_{I,i}(t)+ \beta\sum\limits_{j\in N_i}P_{I,j}\label{upperbound_2_id} ,
\end{align}
which implies the statement b.

We remark that from \eqref{upperbound_1_id} to \eqref{upperbound_2_id} it is not possible to show an upper bound of $(1-\delta)P_{I,i}(t)+ \big(\beta\sum\limits_{j\in N_i}P_{I,j}\big)(1-P_R^*)$ instead; as it requires $P_{R,i}(t)+P_{I,i}(t)\geq P_R^*$, which is equivalent to $P_{S,i}(t)\leq P_S^*$, that is not true in general. 
\hfill$\blacksquare$

{\noindent \bfseries Proof of Theorem \ref{thm:SIV_id_nontrivial}.}
In the same way as in Lemma \ref{lm:v}, when $\lambda_{\max}((1-\delta)I_n+ P_S^*\beta A)>1$, there exists $v\succ0_n$ such that $(\beta A-\frac{\delta}{P_S^*} I_n)v\succ0_n$.

On the other hand, $\Psi(\begin{bmatrix}P_R\\0_n\end{bmatrix})=0_n$ and the Jacobian of $\Psi$ at the main fixed point is equal to $\begin{bmatrix}0_{n\times n} & \beta A-\frac{\delta}{1-P_R^*} I_n\end{bmatrix}_{n\times 2n}$. As a result, there exists a small $\epsilon>0$ such that $\Psi(\begin{bmatrix}\epsilon u\\ \epsilon v\end{bmatrix})=(\beta A-\frac{\delta}{P_S^*} I_n)v\epsilon$, which is $\succ 0_n$, and indicates that $\epsilon v \in U$.

The rest of the proof is the same as that of \ref{thm:SIRS_nontrivial}, with the main difference that instead of Eq. \eqref{relation}, we have the following relation:
\begin{equation}
p_{R,i}^*=\frac{\theta}{\gamma+\theta}+\frac{\delta-\theta-\delta\theta}{\gamma+\theta}p_{I,i}^*
\end{equation}
\hfill$\blacksquare$

{\noindent \bfseries Proof of Theorem \ref{thm_mixing_id}.}
The proof is similar to that of Theorem \ref{thm_mixing}, except we have $(SBf_i)_X =$
$$
\begin{cases}
(1-\beta)^{m_i}\theta,& \text{if } i\in\left\{1,2,\dots,n\right\} \text{ and } X_i=0\\
\delta,& \text{if } i\in\left\{1,2,\dots,n\right\} \text{ and } X_i=1\\
1-\gamma,& \text{if } i\in\left\{1,2,\dots,n\right\} \text{ and } X_i=2\\
1-(1-\beta)^{m_{i-n}},& \text{if } i\in\left\{n+1,\dots,2n\right\} \text{ and } X_{i-n}=0\\
1-\delta,& \text{if } i\in\left\{n+1,\dots,2n\right\} \text{ and } X_{i-n}=1\\ \label{SBf_i_id}
0,& \text{if } i\in\left\{n+1,\dots,2n\right\} \text{ and } X_{i-n}=2\\
\end{cases} \label{Bl_id}
$$
Since we are interested to evaluate $p_i(t+1)$ only for $i\in\left\{n+1,n+2,\dots,2n\right\}$, and the corresponding terms in \eqref{SBf_i_id} (the lower three) do not depend on $\theta$, the equations for optimal Lagrange multipliers are the same as in Theorem \ref{thm_mixing}. It follows that
\begin{equation}
p_I(t+1)\preceq ((1-\delta)I_n+\beta A) p_I(t) .
\end{equation}

Now for any $t< t_{mix}(\epsilon)$:
\begin{align}
\epsilon &< \mathbb{P}\left( \substack{\text{some nodes are infected at time $t$} \mid \\\text{all nodes were infected at time $0$}} \right)\\
 &\leq \sum\limits_{i=1}^n p_{I,i}(t) = 1_{n}^T p_I(t)\\
 &\leq 1_{n}^T ((1-\delta)I_n+\beta A)^t p_I(0)\\
 &\leq 1_{n}^T ((1-\delta)I_n+\beta A)^t 1_{n}\\
 &\leq \|1_{n}\|^2 \|(1-\delta)I_n+\beta A\|^t\\
 &=n\|(1-\delta)I_n+\beta A\|^t .
\end{align}
$\|(1-\delta)I_n+\beta A\|<1$ leads to $t< \frac{\log \frac{n}{\epsilon}}{-\log \|(1-\delta)I_n+\beta A\|}$ for all $t< t_{mix}(\epsilon)$, and therefore $t_{mix}(\epsilon)\leq \frac{\log \frac{n}{\epsilon}}{-\log \|(1-\delta)I_n+\beta A\|}$.
\hfill$\blacksquare$

{\noindent \bfseries Proof of Proposition \ref{prop_vd}.}
The statement a is again clear since if the largest eigenvalue of $(1-\delta)I_n+(1-\theta) P_S^*\beta A$ is less than one, then the largest eigenvalue of $M$ is less than $1$, which means the norm of the Jacobian matrix is less than $1$.

The statement b also follows from upper bounding Eq. \eqref{nonlinear_I_vd} as
\begin{equation}
P_{I,i}(t+1) \leq (1-\delta)P_{I,i}(t)+ (1-\theta)\beta\sum\limits_{j\in N_i}P_{I,j}\label{upperbound_2_vd} .
\end{equation}
\hfill$\blacksquare$

{\noindent \bfseries Proof of Theorem \ref{thm_mixing_vd}.}
We use the same linear programming argument as in the proofs of Theorems \ref{thm_mixing} and \ref{thm_mixing_id}, and show that for each $i\in\left\{n+1,n+2,\dots,2n\right\}$, we have $p_i(t+1) \leq (1-\delta)p_i(t) + (1-\theta)\beta\sum\limits_{j\in N_i}p_j(t)$. The main difference is $(SBf_i)_X =$
$$
\begin{cases}
\theta,& \text{if } i\in\left\{1,2,\dots,n\right\} \text{ and } X_i=0\\
\delta,& \text{if } i\in\left\{1,2,\dots,n\right\} \text{ and } X_i=1\\
1-\gamma,& \text{if } i\in\left\{1,2,\dots,n\right\} \text{ and } X_i=2\\
\!\begin{aligned}(1-\theta)(1-(1\ \ \\-\beta)^{m_{i-n}}),\end{aligned}& \text{if } i\in\left\{n+1,\dots,2n\right\} \text{ and } X_{i-n}=0\\
1-\delta,& \text{if } i\in\left\{n+1,\dots,2n\right\} \text{ and } X_{i-n}=1\\ \label{SBf_i_vd}
0,& \text{if } i\in\left\{n+1,\dots,2n\right\} \text{ and } X_{i-n}=2\\
\end{cases}.
$$
It can be verified that the Lagrange multiplier vector $\lambda^*=[\lambda_0^*,\lambda_1^*,\dots,\lambda_{2n}^*]^T$ with the following values is in the feasible set:
\begin{equation}
\begin{cases}
\lambda_0^*=0\\
\lambda_1^*=\dots=\lambda_n^*=0\\
\lambda_{n+i}^*=1-\delta\\
\lambda_{n+j}=\beta(1-\theta) \text{  for } j\in N_i\\
\lambda_{n+j}=0 \text{  for } j\not\in N_i
\end{cases}
\end{equation}
and it leads to
\begin{equation}
p_I(t+1)\preceq ((1-\delta)I_n+\beta(1-\theta) A) p_I(t) .
\end{equation}
Under the condition that $\frac{\beta(1-\theta)\lambda_{\max}(A)}{\delta}<1$, by the same argument as in the proof of Theorem \ref{thm_mixing_id}, $t_{mix}(\epsilon)\leq \frac{\log \frac{n}{\epsilon}}{-\log \|(1-\delta)I_n+\beta(1-\theta) A\|}$.
\hfill$\blacksquare$



\ifCLASSOPTIONcompsoc
  \section*{Acknowledgments}
\else
  \section*{Acknowledgment}
\fi

This work was supported in part by the National Science Foundation under grants CNS-0932428, CCF-1018927, CCF-1423663 and CCF-1409204, by a grant from Qualcomm Inc., by NASA's Jet Propulsion Laboratory through the President and Director's Fund, and by King Abdullah University of Science and Technology.
The authors would like to thank Christos Thrampoulidis, Ehsan Abbasi, Ramya K. Vinayak, Matthew D. Thill, Wei Mao and Subhonmesh Bose for many insightful discussions on the subject.

\ifCLASSOPTIONcaptionsoff
  \newpage
\fi



\bibliographystyle{IEEEtran}
\bibliography{IEEEabrv,references}
\end{document}